\newtheorem{theorem}{Theorem}
\newtheorem{lemma}[theorem]{Lemma}
\newtheorem{invariant}{Invariant}
\newtheorem{definition}[theorem]{Definition}
\newcommand{\opt}{\mathrm{opt}}
\newcommand{\OPT}{\mathrm{OPT}}
\newcommand{\cred}{\mathrm{credits}}
\newcommand{\nc}{n_{\mathrm{comp}}}
\def\Iscr{\mathcal{I}}
\newcommand{\eps}{\varepsilon}
\renewcommand{\epsilon}{\varepsilon}
\newcommand{\apxFAP}{1.9973}
\def\DEBUG{true}
  \newcommand{\fabr}[1]{\todo[color=red!100!black!50]{F: #1}}
  \newcommand{\fabr}[1]{}
\begin{document}

\title{Breaching the 2-Approximation Barrier\\ for the Forest Augmentation Problem\thanks{The first author is partially supported by the SNF Excellence Grant 200020B\_182865/1 and by the SNF Grant 200021\_200731/1.
The third author is supported by the Swiss National Science Foundation grant 200021\_184622.
}}

\author[1]{Fabrizio Grandoni}
\author[2]{Afrouz Jabal Ameli}
\author[3]{Vera Traub}
\affil[1,2]{IDSIA, USI-SUPSI}
\affil[3]{ETH Zurich}

\date{}
\pagenumbering{gobble}
\maketitle
\begin{abstract}
\noindent The basic goal of survivable network design is to build cheap networks that guarantee the connectivity of certain pairs of nodes despite the failure of a few edges or nodes. A celebrated result by Jain [Combinatorica'01] provides a $2$-approximation for a wide class of these problems. However nothing better is known even for very basic special cases, raising the natural question whether any improved approximation factor is possible at all. 

In this paper we address one of the most basic problems in this family for which $2$ is still the best-known approximation factor, the Forest Augmentation Problem (FAP): given an undirected unweighted graph (that w.l.o.g. we can assume to be a forest) and a collection of extra edges (links), compute a minimum cardinality subset of links whose addition to the graph makes it $2$-edge-connected. Several better-than-$2$ approximation algorithms are known for the special case where the input graph is a tree, a.k.a. the Tree Augmentation Problem (TAP), see e.g. [Grandoni, Kalaitzis, Zenklusen - STOC'18; Cecchetto, Traub, Zenklusen - STOC'21] and references therein. 
Recently this was achieved also for the weighted version of TAP [Traub, Zenklusen - FOCS'21], and for the $k$-connectivity generalization of TAP [Byrka, Grandoni, Jabal-Ameli - STOC'20; Cecchetto, Traub, Zenklusen - STOC'21]. These results heavily exploit the fact that the input graph is connected, a condition that does not hold in FAP.

In this paper we breach the $2$-approximation barrier for FAP. Our result is based on two main ingredients. First, we describe a reduction to the Path Augmentation Problem (PAP), the special case of FAP where the input graph is a collection of disjoint paths. Our reduction is \emph{not} approximation preserving, however it is sufficiently accurate to improve on a factor $2$ approximation.
Second, we present a better-than-$2$ approximation algorithm for PAP, an open problem on its own. 
Here we exploit a novel notion of \emph{implicit credits} which might turn out to be helpful in future related work.

\end{abstract}

\newpage
\pagenumbering{arabic}

\setcounter{page}{1}

\section{Introduction}

Real networks are prone to failures. The basic goal of \emph{survivable network design} is to build ``cheap'' networks that provide connectivity between given pairs of nodes despite the failure of a few edges or nodes (we will next focus on edge faults). Most natural survivable network design problems are NP-hard, therefore it makes sense to study them in terms of approximation algorithms. A celebrated result by Jain \cite{J01} provides a $2$-approximation for a very wide family of such (edge-connectivity) problems. However, nothing better is known even for very basic special cases. Therefore a recent trend in the area is trying to breach the $2$ approximation barrier for interesting special cases of survivable network design. 
 
In this paper we focus on one of the most basic survivable network design problems for which the best-known approximation factor is still $2$, namely the Forest Augmentation Problem (FAP): given an undirected unweighted graph $(V,F)$ and a collection of extra edges $L\subseteq {{V}\choose{2}}$, called \emph{links}, find a minimum cardinality subset of links $S\subseteq L$ such that $(V,F\cup S)$ is $2$-edge-connected\footnote{We recall that a graph is $k$-edge connected if it remains connected even after removing an arbitrary subset of up to $k-1$ edges.}. 
Observe that one obtains an equivalent problem by contracting each $2$-edge-connected component of $(V,F)$ into a single node, hence leading to a forest graph (motivating the name FAP).
Therefore, we will assume in the following that $(V,F)$ is a forest.

FAP generalized two well-studied problems for which better-than-$2$ approximation algorithms are known, namely the Tree Augmentation Problem (TAP) \cite{A17,CTZ21,CN13,EFKN09,FGKS18,GKZ18,KKL04,KN16b,N03,N17,TZ21,TZ22} and the $2$-Edge-Connected Spanning Subgraph problem ($2$-ECSS) \cite{CSS01,CT00,GG12,HVV19,KV94,SV14}. Therefore a natural question is whether also FAP admits an approximation factor strictly below $2$. Indeed, this is explicitly posed as an open problem, e.g., in \cite{CTZ21,CCDZ20,CDGKN20}.
This question was open even in the special case of FAP where the input forest is a collection of disjoint paths, also known as the Path Augmentation Problem (PAP).

TAP is the special case of FAP where the forest consists of a single spanning tree. $2$-ECSS is the problem of computing a $2$-edge-connected subgraph of an input graph $G=(V,E)$ with the minimum number of edges. Hence $2$-ECSS can be interpreted as the special case of FAP where all trees in the forest are singleton nodes (i.e. $F=\emptyset$). One can define a natural weighted generalization $2$-WECSS of $2$-ECSS with edge weights, where the goal is to compute a $2$-edge-connected subgraph of minimum total weight. Then FAP is the special case of $2$-WECSS with edge weights $0$ (for $e\in F$) and $1$ (for $e\in L$). Therefore an improved approximation for FAP is a first step in the direction of a similar result for $2$-WECSS, a well-known open problem in the area.

In this paper we breach the $2$-approximation barrier for FAP, namely we obtain the following result.
\begin{theorem}\label{thr:main}
There is a polynomial-time deterministic $\apxFAP$-approximation algorithm for FAP. 
\end{theorem}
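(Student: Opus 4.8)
The plan is to prove Theorem~\ref{thr:main} through the two ingredients announced in the abstract: a (non–approximation-preserving but accurate) reduction from FAP to the Path Augmentation Problem (PAP), and a better-than-$2$ approximation algorithm for PAP, chained together with a careful choice of the internal parameters. I would first record the two lower bounds that drive all later accounting. After contracting the $2$-edge-connected components of $(V,F)$ we may assume $(V,F)$ is a forest; every leaf of every tree and every isolated vertex is a \emph{terminal} that must receive at least one chosen link, and each link has only two endpoints, so $\OPT$ is at least half the number of terminals. Likewise, making a graph with $c$ components $2$-edge-connected requires at least $c$ new edges (the trivial all-isolated-vertices case aside), so $\OPT$ is at least the number of forest components. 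A preprocessing step would in addition strip off any sub-instance on which a $2$-approximation already beats the target $\apxFAP$, so that on the residual instance both bounds are tight enough to be exploited.

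The plan for the reduction is to turn the forest into a disjoint union of paths while controlling the damage. Concretely, I would repeatedly \emph{split} each branch vertex of each tree --- replacing a degree-$d$ vertex by $d$ copies, one per incident forest edge, and letting each link originally incident to it attach to whichever copy is most convenient --- until $(V,F)$ has become a collection of vertex-disjoint paths, obtaining a PAP instance $I'$. Splitting a degree-$d$ vertex deletes $d-1$ ``connectivity edges'', so $\OPT(I')$ can exceed $\OPT(I)$, and the crux is to show this is affordable. The point is that we do not need $\OPT(I')\le (1+o(1))\OPT(I)$ --- which is false --- but only that a $\rho$-approximate solution for $I'$ maps back to a FAP solution for $I$ of size at most $f(\rho)\cdot\OPT(I)$ with $f(\rho)<2$ whenever $\rho<2$. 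For this I would apply the PAP algorithm on the long, genuinely path-like parts of $(V,F)$, where splitting costs essentially nothing, and fall back on Jain's $2$-approximation \cite{J01} on the remaining branch-heavy pieces, which are leaf-heavy and hence $\OPT$-heavy by the terminal bound; balancing ``how much structure is split and pre-paid for'' against ``how good the PAP approximation is'' yields the constant $\apxFAP$.

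For PAP itself the plan is to follow the relative-greedy / local-search framework developed for TAP (in the spirit of \cite{CTZ21}): start from a feasible solution and iteratively replace a sub-solution by a cheap ``component'' --- a small link structure that strictly decreases a carefully chosen potential --- terminating at a solution analyzed by a credit argument in which each path (and each partial component) is handed a fixed budget of credits that must pay for the links purchased. The obstacle, relative to TAP, is that a union of disjoint paths is disconnected, so one cannot root the instance and contract toward a single root; this is exactly where the \emph{implicit credits} of the abstract enter. The idea is to endow each path not only with its explicit credits but also with credits that are generated on the fly whenever a partial solution first merges two path-components --- credits extracted from the slack in the connectivity (as opposed to $2$-edge-connectivity) requirements, which are absent in TAP. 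Carrying these implicit credits through the exchange analysis should close the accounting strictly below $2$, giving a $(2-\alpha)$-approximation for PAP for an explicit $\alpha>0$; plugging $\rho=2-\alpha$ into $f$ gives Theorem~\ref{thr:main}.

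I expect the main difficulty to be the PAP algorithm, and specifically the design of the implicit-credit scheme so that the local-search / exchange argument provably terminates below the factor $2$ for the genuinely disconnected instance --- the analogous step for connected TAP is already the technical heart of \cite{CTZ21}, and here the disconnectedness removes the rooting trick that makes those arguments go through. The reduction's accounting (bounding the $\OPT$-blow-up of path-ification, choosing the split rule, and verifying $f(\rho)<2$) is delicate but, once the right split rule and charging scheme are fixed, is largely careful bookkeeping against the two lower bounds recorded above.
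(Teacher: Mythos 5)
Your proposal correctly identifies a reduction to PAP and a better-than-$2$ PAP algorithm as ingredients, and your vertex-splitting ``path-ification'' is indeed what the paper does to go from a forest to disjoint paths (Lemma~\ref{lem:reducing_to_pap}). However, there is a structural gap that would make the plan fail, and it is worth naming precisely.

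The path-ification step does not preserve approximability in the way your plan assumes. When you split every branch vertex, the number of new paths created (hence the blow-up in $\OPT$ of the PAP instance) is governed by $n_{\mathrm{leaf}} - 2\nc$, which the paper bounds by $2(\opt - \nc)$. Consequently, the PAP-based route gives a solution of size roughly $\rho\cdot\opt + K\cdot(\opt - \nc)$ for constants $\rho < 2$ and $K > 0$ --- a bound that is only better than $2\cdot\opt$ when $\nc/\opt$ is close to $1$. In the regime where $\nc$ is small relative to $\opt$ (a single bushy tree with many leaves, say), the $(\opt - \nc)$ term overwhelms everything and the PAP route is \emph{worse} than $2$. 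Your suggestion to ``fall back on Jain's $2$-approximation on the branch-heavy pieces'' does not rescue this: FAP has global $2$-edge-connectivity constraints that do not decompose along a partition of the forest into branch-heavy and path-like pieces, so you cannot run two different algorithms on disjoint pieces and take a union. What the paper actually does is run two \emph{entirely separate} algorithms on the \emph{whole} instance and keep the better output. The second algorithm (Lemma~\ref{lem:result_tap_algo}) is not PAP-based at all: it computes a $2$-approximate directed $2$-cover by matroid intersection, extracts from it a spanning tree $F\cup S_{\mathrm{tree}}$ and an up-link TAP solution of size at most $2\opt - \nc + 1$, and then runs the relative-greedy WTAP improvement of \cite{TZ21} (Lemma~\ref{lem:wtap}) to push that down to $\bigl(1+\ln(2-\nc/\opt)+\epsilon\bigr)\opt$. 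This term is what handles the small-$\nc$ regime, and the final constant $\apxFAP$ comes from balancing the two guarantees at $\alpha = \nc/\opt \approx 0.924$. Without this TAP-side algorithm the proof does not close.

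Two smaller mismatches. First, the paper's PAP algorithm is not a relative-greedy/local-search scheme in the CTZ style; it is a matching + bridge-covering + gluing scheme in the style of the MAP and $2$-ECSS literature (initial leaf-matching excluding bad links, then alternating-trail augmentation to make every component $2$-edge-connected, then good-cycle gluing), analyzed by a credit invariant. Second, the ``implicit credits'' in the paper are not slack extracted from merging two path-components; they are credits assigned to lonely vertices and vertices of simple components in proportion to their degree in $\OPT\cup F$ minus $2$, which depend on the unknown optimum and are needed because the matching-based lower bound on $\opt$ is insufficient by itself (Figure~\ref{fig:implicit_credits_needed}). Your description of the mechanism is too vague to substitute for this, and in particular nothing in your sketch identifies where an $\opt$-dependent quantity is charged against the algorithm's purchases, which is the crux of why the invariant can be maintained.
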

In Section \ref{sec:overview} we provide an overview of the main ideas behind the above result.

\subsection{Previous and Related Work}

TAP (hence FAP) is known to be APX-hard \cite{KKL04}, thus in particular it does not admit a PTAS unless $P=NP$. Several better-than-$2$ approximation algorithms are known for this problem \cite{EFKN09,GKZ18,KN16b,N03}, culminating with a $1.393$ approximation by Cecchetto, Traub and Zenklusen \cite{CTZ21}. 

Very recently the $2$-approximation barrier was breached for the natural weighted version WTAP of TAP (with link weights) by Traub and Zenklusen \cite{TZ21}, who presented a $1.694$ approximation (which they improved to a $(1.5+\epsilon)$-approximation in \cite{TZ22}). This solved one of the main open (and simplest weighted) problems in the area (preliminary results in this directions, among others, appeared in \cite{A17,CN13,FGKS18,GKZ18,N17}).

FAP and PAP belong to the family of \emph{connectivity augmentation} problems, where the goal is to increase the connectivity of an existing network by adding links. One important problem in this area is the $k$-Connectivity Augmentation Problem ($k$-CAP): given a $k$-edge-connected graph $G=(V,E)$ and a collection of links $L\subseteq {{V}\choose{2}}$, find a minimum cardinality subset of links $S\subseteq L$ such that $G'=(V,E\cup S)$ is $(k+1)$-edge-connected. Hence in particular TAP is the special case of $k$-CAP with $k=1$. Known approximation-preserving reductions \cite{DKL76} show that $k$-CAP reduces to the case $k=2$, a.k.a. the Cactus Augmentation Problem (CacAP). After some preliminary results on special cases \cite{GGJS19}, a better-than-$2$ approximation for CacAP was achieved recently by Byrka, Grandoni and Jabal-Ameli \cite{BGJ20} via a (non-black-box) reduction to the Steiner Tree problem \cite{BGRS13} (see also \cite{N20} for a black-box reduction to the same problem). This was improved to $1.393$ by Cecchetto et al. \cite{CTZ21} with a completely different approach, more similar in spirit to prior work on TAP.

Better-than-$2$ approximation algorithm are known for the Matching Augmentation Problem (MAP),i.e., the special case of FAP where the input forest is a matching \cite{CCDZ20,CDGKN20}. In more detail, Cheriyan, Dippel, Grandoni, Khan and Narayan \cite{CDGKN20} present a $7/4$ approximation for MAP. 
This was later improved to $5/3$ by Cheriyan, Cummings, Dippel and Zhu \cite{CCDZ20}. Finding a better-than-$2$ approximation for FAP is mentioned in \cite{CCDZ20,CDGKN20} as one of the main motivations to study MAP. 
We remark that this question was open even for the Path Augmentation Problem (PAP). The techniques in \cite{CCDZ20,CDGKN20} do not seem to extend even to paths of length $2$.

Khuller and Vishkin \cite{KV94} found the first better-than-$2$ (namely a $3/2$) approximation for $2$-ECSS. Cheriyan, Seb\H{o} and Szigeti \cite{CSS01} improved the approximation factor to $17/12$. The current best known approximation factor for $2$-ECSS is $4/3$, and this can be achieved with two rather different approaches \cite{HVV19,SV14}. Hunkenschr{\"{o}}der, Vempala and Vetta \cite{HVV19} use a credit invariant (vaguely) similar to the one used in this paper.
Seb{\H{o}} and Vygen \cite{SV14} instead exploit an ear decomposition of the input graph with special properties. 
The natural generalization $k$-ECSS of $2$-ECSS to $k$ connectivity was studied, among others, by \cite{CT00,GG12}. 

\section{Preliminaries}\label{sec:preliminaries}

In this section we first introduce some notation (Section~\ref{sec:notation}) and then recap some algorithms from previous work that we will build on.
Specifically, in Section~\ref{sec:wtap} we explain the results on WTAP from \cite{TZ21} that we will use in our algorithm, while in Section~\ref{sec:w2ecss} we describe a well-known 2-approximation algorithm for W2ECSS (and thus also for FAP).
In what follows all algorithms will be deterministic and we will therefore not mention this explicitly anymore.

\subsection{Notation}\label{sec:notation}

We use standard graph notation. In particular, given a graph $G=(V,E)$ and $\emptyset\neq R\subsetneq V$, by $\delta(R)\subseteq E$ we denote the edges with exactly one endpoint in $R$. If $G$ is directed, then $\delta^{-}(R)\subseteq E$ are the edges (or arcs) entering $R$. For any $F\subseteq E$, we let $\delta_F(R)=\delta(R)\cap F$ and $\delta^{-}_F(R)=\delta^{-}(R)\cap F$. 
For an edge weight function $w$ and $F\subseteq E$, $w(F):=\sum_{e\in F}w(e)$.

Let $(V,F,L)$ be a considered instance of FAP. W.l.o.g. we assume that $(V,F\cup L)$ is $2$-edge-connected (otherwise there is no feasible solution). By $\nc:=|V|-|F|$ we denote the number of connected components (or components for short) of the forest (in particular in a TAP instance $\nc=1$). By $\OPT\subseteq L$ we denote an optimal solution to this instance and by $\opt=|\OPT|$ its size. Notice that $\opt\geq \nc$.

We will call the elements of $L$ \emph{links}.
By \emph{edges} we will mean both links and edges of the input forest.

\subsection{Preliminaries on WTAP}\label{sec:wtap}

Given a WTAP instance $(V,F,L,w)$ and $\ell=\{a,b\}\in L$, we let $P_{\ell}$ be the (edge set of the) path in the tree $(V,F)$ between endpoints $a$ and $b$. We say that $\ell$ covers  the edges of $P_{\ell}$. Then a feasible solution to WTAP is a subset of links that covers all the edges. We say that $\ell'\in L$ is a shadow of $\ell\in L$ if $P_{\ell'}\subseteq P_{\ell}$. W.l.o.g. we can assume that all the possible shadows of a link $\ell$ are present in the input. Indeed, if this is not the case, we can add any missing shadow of $\ell$ with weight $w(\ell)$: any feasible solution for the new problem can be converted into a feasible solution for the original problem and vice versa. For similar reasons, if $\ell'$ is a shadow of $\ell$, we can assume that $w(\ell')\leq w(\ell)$.

Our algorithm for FAP will make use of recent insights on WTAP from \cite{TZ21}.
Thus, we briefly summarize the statements from \cite{TZ21} which we will need.
Consider a WTAP instance $(V,F,L,w)$, that we interpret as rooted at some node $r$. We define an \emph{up-link} as a link where one endpoint in an ancestor of the other endpoint in the tree $(V,F)$. 
The approximation algorithm for WTAP from \cite{TZ21}, a so-called relative greedy algorithm, first computes a 2-approximation $U\subseteq L$ that contains only up-links; it is well-known that this exists and is computable in polynomial time.
Then the main insight from \cite{TZ21} is the following lemma that allows for improving this $2$-approximate solution $U$ to a $(1+\ln(2)+\epsilon)$-approximation $\mathrm{APX}$ (for any constant $\epsilon > 0$).

\begin{lemma}[\cite{TZ21}]\label{lem:wtap} 
\footnote{This lemma is not stated explicitly in this form in \cite{TZ21} and thus we provide a detailed explanation of how this statement follows from \cite{TZ21} in the appendix.}
Let $(V,F,L,w)$ be a WTAP instance rooted at some node $r$ with optimal solution $\OPT$, and let $U\subseteq L$ be a feasible solution consisting only of up-links. 
For any $\epsilon > 0$, one can in polynomial time  construct a feasible WTAP solution $\mathrm{APX}$ with weight
\[
w(\mathrm{APX}) \leq \left(1+\ln\left(\frac{w(U)}{w(\OPT)}\right)+\epsilon\right)\cdot w(\OPT).
\] 
\end{lemma}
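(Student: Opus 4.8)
The statement is a "relative greedy" guarantee, so my plan is to implement the standard relative-greedy framework of Traub--Zenklusen and analyze it with the usual set-cover-style logarithmic potential. The algorithm maintains a feasible solution, initially $U$ (which consists only of up-links), and repeatedly tries to \emph{swap in} a carefully chosen substructure drawn from an optimal-like solution while \emph{dropping} the up-links it renders redundant, keeping the swap only if it does not increase the weight by more than a $(1+\epsilon')$ factor relative to the marginal gain. The key point one exploits is that, because the input tree is rooted and $U$ consists of up-links, the portion of $U$ covering any single tree path decomposes nicely: the links of $U$ covering the edges on a root-to-node path form a laminar/chain-like family, and hence one can always identify a cheap "local" replacement. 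The output solution is then $\mathrm{APX}$.

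The analysis I would carry out is the classical one bounding a greedy algorithm against a fractional relaxation. Let $k$ be the number of iterations and $w_0 = w(U) \ge w_1 \ge \cdots \ge w_k = w(\mathrm{APX})$ the weights of the intermediate solutions. The crucial inequality to establish is that at every step where the current solution still has weight exceeding roughly $w(\OPT)$, there exists an admissible swap that decreases the weight by at least a $(1-\Theta(\epsilon'))$-fraction of $\tfrac{w_i - w(\OPT)}{w(\OPT)} \cdot (\text{cost of the inserted piece})$ — i.e. the usual "there is a step with good density" claim, proved by averaging over the pieces of $\OPT$. Summing the resulting recurrence $w_{i+1} \le w_i - (1-\epsilon')\cdot(\text{density gain})$ and comparing with the integral $\int_{w(\OPT)}^{w(U)} \tfrac{dx}{x} = \ln\!\big(w(U)/w(\OPT)\big)$ yields $w(\mathrm{APX}) \le \big(1 + \ln(w(U)/w(\OPT)) + \epsilon\big)\, w(\OPT)$ after rescaling $\epsilon'$. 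Since all intermediate solutions are feasible WTAP solutions by construction, feasibility of $\mathrm{APX}$ is immediate, and polynomiality follows because each iteration does polynomial work and the weights decrease geometrically (or one can cap the number of iterations and argue the tail is negligible, absorbing it into $\epsilon$).

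The main obstacle — and the place where the structure from \cite{TZ21} genuinely enters — is proving the per-step density claim: one must show that for the \emph{right} choice of candidate substructures (not arbitrary links, but specific combinations reflecting the laminar structure of up-links on the tree), the best available swap has density at least the average density of the pieces of $\OPT$ weighted appropriately against the current solution. This is exactly where one needs that $U$ consists of up-links: it guarantees that dropping the up-links made redundant by inserting an $\OPT$-piece is itself a clean operation whose savings can be charged cleanly, without the "interleaving" pathologies that arise for general link sets. I would quote the relevant decomposition lemma from \cite{TZ21} and verify that the weighted, shadow-closed setting (recall we assumed all shadows are present with non-increasing weights) does not break the argument — indeed shadow-closedness is what lets us assume the $\OPT$-pieces used as swap candidates are available in $L$. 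Everything else is bookkeeping: tracking the $\epsilon$'s and confirming the number of iterations is polynomially bounded.
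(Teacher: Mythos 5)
Your approach is essentially the same as the paper's: run the relative greedy machinery of Traub--Zenklusen starting from the given up-link solution $U$ instead of the 2-approximate up-link solution that \cite{TZ21} would compute, and invoke the existing analysis (the logarithmic recurrence) to get the $1 + \ln(w(U)/w(\OPT)) + \epsilon$ bound. The paper does exactly this, but much more tersely: it simply points to steps 2--4 of Algorithm~1 and the proof of Theorem~6 in \cite{TZ21} and reads off the bound.

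There is, however, one concrete omission in your sketch that the paper flags explicitly: before handing $U$ to the relative greedy algorithm, one must preprocess it so that the tree paths $P_\ell$ for $\ell\in U$ are \emph{pairwise disjoint}. This is not automatic for an arbitrary feasible up-link solution --- up-links going up the same root path can nest or overlap heavily --- and the relative greedy analysis in \cite{TZ21} relies on this disjointness. The fix (attributed to \cite{CN13}, with a clean statement in Lemma~2.1 of \cite{TZ22}) is to replace some links of $U$ by suitable shadows; since shadows have weight no larger than the original link, this does not increase $w(U)$. You gesture at the right ingredients --- you note that up-links on a root path form a chain-like family, and that shadow-closedness is available --- but you never actually state that $U$ itself needs to be shadowed into a disjoint family, and your phrasing suggests you think the needed structure comes for free from $U$ being composed of up-links. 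It does not; this preprocessing step is what makes the subsequent ``drop the up-links made redundant by an $\OPT$-piece'' operation behave cleanly. Once that step is added, your sketch matches the paper's argument.
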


We will apply the above lemma to up-link solutions obtained in a different way.
While in general there might not exist a better-than-2 approximation for WTAP that consists only of up-links, in some cases we can obtain such very good up-link solutions, as we will explain in Section~\ref{sec:overview}.
In particular, we highlight that we do not use the $(1+\ln(2)+\epsilon)$-approximation algorithm for WTAP as a black-box, but crucially use the properties of the relative greedy algorithm that allow for improving arbitrary up-link solutions, as stated in Lemma~\ref{lem:wtap}.
For this reason, we do not use the recent improvements on \cite{TZ21} that lead to a $(1.5+\epsilon)$-approximation algorithm for WTAP \cite{TZ22}.

\subsection{A Simple 2-Approximation for 2-WECSS}\label{sec:w2ecss}
We next sketch how to obtain a $2$-approximation for a $2$-WECSS instance $(V,E,w)$  because we will need this algorithm from \cite{KV94} later. 
Let us replace each undirected edge $e=\{a,b\}$ with two oppositely directed arcs $(a,b)$ and $(b,a)$ with weight $w(e)$. 
We call $A$ this set of directed arcs and still use $w(\cdot)$ to denote their weight. 
Moreover, we fix an arbitrary vertex $r$ as the root. 
If we can find an arc set $D\subseteq A$ such that every set $\emptyset \neq R \subseteq V\setminus \{r\}$ has two entering arcs, then clearly the set of undirected edges corresponding to $D$ is a feasible solution to our $2$-WECSS instance.
We remark, that by Edmonds' disjoint branching theorem (see, e.g., Corollary~53.1c in \cite{Schrijver}), such arc sets $D$ are precisely those that contain two edge-disjoint spanning arborescences rooted at $r$.

Moreover, the cheapest such set $D$ satisfies the following claim:
\begin{lemma}\label{lem:2apx}
Given a $2$-WECSS instance $(V,E,w)$ with optimal weight $\opt$ and the corresponding directed arc set $A$,
we have 
$$
\min \Big\{ w(D) : D\subseteq A,\ |D\cap \delta^-(R)| \ge 2 \text{ for all } \emptyset \ne R \subseteq V\setminus \{r\} \Big\} \ \le 2 \cdot \opt.
$$
\end{lemma}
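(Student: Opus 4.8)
The plan is to exhibit an explicit feasible arc set for the minimization problem on the left-hand side whose weight is exactly $2\cdot\opt$; this immediately yields the claimed bound. Concretely, I would start from an optimal $2$-WECSS solution $E^\ast\subseteq E$, so that $(V,E^\ast)$ is $2$-edge-connected and $w(E^\ast)=\opt$, and then \emph{bidirect} it: let $A^\ast\subseteq A$ consist of the two antiparallel arcs $(a,b)$ and $(b,a)$ for every undirected edge $\{a,b\}\in E^\ast$. Since $A$ contains both arcs of every edge of $E\supseteq E^\ast$, we have $A^\ast\subseteq A$, and by construction $w(A^\ast)=2\,w(E^\ast)=2\opt$.

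The key observation is the correspondence between undirected cuts and directed in-cuts. Fix any $\emptyset\neq R\subseteq V\setminus\{r\}$. Each undirected edge $\{a,b\}\in\delta_{E^\ast}(R)$, say with $a\in R$ and $b\notin R$, contributes to $A^\ast$ exactly one arc entering $R$, namely $(b,a)$, while the other arc $(a,b)$ leaves $R$; conversely, every arc of $A^\ast$ entering $R$ arises in this way from a distinct edge of $\delta_{E^\ast}(R)$. Hence $|A^\ast\cap\delta^-(R)|=|\delta_{E^\ast}(R)|$. Since $(V,E^\ast)$ is $2$-edge-connected, $|\delta_{E^\ast}(R)|\geq 2$ for every nonempty proper vertex subset, and in particular for every $\emptyset\neq R\subseteq V\setminus\{r\}$. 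Therefore $A^\ast$ satisfies all the constraints of the minimization, which shows that the minimum is at most $w(A^\ast)=2\opt$.

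There is no genuinely hard step here; the only points requiring care are the bookkeeping that exactly one of the two antiparallel arcs of a cut edge enters a prescribed side of the cut (so that there is neither double counting nor undercounting in the identity $|A^\ast\cap\delta^-(R)|=|\delta_{E^\ast}(R)|$), and the fact that for this direction we only need \emph{feasibility} of $A^\ast$. Edmonds' disjoint branching theorem is not needed for the inequality of Lemma~\ref{lem:2apx}; it is used only in the converse direction, already noted before the lemma, to guarantee that any feasible $D$ yields an actual $2$-edge-connected subgraph of $(V,E)$.
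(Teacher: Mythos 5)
Your proof is correct and follows exactly the paper's approach: bidirect an optimal $2$-WECSS solution $E^\ast$ to obtain a feasible $D\subseteq A$ of weight $2\opt$. You simply spell out the cut-counting argument that the paper leaves implicit.
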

\begin{proof}
Indeed, by taking an optimal $2$-WECSS solution and replacing every edge by the two corresponding directed arcs we obtain a feasible set $D\subseteq A$.
\end{proof}
A $2$-approximation for $2$-WECSS (and hence for FAP) is implied from the above discussion and the following lemma.
\begin{lemma}\label{lem:matroid_intersection}
Given a directed graph $(V,A)$ with nonnegative arc-weights $w : A \to \mathbb{R}_{\ge 0}$ and a root vertex $r\in V$ we can find in polynomial time a minimum-weight set $D\subseteq A$ such that 
\[
 |D\cap \delta^-(R)|\ \ge\ 2 \quad \text{ for all } \emptyset \ne R \subseteq V\setminus \{r\}.
\]
\end{lemma}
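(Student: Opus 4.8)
The plan is to phrase this as a weighted matroid intersection problem. The key observation is that a set $D \subseteq A$ satisfies $|D \cap \delta^-(R)| \ge 2$ for all $\emptyset \ne R \subseteq V \setminus \{r\}$ if and only if $D$ contains two edge-disjoint spanning arborescences rooted at $r$ (this is exactly Edmonds' disjoint branching theorem, already cited in the excerpt as Corollary~53.1c in \cite{Schrijver}). So it suffices to find a minimum-weight arc set $D$ that decomposes into two edge-disjoint $r$-arborescences. Equivalently, since $w \ge 0$, we want a minimum-weight arc set that \emph{is} the union of two edge-disjoint $r$-arborescences (any superset only increases weight, so the optimum is attained by such a union; more precisely, a minimal feasible $D$ has exactly $2(|V|-1)$ arcs and is such a union).

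First I would recall the matroidal description of unions of $k$ edge-disjoint arborescences rooted at a common root. Consider the ground set $A$. Define $\mathcal{M}_1$ to be the matroid in which a set is independent iff, viewed as a subgraph, every vertex $v \ne r$ has in-degree at most $2$ and $r$ has in-degree $0$ — this is a partition matroid on $A$ (partitioning arcs by their head). Define $\mathcal{M}_2$ to be the matroid whose independent sets are those arc sets that are the union of two forests in the \emph{underlying undirected graph} — i.e., the $2$-fold union (matroid sum) of the graphic matroid of $(V, E)$, pulled back to $A$ via the map sending each arc to its underlying edge. Both $\mathcal{M}_1$ and $\mathcal{M}_2$ are matroids (a partition matroid, and a matroid union of graphic matroids, respectively), and membership/rank oracles for both are computable in polynomial time. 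A standard fact (see Schrijver, around the disjoint arborescences material) is that an arc set $D$ of size $2(|V|-1)$ is a common independent set of $\mathcal{M}_1$ and $\mathcal{M}_2$ if and only if it is the disjoint union of two spanning $r$-arborescences: the $\mathcal{M}_1$ constraint forces in-degree exactly $2$ at every non-root and $0$ at $r$; the $\mathcal{M}_2$ constraint of size $2(|V|-1)$ forces the underlying multigraph to decompose into two spanning trees; and combining these two decompositions (orienting each tree edge by the unique arc realizing it, using the in-degree bound to match them up) yields two arborescences.

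Next I would invoke the weighted matroid intersection theorem: for two matroids on a common ground set with polynomial-time independence oracles and arbitrary weights, one can compute in polynomial time a maximum-weight common independent set of each cardinality, in particular of cardinality $2(|V|-1)$. Run this with weights $-w(a)$ (or, to keep everything nonnegative, with weights $W - w(a)$ for $W$ a large constant, which does not change the optimum among sets of the fixed size $2(|V|-1)$) to obtain a minimum-$w$ common independent set $D$ of size exactly $2(|V|-1)$, provided one exists. If no common independent set of that size exists, then no feasible $D$ exists at all, and by assumption (the FAP instance $(V, F \cup L)$ is $2$-edge-connected) this case does not arise for our application; in general one can simply detect infeasibility. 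Finally I would argue that this $D$ is optimal for the stated problem: any feasible $D'$ contains, by Edmonds' theorem, a sub-arc-set $D'' \subseteq D'$ that is the disjoint union of two spanning $r$-arborescences, $D''$ is a common independent set of size $2(|V|-1)$, and $w(D'') \le w(D') $ since $w \ge 0$; hence $\min\{w(D') : D' \text{ feasible}\} = \min\{w(D'') : D'' \text{ a disjoint union of two spanning } r\text{-arborescences}\} = w(D)$, and the undirected edge set underlying $D$ (taken with multiplicity, or after removing redundant parallel arcs) is what we return.

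The main obstacle is getting the matroid-intersection dictionary exactly right: namely pinning down the correct second matroid so that common independent sets of the right size correspond \emph{precisely} to unions of two edge-disjoint $r$-arborescences, and cleanly handling the (benign) issues of parallel arcs arising from the two oppositely-directed copies of each undirected edge and of the weight-translation needed to stay within a nonnegative-weight, fixed-cardinality matroid intersection routine. None of these is deep — they are all standard — but they are where the writing has to be careful. An alternative, equally valid route that avoids matroid intersection entirely is to reduce directly to minimum-weight disjoint arborescences via a min-cost flow / submodular flow formulation of Edmonds' theorem; I would mention this as a remark but carry out the matroid-intersection argument as the primary proof since it is the most transparent.
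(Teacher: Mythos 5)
Your proof is correct and takes essentially the same approach as the paper: the paper disposes of this lemma by citing Theorem~53.10 and Corollary~53.1c of Schrijver's book, and your argument simply unfolds the standard matroid-intersection proof behind those citations (partition matroid on in-degrees intersected with the $2$-fold union of the graphic matroid, plus Edmonds' disjoint arborescence theorem to pass between the cut condition and arborescence packings). The two subtleties you flag — justifying that a common independent set of size $2(|V|-1)$ really decomposes into two $r$-arborescences, and the weight translation for fixed-cardinality matroid intersection — are exactly the points a careful write-up would have to nail down, but both are standard and your treatment is sound.
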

\begin{proof}
By Edmonds' disjoint branching theorem (Corollary~53.1c in \cite{Schrijver}), the desired sets $D$ are precisely those sets that contain two disjoint spanning arborescences rooted at $r$.
Hence, because the edge weights are non-negative, we can find a minimum-weight set $D\subseteq A$ as desired by computing a cheapest edge set that is the union of two disjoint arborescences. 
This can be done in polynomial time; see Theorem~53.10 in \cite{Schrijver}.
\end{proof}

\section{Overview of Our Approach}\label{sec:overview}

Our main result (Theorem \ref{thr:main}) follows by combining two different algorithms: the first one (Section~\ref{sec:few_components}) provides a good approximation for the case that $\opt$ is much larger than $\nc$, and the second one (Section \ref{sec:many_components}) provides a good approximation for the complementary case. In particular, we prove the following. 
\begin{restatable}{lemma}{tapalgo}\label{lem:result_tap_algo}
Let $\epsilon > 0$ be a constant.
Given an instance $(V,F,L)$ of FAP, we can compute in polynomial time a solution of size at most
\[
\nc + \left( 1 + \ln\Big(2-\frac{\nc}{\opt}\Big) + \epsilon\right) \cdot \opt.
\]
\end{restatable}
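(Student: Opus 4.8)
The plan is to reduce the FAP instance to a unit–weight Tree Augmentation instance on a spanning tree that contains the whole forest, and then to feed that instance to the relative greedy guarantee of Lemma~\ref{lem:wtap}, applied not to a generic up‑link solution but to one of size only about $2\,\opt-\nc$ that I extract from the directed machinery of Section~\ref{sec:w2ecss}. The gain over the black‑box $(1+\ln 2+\epsilon)$ bound will come exactly from this smaller up‑link solution, since $\ln(2-\nc/\opt)<\ln 2$.

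First I would run Lemmas~\ref{lem:2apx} and~\ref{lem:matroid_intersection} on the bidirected graph (cost $0$ on forest arcs, cost $1$ on link arcs), rooted at an arbitrary vertex $r$, to obtain an arc set $D$ with $w(D)\le 2\,\opt$ that has at least two arcs entering every nonempty $R\subseteq V\setminus\{r\}$; since cost‑$0$ arcs are free, we may assume $D$ contains both orientations of every forest edge. The image of $D$ in the graph obtained by contracting the forest components still has two arcs entering every nonempty set, so by Edmonds' branching theorem it contains a spanning branching of the contracted graph; lifting this branching, and inside each component orienting the forest edges away from $r$ (for the component of $r$) or away from the head of its unique incoming lifted arc (otherwise), yields a spanning branching $B_1\subseteq D$ rooted at $r$ whose underlying undirected tree $T$ contains all of $F$ and uses exactly $\nc-1$ link edges; call this link set $S_0$ and root $T$ at $r$. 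Now $\OPT$ is a feasible solution of the WTAP instance on $T$ with link set $L$ and unit weights: for every tree edge $e_v$ of $T$, $e_v$ is the only edge of $T$ leaving the subtree below $v$, and since $F\subseteq E(T)$ it is also the only forest edge that can lie in that cut, so $2$‑edge‑connectivity of $(V,F\cup\OPT)$ forces some link of $\OPT$ across it, i.e.\ covering $e_v$. Hence the WTAP optimum $\opt_{\mathrm W}$ on $T$ satisfies $\opt_{\mathrm W}\le\opt$.

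Next I would build the up‑link solution. Because $T$ is precisely the undirected tree of $B_1$, the only arc of $B_1$ entering the subtree below a vertex $v\neq r$ is the tree arc $(\mathrm{parent}_T(v),v)$; consequently any arc of $D\setminus B_1$ entering that subtree must be a \emph{link} arc (a forest arc entering it would have to be $e_v$ oriented inward, which lies in $B_1$), and at least one such arc exists since $D$ has $\ge 2$ entering arcs while $B_1$ has exactly one. For each $v\neq r$ select such a link arc $(a_v,b_v)\in D\setminus B_1$ and set $u_v:=\{b_v,\mathrm{lca}_T(a_v,b_v)\}$; this is a shadow of the link $\{a_v,b_v\}$, it is an up‑link, and it covers $e_v$ because $b_v$ lies in the subtree below $v$ while $\mathrm{lca}_T(a_v,b_v)$ is a proper ancestor of $v$. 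Then $U:=\{u_v:v\neq r\}$ covers $T$, consists of up‑links, and — since distinct $v$'s that chose the same link arc yield the same $u_v$ — has size at most the number of link arcs of $D\setminus B_1$, so $|U|\le w(D)-w(B_1)\le 2\,\opt-(\nc-1)$.

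Finally, apply Lemma~\ref{lem:wtap} to the WTAP instance on $T$ rooted at $r$ with the up‑link solution $U$ to obtain a WTAP solution $\mathrm{APX}$ with $|\mathrm{APX}|\le\bigl(1+\ln(|U|/\opt_{\mathrm W})+\epsilon\bigr)\,\opt_{\mathrm W}$, and output $S_0\cup\mathrm{APX}$ — feasible for FAP since $(V,F\cup S_0\cup\mathrm{APX})=(V,T\cup\mathrm{APX})$ is $2$‑edge‑connected — or $S_0\cup U$ in the easy case $|U|<\opt$. The stated bound is then a short computation: $t\mapsto t\,(1+\ln(|U|/t))$ is non‑decreasing for $t\le|U|$, so $\opt_{\mathrm W}\le\opt\le|U|$ gives $|\mathrm{APX}|\le \opt\,(1+\ln(|U|/\opt))+\epsilon\,\opt$, and plugging $|U|\le 2\opt-(\nc-1)$ together with concavity of $\ln$ (to charge the resulting additive constant against the ``$-1$'' in $|S_0|=\nc-1$) yields $|S_0\cup\mathrm{APX}|\le\nc+\bigl(1+\ln(2-\tfrac{\nc}{\opt})+\epsilon\bigr)\opt$. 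The step I expect to be the main obstacle is producing the branching $B_1$ with all three properties simultaneously — contained in the $2$‑approximate arc set $D$, compatible with the forest (so that $T\supseteq F$ and $\OPT$ stays feasible for the WTAP instance), and economical (only $\nc-1$ link arcs, so that the ``shadow of an entering link arc'' trick gives $|U|\le w(D)-w(B_1)$ instead of the usual factor‑$2$ blow‑up) — together with tracking the leftover additive constants precisely enough to land exactly on the stated expression.
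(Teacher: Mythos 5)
Your proposal is correct and follows essentially the same route as the paper: compute a minimum-weight arc set $D$ with two arcs entering every non-root set (Lemmas~\ref{lem:2apx} and~\ref{lem:matroid_intersection}), extract from it a spanning tree $T\supseteq F$ with $\nc-1$ link edges (Lemma~\ref{lem:cheap_arb}), turn the remaining link arcs of $D$ into up-links via the lowest-common-ancestor shadow so as to get a feasible up-link WTAP solution of size at most $2\opt-\nc+1$ (Lemma~\ref{lem:cheap_up_link_solution}), and invoke Lemma~\ref{lem:wtap}. The explicit branching $B_1$ and the spelled-out monotonicity of $t\mapsto t\bigl(1+\ln(|U|/t)\bigr)$ together with the $\ln(1+x)\le x$ absorption of the $-1$ are only cosmetic refinements of what the paper compresses into ``replacing $\nc-1$ by $\nc$ only weakens the bound.''
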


\begin{lemma}\label{lem:result_credit_algo}
Given an instance $(V,F,L)$ of FAP, we can compute in polynomial time a solution of size at most
\[
\tfrac{7}{4} \cdot \opt+ \tfrac{13}{4} \cdot \left(\opt -\nc\right).
\]
\end{lemma}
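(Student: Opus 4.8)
Looking at Lemma~\ref{lem:result_credit_algo}, I need a FAP algorithm that produces a solution of size at most $\tfrac{7}{4}\opt + \tfrac{13}{4}(\opt-\nc)$. This bound is good when $\opt$ is close to $\nc$ (i.e., few links needed per component), complementing Lemma~\ref{lem:result_tap_algo}. Let me sketch my approach.

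\medskip

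\textbf{Plan for the proof of Lemma~\ref{lem:result_credit_algo}.}

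The plan is to use the directed formulation from Section~\ref{sec:w2ecss} as the starting point, but to account for credits more carefully than in the naive $2$-approximation. First I would set up the instance: replace each forest edge and each link by two oppositely directed arcs, but crucially assign \emph{weight $0$} to arcs coming from forest edges $F$ and \emph{weight $1$} to arcs coming from links $L$, and fix a root $r$. By Lemma~\ref{lem:matroid_intersection} we can compute in polynomial time a minimum-weight arc set $D$ with $|D\cap\delta^-(R)|\ge 2$ for all $\emptyset\ne R\subseteq V\setminus\{r\}$; this $D$ consists of two edge-disjoint spanning arborescences, and by Lemma~\ref{lem:2apx} (applied to these $0/1$ weights) its weight is at most $2\opt$. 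Since forest arcs are free, the undirected link set $S_0$ underlying $D$ has $|S_0|\le w(D)\le 2\opt$. The issue is that $2\opt$ is not yet below the target: we need to save roughly $\tfrac{3}{4}\opt$ worth of links, which we will extract from the slack present when $\opt-\nc$ is small — namely, that a DFS-like or ear-decomposition structure on the two arborescences reveals many links that can be dropped or replaced.

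\medskip

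The key structural step is to convert the two edge-disjoint arborescences into an \emph{ear decomposition} of the graph $(V,F\cup S_0)$ in the style of Seb\H{o}--Vygen, and then run a credit/charging argument \`a la Hunkenschr\"oder--Vempala--Vetta \cite{HVV19} (as hinted in the excerpt). Concretely, I would: (i) root and orient the structure so that $(V,F)$ decomposes into $\nc$ trees, each of which must be ``lifted'' to $2$-edge-connectedness using the chosen links; (ii) give each chosen link a budget of $\tfrac{7}{4}$ credits and each component a budget of $\tfrac{13}{4}$ extra credits, so the total credit is exactly the claimed bound; (iii) process ears (paths of links closing cycles through the forest) bottom-up, maintaining the invariant that every already-processed $2$-edge-connected block carries enough leftover credit — at least $1$ per incident "pending" tree edge, say — to pay for merging with its parent block. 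Long ears are cheap per edge and release credit; short ears (the hard case, including single-link or two-link ears attaching a whole forest tree) are exactly where the $\tfrac{13}{4}$ per-component term is spent, because attaching a fresh forest component consumes one unit of "component credit" but the links used are $O(1)$.

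\medskip

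The main obstacle I anticipate is the short-ear / small-component analysis. When a forest tree is attached to the rest of the graph by only one or two links, a naive argument would charge $2$ links to it and gain nothing over the trivial $2$-approximation; the $\tfrac{7}{4}$ per-link coefficient only helps on \emph{average}, so one must amortize across a cluster of such small ears, moving credit between a component and its neighbors, and possibly performing local link-swaps (replacing two links by one when the forest path between their far endpoints already exists) to guarantee that every small component is either absorbed into a larger ear of good per-edge ratio or else paid for out of its own $\tfrac{13}{4}$ component credit. Making the invariant strong enough to survive these swaps while remaining weak enough to be maintainable is the delicate part; I would state it as a formal \texttt{invariant} (credit lower bounds on cut edges leaving each processed block) and verify it case-by-case over ear lengths $1,2,3,\ge 4$. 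Everything else — the initial $2\opt$ bound, the reduction to arborescences, the ear decomposition — is standard, so I expect the write-up to be dominated by this case analysis and by bookkeeping of where the coefficients $\tfrac{7}{4}$ and $\tfrac{13}{4}$ come from.
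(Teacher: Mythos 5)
Your write-up is a plan, not a proof: phrases like ``I would state\dots and verify it case-by-case'' and ``the delicate part'' mark the heart of the argument as not carried out, and without that case analysis there is no way to check that the coefficients $\tfrac{7}{4}$ and $\tfrac{13}{4}$ are actually achieved. Worse, the credit budget you propose has the wrong sign: you assign each forest component $\tfrac{13}{4}$ \emph{extra} credits, which would make the budget \emph{increase} with $\nc$, whereas the target $\tfrac{7}{4}\opt+\tfrac{13}{4}(\opt-\nc)=5\opt-\tfrac{13}{4}\nc$ \emph{decreases} as $\nc$ grows. So a scheme whose budget grows with the number of components cannot produce the stated bound; the whole difficulty of this regime (``many components'') is precisely that each additional component must be covered essentially for free.

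The paper's route is entirely different and worth contrasting with yours. It does not work with the two-arborescence solution at all for this lemma (that tool is reserved for the complementary regime, Lemma~\ref{lem:result_tap_algo}). Instead it first reduces FAP to PAP without isolated nodes (Lemma~\ref{lem:reducing_to_pap_noIsolated}), incurring the translation from $(\tfrac{7}{4},\tfrac{7}{4})$ to $(\tfrac{7}{4},\tfrac{13}{4})$; then it proves a $(\tfrac{7}{4},\tfrac{7}{4})$-approximation for PAP (Lemma~\ref{lem:result_pap_algo}) by starting from a maximum matching on path endpoints that \emph{excludes bad links}, running a bridge-covering phase via \emph{alternating trails}, and a gluing phase via \emph{good cycles}. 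The credit scheme is maintained as an invariant on the evolving partial solution $H=(V,F\cup S)$, not as a static budget. Crucially, part of the credit (rule \ref{item:lonely_credits}) is \emph{implicit}: it depends on the degrees of vertices in the unknown $\OPT\cup F$, which is what lets the analysis charge to $\opt$ even when no explicit lower bound (such as the unmatched-leaves bound) is strong enough. Your proposal has no analogue of bad links, of the matching lower bound (Lemma~\ref{lem:matching_bound}), or of implicit credits, and it is exactly the absence of these ingredients that makes the ``short ear / small component'' case you flagged intractable under your scheme. If you want to salvage an ear-decomposition route, you would at minimum need a lower bound on $\opt$ in terms of $\nc$ and structural quantities of the ear decomposition, and a credit invariant whose total is bounded by $\tfrac{7}{4}\opt+\tfrac{13}{4}(\opt-\nc)$ from the start (not per chosen link); none of that is present.
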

Lemma~\ref{lem:result_tap_algo} yields an approximation ratio ranging from $2$ to $1+\ln 2 + \epsilon < 1.7$ as the ratio $\frac{\nc}{\opt}$ decreases from $1$ to $0$\footnote{For $\nc=1$, our algorithm is indeed identical to the one in \cite{TZ21}.}. Hence, to improve on a $2$ approximation, it is sufficient to have such an approximation for the case that $\frac{\nc}{\opt}$ is close to $1$: this is achieved by the algorithm from Lemma \ref{lem:result_credit_algo}.  Theorem \ref{thr:main} follows easily.
\begin{proof}[Proof of Theorem \ref{thr:main}]
Consider the best solution among the ones returned by the algorithms from Lemmas \ref{lem:result_tap_algo} and \ref{lem:result_credit_algo}. Let $\alpha\in [0,1]$ be such that $\nc=\alpha \cdot \opt$. Then the approximation factor of this algorithm is at most $\min\{\alpha+1+\ln(2-\alpha),\ \tfrac{7}{4}+\tfrac{13}{4}(1-\alpha)\}+\eps$. The worst-case ratio is obtained by choosing $\alpha$ so that the two terms in the minimum are equal, implying the claim. 
(This is the case for $\alpha\approx 0.923925$.)
\end{proof}
We next sketch the basic ideas behind the above two lemmas. 

\subsection{Overview of the Algorithm for a Small Number of Connected Components}

First, we explain how we obtain Lemma~\ref{lem:result_tap_algo}, which yields a good approximation ratio if the  number $\nc$ of connected components of the forest is much smaller than $\opt$.
In this case, it seems natural to consider the simple reduction to TAP: complete the forest $(V,F)$ to a spanning tree by adding $\nc - 1$ links and then apply an approximation algorithm for TAP to obtain a 2-edge-connected graph.
Using a $\rho$-approximation algorithm for TAP this yields a FAP solution of size at most $\nc - 1 + \rho \cdot \opt$. At the moment, the best known approximation algorithm allows for choosing $\rho=1.393$ \cite{CTZ21}.
While this yields a better-than-2 approximation if $\nc$ is sufficiently smaller than $\opt$, i.e., if $\nc \le (2-\rho-\delta) \opt$ for some constant $\delta>0$, this guarantee is not good enough for our purposes.
In contrast, the statement of Lemma~\ref{lem:result_tap_algo} is much stronger in the sense that it yields an improvement over the approximaton ratio of $2$ as soon as $\nc \le (1-\delta) \opt$ for any arbitrary constant $\delta >0$.

Thus, let us consider the following, slightly more involved reduction to TAP.
Let $S\subseteq L$ be a 2-approximate solution for FAP.
Then we can complete the forest $(V,F)$ to a spanning tree $(V,T)$ using only edges from $S$.
This way, we obtain a tree $(V,T)$ and a solution $S\setminus T$ for the TAP problem of augmenting $(V,T)$ to a 2-edge-connected graph.
We observe that $|T\cap S|=\nc - 1$ and $|S\setminus T| = |S| - \nc + 1 \le 2 \cdot \opt - \nc + 1$.
In particular $S\setminus T$ can only be an optimal solution to the TAP instance if $\nc \approx \opt$ and $S$ is not already a better-than-2 approximation (in which case we are done).
Because $S$ is a 2-approximation for our FAP instance, we would immediately obtain a better-than-2 approximation if we could improve the TAP solution $S\setminus T$ by just a little bit, similar as it is done in Lemma~\ref{lem:wtap}.
Unfortunately, we cannot apply Lemma~\ref{lem:wtap} because $S\setminus T$ will in general consist not only of up-links.

In order to address this issue, we use the 2-approximation algorithm from Section~\ref{sec:w2ecss} not as a black-box, but exploit that the computed solution $S$ has stronger properties than just being a FAP solution that is not too expensive.
More precisely, we show that from the directed arc set $D$ computed in the algorithm from Section~\ref{sec:w2ecss} we can obtain a spanning tree $(V,T)$ with $F\subseteq T$ and a TAP solution consisting of at most $|S| - \nc + 1$ many up-links.
Then, applying Lemma~\ref{lem:wtap}, completes the proof of Lemma~\ref{lem:result_tap_algo}.
In order to obtain the up-link solution from the directed arc set $D$, we use that directed links can in a certain sense be interpreted as up-links, an observation first made in \cite{CTZ21}.
For details on how we prove Lemma~\ref{lem:result_tap_algo}, we refer to Section~\ref{sec:few_components}.

\subsection{Overview of the Algorithm for a Large Number of Connected Components}
In view of Lemma~\ref{lem:result_tap_algo}, it makes sense to design approximation algorithms for the case that $\frac{\nc}{\opt}$ is close to~$1$. This motivates the following definition. 
\begin{definition}
An algorithm for FAP is a $(\rho,K)$-approximation algorithm if it produces a solution of size at most \[ \rho\cdot \opt+K \cdot (\opt-\nc). \]
\end{definition}
If we can find a $(\rho,K)$-approximation algorithm for some $\rho < 2$ and any arbitrary constant $K\ge 0$, then together with Lemma~\ref{lem:result_tap_algo}, this implies a better-than-2 approximation for FAP.

We will show how to obtain a $(\tfrac{7}{4}, \tfrac{13}{4})$-approximation algorithm for FAP, which is precisely the statement of Lemma~\ref{lem:result_credit_algo}.
To achieve this we exploit a relatively simple reduction to PAP. 
In order to simplify the notation, we can further impose that the PAP instance has no isolated nodes, i.e., all connected components of the forest $(V,F)$ are paths with length at least $1$.

\begin{lemma}\label{lem:reducing_to_pap_noIsolated}
Given a polynomial time $(\rho,K)$-approximation algorithm for PAP without isolated nodes for some constants $\rho \ge 1$ and $K>0$, there is polynomial-time $(\rho, K + 2(\rho-1))$-approximation algorithm for FAP.
\end{lemma}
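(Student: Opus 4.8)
The statement reduces FAP to PAP-without-isolated-nodes at the cost of an additive $2(\rho-1)(\opt-\nc)$ term, so the plan is to describe a reduction that (i) turns an arbitrary FAP forest into one whose components are all paths and (ii) removes isolated nodes, while controlling how much $\opt$ and $\nc$ change along the way. Given a $(\rho,K)$-approximation algorithm for PAP without isolated nodes, I would first transform the input forest $(V,F)$ into a collection of disjoint paths. The natural move is, for each tree component, to pick a Hamiltonian path of that tree (a tree need not have one, so more precisely one decomposes each tree into paths, or contracts/charges appropriately); the key accounting fact to establish is that this can be done so that the number of path-components created is not much larger than $\nc$, and so that any PAP solution on the new instance lifts back to a FAP solution on $(V,F)$ of the same (or comparably bounded) size, while $\opt$ for the PAP instance is at most $\opt + O(\opt - \nc)$. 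The slack $\opt-\nc$ is exactly the ``budget'' one has: since $\opt \ge \nc$ always, and the whole point (via Lemma~\ref{lem:result_tap_algo}) is the regime $\nc \approx \opt$, additive terms in $(\opt-\nc)$ are affordable.

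Concretely, I expect the reduction to proceed in two stages matching the two hypotheses. Stage one handles the passage from forest to paths: replace each tree by a suitable spanning path-forest, tracking that the number of extra components introduced is at most (some constant times) $\opt - \nc$, and that links of the original instance can be reinterpreted (using shadows, as in Section~\ref{sec:wtap}) as links of the path instance so that feasibility is preserved in both directions. Stage two handles isolated nodes: an isolated node $v$ in a forest component must, in any feasible solution, have at least two links incident to it; one should either merge $v$ into a neighboring path using one such link ``for free'' or argue that isolated nodes can be eliminated by a local gadget. Each isolated node eliminated this way costs one extra link in the final FAP solution beyond what the PAP algorithm pays, and the number of such nodes is again bounded by $O(\opt - \nc)$ because each contributes (at least) one unit of $\opt$ above the component count. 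Summing the per-stage overheads should give precisely the $K + 2(\rho - 1)$ coefficient on $(\opt - \nc)$ in the statement — the factor $2(\rho-1)$ presumably arising because a component that gets ``split off'' or ``absorbed'' is charged against the $\rho\cdot\opt$ term of the PAP guarantee, and the difference between $\rho$ and the ideal coefficient $1$ is what leaks into the $(\opt-\nc)$ term, times $2$ for the two endpoints / two links involved.

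The main obstacle, I expect, is the bookkeeping that simultaneously (a) keeps the PAP instance's optimum close to $\opt$ (so that the multiplicative $\rho$ is applied to essentially the right quantity), (b) keeps the component count of the PAP instance close to $\nc$ (so the ``$-\nc$'' in $\rho\cdot\opt + K(\opt-\nc)$ is not spoiled), and (c) ensures the lifting map from PAP solutions back to FAP solutions is genuinely feasible for $(V,F)$, not just for the modified forest. In particular one must be careful that adding links to complete trees into paths does not create a situation where a PAP-feasible solution fails to $2$-edge-connect the original tree edges; here the shadow trick and the observation that $(V, F\cup L)$ is $2$-edge-connected (so feasible solutions exist) should be what saves the argument. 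Once these three invariants are pinned down, the final inequality
\[
|S_{\mathrm{FAP}}| \;\le\; \rho\cdot\opt + \bigl(K + 2(\rho-1)\bigr)\cdot(\opt - \nc)
\]
follows by substituting the PAP guarantee $\rho\cdot\opt' + K\cdot(\opt' - \nc')$ with $\opt' \le \opt + (\text{overhead})$ and $\nc' \le \nc + (\text{overhead})$ and collecting terms, which is the routine calculation I would defer to the full proof.
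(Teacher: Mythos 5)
Your high-level framing (split into ``kill isolated nodes'' and ``turn trees into paths,'' and charge the overhead to $\opt-\nc$) is the right skeleton, and the final algebra you sketch does collapse to the claimed bound once the right per-step overhead is established. But there are two genuine problems in the details.

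First, your treatment of isolated nodes is wrong. You claim ``the number of such nodes is again bounded by $O(\opt-\nc)$.'' This is false: take $F=\emptyset$ and $L$ a Hamiltonian cycle. Then every vertex is isolated, $\nc=|V|$, and $\opt=|V|=\nc$, so $\opt-\nc=0$ while the number of isolated nodes is $|V|$. There is no room in the budget for a per-isolated-node charge. The paper's handling is instead \emph{lossless}: each isolated node $v$ is replaced by two new nodes $v_1,v_2$ joined by a forest edge, and every link $\{u,v\}$ is duplicated into $\{u,v_1\}$ and $\{u,v_2\}$. This keeps both $\nc$ and the optimum unchanged (one checks a solution of the same size exists on the modified instance, using $2$-edge-connectivity of $(V,F\cup L)$ to argue feasibility), and a solution on the modified instance maps back without loss after a small repair step. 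So the $(\rho,K)$ guarantee is preserved exactly here, not merely up to an $O(\opt-\nc)$ slack.

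Second, your tree-to-path stage is too vague to verify, and it omits the step that actually makes the accounting close. The paper peels pendant paths off degree-$\ge 3$ vertices: for an edge $e=\{v,u\}$ with $\deg(v)\ge 3$ and the $u$-side component of $T\setminus\{e\}$ a path, it deletes $e$, adds a dummy node $w$, a forced dummy link $\{v,w\}$, and a dummy edge $\{w,u\}$. Each such step raises both $\opt'$ and $\nc'$ by (at most) one, and the dummy links are forced into every feasible PAP solution, so one subtracts $q$ (the number of steps) from the returned size when mapping back. That subtraction is what turns $\rho\cdot\opt' + K(\opt'-\nc')$ into $\rho\cdot\opt + K(\opt-\nc) + (\rho-1)q$; you never use a $-q$ correction, so your ``collect terms'' step doesn't actually land at $K+2(\rho-1)$. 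Finally, the bound on $q$ is $q=n_{\mathrm{leaf}}-2\nc \le 2\opt-2\nc$ (since every leaf of $(V,F)$ needs an $\OPT$-link), which is the precise form of the budget you gesture at. I'd also note that the ``shadow trick'' you invoke plays no role in this lemma's proof.
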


We prove Lemma~\ref{lem:reducing_to_pap_noIsolated} in the appendix (Section~\ref{sec:reductionPAP}).
We conclude that in order to prove Lemma~\ref{lem:result_credit_algo}, it suffices to show the following.

\begin{lemma}\label{lem:result_pap_algo}
There is a $(\tfrac{7}{4}, \tfrac{7}{4})$-approximation algorithm for PAP without isolated nodes.
\end{lemma}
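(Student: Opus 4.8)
The plan is to build a $(\tfrac74,\tfrac74)$-approximation for PAP without isolated nodes by combining a primal constructive step with a credit/charging argument of the kind used by Hunkenschr\"oder--Vempala--Vetta for $2$-ECSS, but adapted to the path setting via the \emph{implicit credits} mentioned in the introduction. Concretely, I would first normalize the instance: since each component of $(V,F)$ is a path of length at least $1$, orient every path and call its two ends \emph{leaves}; a feasible PAP solution $S$ must in particular pick, for each path $P$, at least one link leaving $P$ at each leaf (otherwise the leaf edge of $P$ is in no cycle), and the $2$-edge-connectivity requirement says every proper ``cut'' of the union graph is crossed at least twice. I would set up an LP relaxation (or just work combinatorially with $\opt$) and a potential function that assigns $\tfrac74$ credits to each path component and $\tfrac74$ credits to each link of $\OPT$ \emph{in excess of $\nc$}, i.e. a total budget of $\tfrac74\,\nc + \tfrac74(\opt-\nc) = \tfrac74\opt + \tfrac74(\opt-\nc) - \tfrac74(\opt-\nc)$\,---\,more precisely the target size $\tfrac74\opt + \tfrac74(\opt-\nc)$, so that each path component is ``worth'' $\tfrac74$ and each of the $\opt-\nc$ extra links is worth an additional $\tfrac74$.

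The main structural idea I would pursue is a reduction along a spanning structure: take a $2$-edge-connected feasible solution $(V,F\cup S)$, contract the paths, and obtain a $2$-edge-connected multigraph $H$ on $\nc$ ``super-nodes'' with $|S|$ links. Running an ear-decomposition / Seb\H{o}--Vygen style argument on $H$, or alternatively a DFS-tree argument, produces a spanning connected ``backbone'' of $\nc-1$ links plus a set of ``back-links'' that must be covered. The novelty is that, unlike $2$-ECSS, the contracted super-nodes are paths rather than points, so a single link incident to a path can be ``reused'' at both its orientations: this is exactly where the implicit credits come in\,---\,a link whose two path-endpoints are traversed coherently by the backbone path contributes covering power to two different cut families simultaneously, letting us pay for it once while crediting it twice. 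I would make this precise by a local exchange/charging step: partition $S$ (for our computed solution) and $\OPT$ into a backbone part and a cover part, show the backbone has size $\nc-1$, and charge the cover part against the $\tfrac74$-per-component budget together with the $\tfrac74$-per-excess-$\OPT$-link budget, using that every $\OPT$ link that is not a ``bridge-link'' of the optimum contributes to the $\opt-\nc$ surplus.

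The concrete algorithm I have in mind has two stages. Stage~1: compute the $2$-approximate directed arc set $D$ from Section~\ref{sec:w2ecss} (applied to the PAP instance), which guarantees two edge-disjoint arborescences; from $D$ extract a spanning tree $(V,T)$ with $F\subseteq T$ that uses exactly $\nc-1$ links plus the path edges, exactly as in the overview of Lemma~\ref{lem:result_tap_algo}. Stage~2: the residual problem of augmenting $(V,T)$ to $2$-edge-connected is a TAP instance on a tree whose non-trivial edges are the $\nc-1$ backbone links, and we have a feasible up-link solution of controlled size; but instead of invoking Lemma~\ref{lem:wtap} (which would only give a $\ln 2$-type bound), I would run a direct credit-based local-search / greedy-gluing procedure tailored to paths, maintaining the invariant that at every intermediate stage the number of links used plus the remaining ``uncovered credit'' is at most $\tfrac74\opt + \tfrac74(\opt-\nc)$. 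Each gluing step either covers a long segment of the backbone with one link (releasing enough credit), or identifies a ``small'' configuration (a path with a short pendant, analogous to the troublesome length-$2$ paths that blocked the MAP techniques) which we handle with an explicit finite case analysis.

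The hard part, which I expect to dominate the write-up, is the case analysis for these small configurations: paths of length $1$ or $2$, and ``clusters'' of a few mutually linked short paths, where the $\tfrac74$ bound is tight and one cannot afford to waste even a single link. Here the implicit-credit bookkeeping must be set up delicately so that a link serving two short paths is counted with its full double value, and so that the $\tfrac{13}{4}$ slack that eventually appears (after the reduction of Lemma~\ref{lem:reducing_to_pap_noIsolated} adds $2(\rho-1)=\tfrac32$, giving $K=\tfrac74+\tfrac32=\tfrac{13}{4}$) is genuinely enough. I would organize this as a sequence of \emph{discharging rules} on a conflict structure among short paths, prove that after discharging every path carries nonnegative residual credit, and then a global count gives $|S|\le \tfrac74\opt+\tfrac74(\opt-\nc)$. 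Everything else\,---\,the extraction of the backbone from $D$, the feasibility of the final solution, polynomiality\,---\,is routine given the tools already recalled in Section~\ref{sec:preliminaries}.
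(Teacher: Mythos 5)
Your proposal takes a fundamentally different route from the paper, and that route has a structural obstruction that prevents it from reaching the $(\tfrac74,\tfrac74)$ bound.

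Your two-stage plan begins by extracting a spanning tree $(V,F\cup S_{\mathrm{tree}})$ with $|S_{\mathrm{tree}}|=\nc-1$ from the arc set $D$, then augments this tree to $2$-edge-connectivity. But this is precisely the algorithm of Lemma~\ref{lem:result_tap_algo}, and it cannot yield $(\tfrac74,\tfrac74)$. The spanning tree already spends $\nc-1$ links, so when $\opt=\nc$ the remaining budget is $\tfrac74\opt+\tfrac74(\opt-\nc)-(\nc-1)=\tfrac34\opt+1$ for the TAP augmentation step. The TAP optimum on the chosen tree, however, can be as large as $\Theta(\opt)$ (e.g.\ if $\OPT$ is a Hamiltonian cycle through the paths using $\nc$ links but is disjoint from $S_{\mathrm{tree}}$), and no algorithm can return fewer links than the TAP optimum. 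No amount of clever discharging inside Stage~2 fixes this, because the budget shortfall is already incurred in Stage~1. This is exactly why the regime $\nc\approx\opt$ needs a \emph{different} algorithm, which is the whole point of Lemma~\ref{lem:result_pap_algo}; your proposal instead tries to reuse the same reduction that the paper explicitly flags as inadequate in this regime.

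The paper's actual algorithm does not build a spanning tree. It initializes $S$ as a maximum matching $M$ on the leaves that avoids \emph{bad links} (links joining the two ends of a single path), proves a key lower bound relating unmatched leaves to $4(\opt-\nc)$ (Lemma~\ref{lem:matching_bound}), then runs a bridge-covering phase based on \emph{alternating trails} (which both add and delete links, via symmetric difference) and a gluing phase based on \emph{good cycles} (which likewise add and delete links to merge simple components). The credit scheme is fully specified: leaves get $1$ credit, bridges get $1-\eps$, each connected component gets $1$ or $2$ or $2-2\eps$, each 2EC-block gets $1$, and -- this is the implicit part -- a vertex $v$ that is lonely or in a simple component gets $\tfrac12(|\delta_{\OPT\cup F}(v)|-2)$ credits. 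The key combinatorial lemmas (Lemma~\ref{lem:opt_credits_lower_bound} and Lemma~\ref{lemma:GoodCycleOrBadVertex}) then show that when the local structure of $H$ is unfavorable, $\OPT$ must have high degree somewhere on the affected part, releasing implicit credits. Your proposal gestures at an invariant of the same flavor and at discharging rules, but leaves all of the load-bearing content -- what the rules are, why they suffice, how to cover the tight cases -- unspecified, while committing to a first step that cannot work.

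\end{document}
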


In the rest of this section we sketch the proof of Lemma \ref{lem:result_pap_algo}. The basic algorithm is analogous to known algorithms for MAP \cite{CCDZ20,CDGKN20} and 2-ECSS \cite{HVV19}. 
We start by building (in polynomial time) an infeasible partial solution $S$. 
Then we gradually modify $S$ by adding (and sometimes removing) links until we obtain a feasible solution.

In \cite{CCDZ20,CDGKN20,HVV19} the initial solution is a 2-edge-cover\footnote{We recall that a $2$-edge cover of a graph is a subgraph where each node has degree at least $2$.} with the minimum number of links. 
The number of links in such a 2-edge-cover provides a lower bound on $\opt$.
A cheapest 2-edge cover can be obtained by first computing a maximum cardinality matching (in $L$) on the leaves of the forest $(V,F)$ and then adding an arbitrary incident link for every unmatched leaf.
Instead of using a cheapest 2-edge-cover as a starting solution, we use such a maximum matching $M\subseteq L$ between the leaves of the input paths, but we exclude links that match the two endpoints of a single path in $(V,F)$ (\emph{bad links}).
Intuitively, using such links is bad because they do not help to connect different connected components of the forest.
For this reason, also the optimal solution can use these bad links only if $\opt > \nc$, which we exploit to show that if many leaves remain unmatched, $\opt$ must be much larger than $\nc$.
(See Lemma~\ref{lem:matching_bound} for the precise statement.)
Working with the matching $M$ rather than a 2-edge-cover is useful because it allows us to exclude bad links (which we could not do in a 2-edge-cover), but also because it simplifies the later parts of the proof.\footnote{Alternatively, we could have worked with a weighted version of the 2-edge-cover where we make bad links more expensive, but working with the matching is overall simpler.}

In order to upper bound the size of the final solution we use a \emph{credit assignment scheme} similarly to \cite{CCDZ20,CDGKN20,HVV19}. The basic idea is to assign (nonnegative fractional) credits to certain parts of the current graph $H=(V,F\cup S)$ (like links, nodes, components, $2$-edge connected components etc.).
 Let $\cred(H)$ be the total number of credits assigned to $H$. We show that the initial graph $H=(V,F\cup S)$ with $S=M$, and every intermediate graph $H=(V,F\cup S)$ satisfies the following invariant:
\begin{invariant}\label{invariant:credits}
\begin{equation*}
\cred(H) + |S| \ \le\ \tfrac{7}{4}\opt +  \tfrac{7}{4} \left(\opt -\nc\right).
\end{equation*}
\end{invariant}
At the end of our algorithm, $S$ is a feasible solution and thus
Lemma \ref{lem:result_pap_algo} follows immediately because $\cred(H)\geq 0$. 

Our credit assignment scheme however critically deviates from prior work in the following sense: typically a credit invariant is \emph{explicit} meaning that, given $H$, one can compute in polynomial time the credits assigned to each part of $H$. We rather use an \emph{implicit} credit invariant where part of the credits are assigned based on properties of the (unknown) optimal solution $\OPT$. Our algorithm is able to work despite the lack of knowledge about the precise number of credits available. In prior related work the role of implicit credits is played by complicated preprocessing steps and more complex credit invariants and case analysis. We believe that implicit credits can be used to simplify and/or strengthen related results in the literature, and might be useful to address generalizations and variants of FAP in future work. 

Similar to prior work on MAP \cite{CCDZ20,CDGKN20},
our algorithm consists of two phases, called \emph{bridge-covering} and \emph{gluing}.
At the end of the bridge-covering phase, every connected component of $H=(V,F\cup S)$ will be 2-edge-connected.
In other words, $H$ is bridgeless, i.e., none of its edges is a bridge\footnote{A bridge is an edge whose removal would increase the number of connected components.}.
During the gluing phase, we maintain the property that $H$ is bridgeless and at the end of the gluing phase $H$ will be 2-edge-connected.
While this high-level structure is similar to prior work, we also introduce new concepts in this part, making the algorithm simpler and avoiding some issues that were in prior work addressed by rather complicated preprocessing steps.

For the bridge covering step we introduce the new concept of \emph{alternating trails}, which we define in Section~\ref{sec:bridge_covering}.
Using this concept, our algorithm for bridge covering can be simply stated as repeatedly doing the following until $H$ is bridgeless.
Consider the graph resulting from $H$ by contracting each 2-edge-connected component.
This graph is a forest and its edges are precisely the bridges of $H$.
Pick an arbitrary leaf of this forest, find an alternating trail that starts at this leaf and `covers as many bridges as possible'. 
Then augment $H$ along this trail.
For a precise description of the algorithm and definitions of \emph{alternating trails} and \emph{augmentation} we refer to Section~\ref{sec:bridge_covering}.
Note that it is crucial that when augmenting along alternating trails we do not only add links to $S$ in the bridge covering step, but also remove some links. 
Indeed, this is necessary to obtain a better-than-2 approximation even when $\tfrac{\nc}{\opt}$ is arbitrarily close to $1$ as shown in Figure~\ref{fig:removing_needed_bridge_covering}.

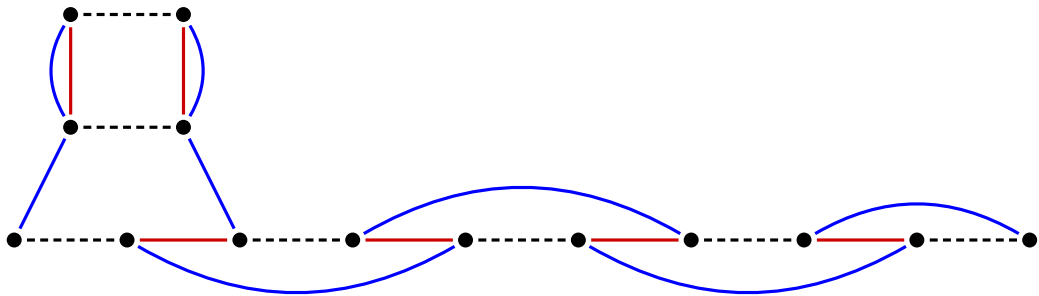
\begin{figure}[h]
\begin{center}
\begin{tikzpicture}[scale=1.5]

\tikzset{
vertex/.style={fill=black,circle,minimum size=0pt, inner sep=2pt, outer sep=2pt}
}
\tikzset{
block/.style={draw=black, very thick, circle,minimum size=0pt, inner sep=2pt, outer sep=2pt}
}

\begin{scope}[every node/.style={vertex}]
\node (1) at (1,1) {};
\node (2) at (2,1) {};
\node (3) at (3,1) {};
\node (4) at (4,1) {};
\node (5) at (5,1) {};
\node (6) at (6,1) {};
\node (7) at (7,1) {};
\node (8) at (8,1) {};
\node (9) at (9,1) {};
\node (10) at (10,1) {};

\node (a) at (1.5,2) {};
\node (b) at (2.5,2) {};
\node (c) at (1.5,3) {};
\node (d) at (2.5,3) {};
\end{scope}

\begin{scope}[blue, very thick]
\draw (1) -- (a);
\draw (b) -- (3);
\draw[bend right] (2) to (5);
\draw[bend right] (6) to (9);
\draw [bend left] (4) to (7);
\draw [bend left] (8) to (10);
\draw[bend left] (a) to (c);
\draw[bend left] (d) to (b);
\end{scope}

\begin{scope}[red!80!black, very thick]
\draw (2) -- (3);
\draw (4) -- (5);
\draw (6) -- (7);
\draw (8) -- (9);
\draw (a) -- (c);
\draw (b) -- (d);
\end{scope}

\begin{scope}[very thick, densely dashed]
\draw (1) -- (2);
\draw (3) -- (4);
\draw (5) -- (6);
\draw (7) -- (8);
\draw (9) -- (10);
\draw (a) -- (b);
\draw (c) -- (d);
\end{scope}

\end{tikzpicture}
\end{center}
\caption{
Example showing that even for $\nc\approx\opt$ we cannot ensure $|S|\le \rho \cdot \opt$ in the bridge covering step for any $\rho < 2$ if we never remove any links from $S$.
Solid edges are links from $L$, dashed edges are edges of the forest $F$.
The red links are a possible matching $M$ chosen in the initialization step of our algorithm.
The blue links are an optimal solution $\OPT$.
If the red matching $M$ is chosen, we need to add all but two links from $\OPT$ to $S$ in order to obtain a graph $(V,F\cup S)$ in which all connected components are 2-edge-connected.
For a large enough number of vertices, i.e., making the path to the right long enough, this shows that we cannot obtain a better-than-2 approximation without removing some links in the bridge covering step.
\label{fig:removing_needed_bridge_covering}
}
\end{figure}

The final part of our algorithm is the gluing step in which we maintain a bridgeless graph $H$ and transform it into a 2-edge-connected graph.
Our algorithm will repeatedly find a cycle $Q$ in the graph resulting from $(V,F\cup L)$ by contracting each 2-edge-connected component of $H$ and add the links from $Q$ to our current partial solution $S$ (and thus to $H=(V, F\cup S))$.
However, by adding only links and never removing any link in the gluing step it is impossible to obtain a better-than-2 approximation, even for $\opt=\nc$, as shown in Figure~\ref{fig:removing_needed_gluing}.
Therefore, in some cases where we need this to achieve the desired approximation ratio, our algorithm will exploit what we call a \emph{good cycle} (defined in Section~\ref{sec:gluing}) that allows us to add some links to $S$, but also to remove some links from $S$ which become redundant.

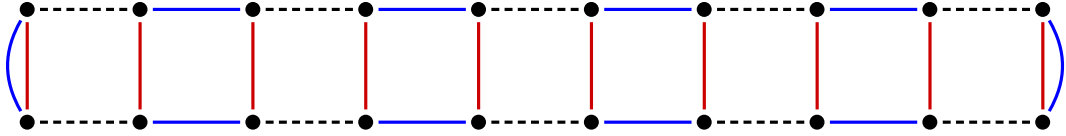
\begin{figure}[h]
\begin{center}
\begin{tikzpicture}[scale=1.5]

\tikzset{
vertex/.style={fill=black,circle,minimum size=0pt, inner sep=2pt, outer sep=2pt}
}
\tikzset{
block/.style={draw=black, very thick, circle,minimum size=0pt, inner sep=2pt, outer sep=2pt}
}

\begin{scope}[every node/.style={vertex}]
\node (1a) at (1,1) {};
\node (2a) at (2,1) {};
\node (3a) at (3,1) {};
\node (4a) at (4,1) {};
\node (5a) at (5,1) {};
\node (6a) at (6,1) {};
\node (7a) at (7,1) {};
\node (8a) at (8,1) {};
\node (9a) at (9,1) {};
\node (10a) at (10,1) {};

\node (1b) at (1,0) {};
\node (2b) at (2,0) {};
\node (3b) at (3,0) {};
\node (4b) at (4,0) {};
\node (5b) at (5,0) {};
\node (6b) at (6,0) {};
\node (7b) at (7,0) {};
\node (8b) at (8,0) {};
\node (9b) at (9,0) {};
\node (10b) at (10,0) {};
\end{scope}

\begin{scope}[blue, very thick]
\draw (2a) -- (3a);
\draw (4a) -- (5a);
\draw (6a) -- (7a);
\draw (8a) -- (9a);

\draw (2b) -- (3b);
\draw (4b) -- (5b);
\draw (6b) -- (7b);
\draw (8b) -- (9b);

\draw[bend left] (10a) to (10b);
\draw[bend right] (1a) to (1b);
\end{scope}

\begin{scope}[red!80!black, very thick]
\draw (1a) -- (1b);
\draw (2a) -- (2b);
\draw (3a) -- (3b);
\draw (4a) -- (4b);
\draw (5a) -- (5b);
\draw (6a) -- (6b);
\draw (7a) -- (7b);
\draw (8a) -- (8b);
\draw (9a) -- (9b);
\draw (10a) -- (10b);
\end{scope}

\begin{scope}[very thick, densely dashed]
\draw (1a) -- (2a);
\draw (3a) -- (4a);
\draw (5a) -- (6a);
\draw (7a) -- (8a);
\draw (9a) -- (10a);

\draw (1b) -- (2b);
\draw (3b) -- (4b);
\draw (5b) -- (6b);
\draw (7b) -- (8b);
\draw (9b) -- (10b);
\end{scope}

\end{tikzpicture}
\end{center}
\caption{
Example showing that we cannot obtain a better-than-2 approximation if we never remove any links in the gluing step.
Solid edges are links from $L$, dashed edges are edges of the forest $F$.
The red links are a possible matching $M$ chosen in the initialization step of our algorithm.
If this matching is chosen, all connected components of $(V,F\cup M)$ are 2-edge-connected and hence the bridge covering step does nothing.
In order to obtain any feasible solution, we need to include all but two of the edges from the optimal solution, which is shown in blue.
For a large enough number of vertices, this shows that we cannot obtain a better-than-2 approximation without removing some links in the gluing step and this holds even if $\nc=\opt$.
\label{fig:removing_needed_gluing}
.}
\end{figure}

Finally, we remark that we designed our approximation algorithm for PAP favouring simplicity over a (slightly) better approximation factor.

\section{An Algorithm for Forests with Few Connected Components}\label{sec:few_components}

In this section we prove Lemma~\ref{lem:result_tap_algo}, making use of Lemma~\ref{lem:wtap}.
First of all, we consider the $2$-WECSS instance $(V,F \cup L, w)$ corresponding to the input instance of FAP and we fix an arbitrary root $r\in V$.
Moreover, as in Section~\ref{sec:w2ecss}, we again denote by $A$ the arc set that, for every (undirected) edge in $F\cup L$, contains the two corresponding directed arcs.
By Lemma~\ref{lem:matroid_intersection}, we can find a minimum-weight set $D\subseteq A$ of directed arcs that enters every set $\emptyset\neq R\subseteq V\setminus \{r\}$ at least twice. 
W.l.o.g. we can assume that $D$ contains all the arcs corresponding to $F$ (since they have weight zero).

Next, we construct a WTAP instance together with an up-link solution, to which we will later apply Lemma~\ref{lem:wtap}.
Let $S$ be the set of links corresponding to directed edges in $D$.
Here, we define $S$ to contain only a single copy of a link $\{u,v\}$ even if $D$ contains both $(u,v)$ and $(v,u)$.
The lemma below shows that $F\cup S$ contains a cheap spanning tree, which will be the input tree of the WTAP instance we construct.

\begin{lemma}\label{lem:cheap_arb}
The graph $(V,F\cup S)$ contains a spanning tree $(V,F\cup S_{tree})$ with $|S_{tree}|= \nc - 1$.
\end{lemma}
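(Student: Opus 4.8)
The plan is to show that the graph $(V, F \cup S)$ is connected, and then simply take a spanning tree containing all of $F$; since $F$ already has $\nc$ components (and $|V| - |F| = \nc$), any spanning tree of a connected graph on $|V|$ vertices that contains $F$ uses exactly $\nc - 1$ additional edges, all from $S$. So the content of the lemma is entirely in the connectivity claim.

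To see that $(V, F \cup S)$ is connected, I would argue via the defining property of $D$. Suppose for contradiction that $(V, F \cup S)$ is disconnected, and let $C$ be a connected component of $(V, F \cup S)$. By relabeling if necessary, we may assume $r \notin C$ (some component avoids $r$). Now consider the vertex set $R = C \neq \emptyset$, $R \subseteq V \setminus \{r\}$. By the property of $D$ guaranteed by Lemma~\ref{lem:matroid_intersection}, we have $|D \cap \delta^-(R)| \ge 2$, so in particular $\delta^-(R) \cap D \neq \emptyset$: there is some arc $(u,v) \in D$ with $v \in R$ and $u \notin R$. But then the link $\{u,v\}$ lies in $S$, and it connects $u \notin C$ to $v \in C$, contradicting that $C$ is a connected component of $(V, F \cup S)$. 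Hence $(V, F \cup S)$ is connected.

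Next I would turn the connectivity into the tree statement. Since $F \subseteq S$-augmented graph is connected and $F$ is a forest with $\nc$ components (so $|F| = |V| - \nc$), we can greedily extend $F$ to a spanning tree $T$ of $(V, F \cup S)$ by repeatedly adding edges of $S$ that join two distinct current components without creating a cycle; this is possible precisely because $F \cup S$ is connected. Writing $T = F \cup S_{tree}$ with $S_{tree} \subseteq S$, the tree $T$ has $|V| - 1$ edges, so $|S_{tree}| = (|V| - 1) - |F| = (|V| - 1) - (|V| - \nc) = \nc - 1$, as claimed. (Here I am using implicitly that $F$ and $S$ are disjoint as edge sets, which we may assume, or else we just note $S_{tree}$ denotes the extra edges.)

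The argument is short and the only real subtlety is bookkeeping: being careful that $\nc = |V| - |F|$ (as defined in Section~\ref{sec:preliminaries}), that we may indeed assume $D \supseteq $ the arcs of $F$ (stated already, since those arcs have weight zero), and that taking one undirected copy per arc in $D$ is exactly how $S$ was defined. There is no hard step here — the potential pitfall is merely conflating "number of components of $F$" with something off by one, or forgetting to handle the case where $r$ itself might be isolated (which is fine: just pick any component not containing $r$).
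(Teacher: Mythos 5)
Your proposal is correct and follows essentially the same route as the paper: both derive connectivity of $(V,F\cup S)$ directly from the cut property of $D$ (the paper phrases it as reachability from $r$, you phrase it as the contrapositive via a component avoiding $r$), and then both extend $F$ to a spanning tree, counting $|V|-1-|F|=\nc-1$ extra edges.
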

\begin{proof}
First, we observe that $(V,D)$ contains a spanning arborescence.
This follows from the fact that every nonempty set of vertices that does not contain $r$ has an entering arc in $D$ and thus every vertex is reachable from $r$ in $(V,D)$.
Thus, $(V,F\cup S)$ is connected.
Because $(V,F)$ is a forest, there exists a spanning tree containing all edges from $F$.
This spanning tree has weight (i.e. number of links) $|V|- 1 - |F| = \nc - 1$.
\end{proof}

Let $(V,F\cup S_{\mathrm{tree}})$ be a spanning tree as in Lemma~\ref{lem:cheap_arb} and
let $S_{\mathrm{tap}}\coloneqq S \setminus S_{\mathrm{tree}}$.
Moreover, let $D_{\mathrm{tree}} \subseteq D$ and $D_{\mathrm{tap}}\subseteq D$ be the arcs in $D$ whose underlying undirected edges are contained in $F\cup S_{\mathrm{tree}}$ and $S_{\mathrm{tap}}$, respectively. 

Because the choice of $D$ implies that $F\cup S_{\mathrm{tree}} \cup S_{\mathrm{tap}}$ is 2-edge-connected, $S_{\mathrm{tap}}$ is a feasible solution for the WTAP instance with tree $(V,F\cup S_{\mathrm{tree}})$.
Moreover, we obtain the following even stronger property, the proof of which is inspired by an observation from \cite{CTZ21}.

\begin{lemma}\label{lem:cheap_up_link_solution}
Consider the WTAP instance with tree $G=(V,F\cup S_{\mathrm{tree}})$ and link set consisting of all shadows of links in $S_{\mathrm{tap}}$.
Then one can find in polynomial time, a feasible solution for this instance that consists only of up-links and has weight (i.e., number of links) at most $|S_{\mathrm{tap}}|$.
\end{lemma}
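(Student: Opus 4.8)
The plan is to convert the directed arc set $D_{\mathrm{tap}}$ into an up-link solution for the WTAP instance on the tree $G = (V, F\cup S_{\mathrm{tree}})$, using the orientation information carried by the arcs. Root $G$ at $r$. The key idea, following \cite{CTZ21}, is that for each arc $(u,v)\in D_{\mathrm{tap}}$, we do not keep the link $\{u,v\}$ itself but instead replace it by a shadow that is an up-link. Concretely, consider the tree path $P_{\{u,v\}}$ in $G$ between $u$ and $v$, and let $m$ be the least common ancestor of $u$ and $v$ in $G$. The path splits at $m$ into the $u$-to-$m$ portion and the $m$-to-$v$ portion. Since $(u,v)$ is an arc \emph{entering} $v$, I want to orient my accounting toward the endpoint being entered: I will assign to the arc $(u,v)$ the up-link $\{v, m\}$ (the shadow of $\{u,v\}$ running from $v$ up to the least common ancestor $m$), which is indeed a link in our instance because we assumed all shadows of links in $S_{\mathrm{tap}}$ are present. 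This map sends each arc of $D_{\mathrm{tap}}$ to an up-link, and $|D_{\mathrm{tap}}| \le 2|S_{\mathrm{tap}}|$; to get the bound $|S_{\mathrm{tap}}|$ rather than $2|S_{\mathrm{tap}}|$, I would take, for each undirected link in $S_{\mathrm{tap}}$, only one of its (at most two) directed copies in $D$, so that the resulting up-link multiset $U$ has $|U| \le |S_{\mathrm{tap}}|$.

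The heart of the argument is feasibility: every tree edge $e \in F\cup S_{\mathrm{tree}}$ must be covered by some up-link in $U$. Here I would use that $D$ enters every nonempty $R\subseteq V\setminus\{r\}$ at least twice. Fix a tree edge $e$; removing $e$ from $G$ splits $V$ into two parts, and let $R_e$ be the part not containing $r$. Then $|D\cap\delta^-(R_e)| \ge 2$. One of these entering arcs may correspond to the tree edge $e$ itself (the arc of $D_{\mathrm{tree}}$ directed into $R_e$ along $e$); but the other entering arc is an arc $(u,v)$ with $v\in R_e$, $u\notin R_e$, whose underlying undirected edge is \emph{not} the tree edge $e$ — so it lies in $S_{\mathrm{tap}}$ (or it is a different tree edge, but a single tree edge separating $R_e$ from the rest means any \emph{other} edge of $G$ crossing the cut must be a non-tree edge, i.e. a link in $S_{\mathrm{tap}}$). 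For that arc, $v\in R_e$ and $u\notin R_e$ means the least common ancestor $m$ of $u,v$ satisfies $m\notin R_e$, while $v\in R_e$; hence the tree path from $v$ up to $m$ crosses the cut defined by $e$, i.e. the up-link $\{v,m\}$ covers $e$. The delicate point, which I expect to be the main obstacle, is correctly ruling out the degenerate case where \emph{both} entering arcs of $R_e$ in $D$ are copies of the single tree edge $e$ (this cannot happen, since $D$ has at most two parallel arcs on $e$ and they have opposite directions, so only one enters $R_e$) and, more generally, making sure the second entering arc genuinely yields an up-link covering $e$ even when its least common ancestor $m$ coincides with an endpoint of $e$ — a careful but routine case check using that $e\in\delta(R_e)$ and the path from the entered endpoint to $m$ must use $e$.

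Finally, all of this is polynomial time: computing $D$ is done by Lemma~\ref{lem:matroid_intersection}, extracting $S_{\mathrm{tree}}$ by Lemma~\ref{lem:cheap_arb}, and building $U$ requires only least-common-ancestor computations in $G$ and, for each link in $S_{\mathrm{tap}}$, picking one of its directed copies in $D$. Thus $U$ is a feasible up-link solution for the stated WTAP instance with $|U|\le |S_{\mathrm{tap}}|$, as required. I would then remark that, combined with Lemma~\ref{lem:wtap} applied to $U$ (whose size is at most $|S_{\mathrm{tap}}| = |S| - (\nc-1) \le 2\opt - \nc + 1$, while the WTAP optimum is at most $\opt$), this yields the bound claimed in Lemma~\ref{lem:result_tap_algo}.
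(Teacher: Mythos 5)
Your construction is precisely the paper's: map each $(u,v)\in D_{\mathrm{tap}}$ to the up-link $\{v,m\}$, where $m$ is the least common ancestor of $u$ and $v$, and argue feasibility via the fact that every $1$-cut $\delta(R)$ with $r\notin R$ has at least two arcs of $D$ entering it, of which at most one corresponds to the unique tree edge of $G$ crossing $\delta(R)$, so at least one is in $D_{\mathrm{tap}}$. That part is fine.

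The step you added --- discarding one of the two orientations of a link to reduce the count from $|D_{\mathrm{tap}}|$ down to $|S_{\mathrm{tap}}|$ --- is a genuine gap. The feasibility proof is cut by cut: for each $1$-cut $\delta(R_e)$ it uses the up-link derived from a \emph{particular} arc of $D_{\mathrm{tap}}$ entering $R_e$; if you discard exactly that arc (keeping the oppositely oriented copy), the cut can be left uncovered. Concretely, if both $(u,v)$ and $(v,u)$ are in $D_{\mathrm{tap}}$ and the least common ancestor $m$ of $u$ and $v$ is an interior vertex of the $u$--$v$ tree path, the two resulting up-links $\{m,v\}$ and $\{m,u\}$ cover disjoint halves of that path, and no single up-link shadow of $\{u,v\}$ covers all of it. For example, if $G$ is the path $u$--$r$--$v$ rooted at the middle vertex $r$, $S_{\mathrm{tap}}=\{\{u,v\}\}$, and $D_{\mathrm{tap}}=\{(u,v),(v,u)\}$, then every feasible up-link solution must contain both $\{r,u\}$ and $\{r,v\}$, so there is no feasible up-link solution of size $|S_{\mathrm{tap}}|=1$. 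The bound the construction actually gives is $|D_{\mathrm{tap}}|$ (this is also all the paper's own argument establishes), and that is enough for Lemma~\ref{lem:result_tap_algo}, since $|D_{\mathrm{tap}}|\le w(D)-|S_{\mathrm{tree}}|\le 2\opt-\nc+1$; the right fix is to use this bound, not to drop arcs.
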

\begin{proof}
We call a cut, a 1-cut of the spanning tree $G$, if it contains only a single edge of $G$.
Recall that we can view WTAP as the problem of covering the 1-cuts of $G$ by links.
First, we show that for every 1-cut $\delta(R)$ with $\emptyset\neq R\subseteq V\setminus \{r\}$, there is a directed arc in $D_{\mathrm{tap}}$ that enters $R$.
To prove this, we first observe that $D$ contains at least two arcs entering $R$.
One of these arcs might be an arc $(a,b)\in D_{\mathrm{tree}}$ corresponding to the unique edge $\{a,b\} \in (F\cup S_{\mathrm{tree}}) \cap \delta(R)$.
However, because $\delta(R)$ is a 1-cut of the tree $G$, the other arc in $D$ that enters $R$ must be an element of $D_{\mathrm{tap}}=D\setminus D_{\mathrm{tree}}$.

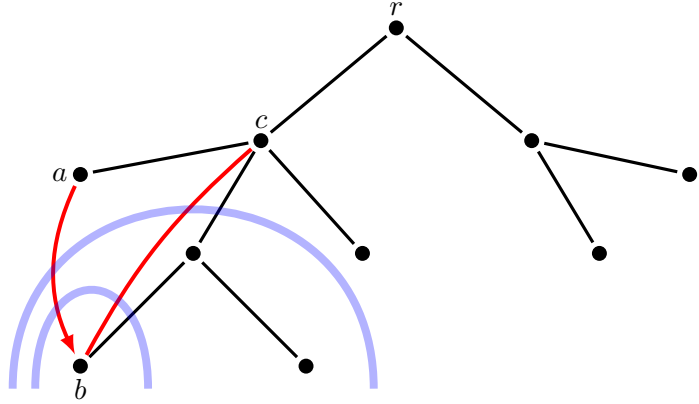
\begin{figure}
\begin{center}
\begin{tikzpicture}[scale=1.5]

\tikzset{
vertex/.style={fill=black,circle,minimum size=0pt, inner sep=2pt, outer sep=2pt}
}

\tikzset{
lks/.style={line width=1.5pt}
}

\begin{scope}[every node/.style=vertex]

\node (1) at (3,3) {};
\node (1a) at (4.2,2) {};
\node (1b) at (5.6,1.7) {};
\node (1c) at (4.8,1) {};

\node (2) at (1.8,2) {};
\node (2a) at (0.2,1.7) {};
\node (2b) at (2.7,1) {};
\node (3) at (1.2,1) {};
\node (3a) at (2.2,0) {};
\node (4) at (0.2,0) {};

\begin{scope}[very thick]

\draw (1) -- (2);
\draw (2) -- (3);
\draw (3) -- (4);

\draw (1) -- (1a) -- (1b);
\draw (1a) -- (1c);

\draw (2a) -- (2) -- (2b);

\draw (3) --(3a);

\end{scope}
\end{scope}

\node[above=1pt] () at (1) {$r$};
\node[left=1pt] () at (2a) {$a$};
\node[above=1pt] () at (2) {$c$};
\node[below=1pt] () at (4) {$b$};

\begin{scope}[lks]
\draw[bend right=25, red, ->, >=latex] (2a) to (4);
\draw[bend right=10, red] (2) to (4);
\end{scope}

\begin{scope}[blue, line width=3pt, opacity=0.3]
\draw[bend left=90, looseness=1.7] (-0.4,-0.2) to (2.8,-0.2);
\draw[bend left=90, looseness=3] (-0.2,-0.2) to (0.8,-0.2);
\end{scope}
\end{tikzpicture}
\end{center}
\caption{
Illustration of the proof of Lemma~\ref{lem:cheap_up_link_solution}.
The spanning tree $S_{\mathrm{tree}}$ is shown in black.
In red, there is a directed link $(a,b)$ and the corresponding up-link $\{c,b\}$.
The cuts $\delta(R)$ with $\emptyset \ne R\subseteq V\setminus \{r\}$ that contain precisely one edge of the tree $S_{\mathrm{tree}}$ and fulfill $(a,b)\in \delta^-(R)$, are shown in light blue.
For each such cut, we have $\{b,c\} \in \delta(R)$.
\label{fig:up_links_correspond_to_directed_links}
}
\end{figure}

To construct the desired up-link solution, we replace every directed link $(a,b) \in D_{\mathrm{tap}}$ by the up-link $\{c,b\}$, where $c$ is the lowest common ancestor of $a$ and $b$ in the tree $G$ (with respect to the root $r$).
See Figure~\ref{fig:up_links_correspond_to_directed_links}.
Note that $\{c,b\}$ is a shadow of the link $\{a,b\}$.

Now consider any 1-cut $\delta(R)$ with $R\subseteq V\setminus \{r\}$.
We observe that if the directed link $(a,b)$ enters $R$, then also the up-link $\{c,b\}$ covers this 1-cut, i.e., $\{c,a\}\in \delta(R)$.
Indeed, because $\delta(R)$ is a 1-cut of $G$ and $R$ does not contain the root, the fact that $b\in R$ and $a\notin R$ implies that the lowest common ancestor $c$ of $a$ and $b$ is not contained in $R$.
\end{proof}

We are now ready to prove Lemma \ref{lem:result_tap_algo}.
\begin{proof}[Proof of Lemma \ref{lem:result_tap_algo}]
By Lemma~\ref{lem:matroid_intersection}, we can compute the sets $D$ and $S$ in polynomial time.
Then by Lemma~\ref{lem:cheap_arb}, we can compute a spanning tree $(V, F\cup S_{\mathrm{tree}})$ in $(V,F \cup S)$ of weight $\nc -1$.

Clearly, the set $\OPT \setminus S_{\mathrm{tree}}$ is a feasible TAP solution to this instance with at most $\opt$ many links.
In particular, the optimal solution value of the TAP instance $((V,F\cup S_{\mathrm{tree}}), L \setminus  S_{\mathrm{tree}})$ is at most the optimal solution value $\opt$ for the original FAP instance.
Moreover, by Lemma~\ref{lem:cheap_up_link_solution}, we can construct an up-link solution for this instance with at most $|S_{\mathrm{tap}}|$ many links.
Using Lemma~\ref{lem:2apx}, we get
\[
|S_{\mathrm{tap}}|\ =\ |S| - |S_{\mathrm{tree}}| = |S| - \nc + 1\ \le\ |D| - \nc + 1 \ \le\ 2 \cdot \opt - \nc + 1,
\]
and hence applying the algorithm from Lemma~\ref{lem:wtap} yields a TAP solution with at most 
\[
 \left( 1 + \ln\Big(2-\frac{\nc - 1}{\opt}\Big) + \epsilon\right) \cdot \opt
 \]
  many links.
The union of these links and the $\nc -1$ links in $S_{\mathrm{tree}}$ is the desired FAP solution. The claim follows by observing that replacing $\nc -1$ by $\nc$ only weakens the overall bound.
\end{proof}

\section{Algorithm for Forests with Many Connected Components}\label{sec:many_components}

In this section we prove Lemma~\ref{lem:result_credit_algo}. From the discussion in Section \ref{sec:overview}, this reduces to proving Lemma \ref{lem:result_pap_algo}, i.e., to providing a $(\tfrac{7}{4}, \tfrac{7}{4})$-approximation algorithm for PAP without isolated nodes.
Therefore, in what follows we will assume that every connected component of $(V,F)$ is a path of length at least $1$.

\subsection{Overview of Our Algorithm}

Let us start with a brief recap of how our algorithm and its analysis work on a high level.
In our algorithm we maintain a partial solution $S\subseteq L$, which is not necessarily a feasible solution.
In the analysis, we maintain in addition a fractional set of (nonnegative) credits, which are distributed over the current graph $H=(V, F\cup S)$.
Credits will be assigned to some of the vertices, edges, connected components, and 2-edge-connected components.
Throughout the algorithm we maintain the invariant that the number $|S|$ of links in our partial solution plus the total number of credits $\cred(H)$ of $H$ is at most $\tfrac{7}{4} \opt + \tfrac{7}{4}  \left(\opt-\nc\right)$.
At the end of the algorithm, $S$ will be a feasible solution, i.e., $H=(V,F\cup S)$ will be 2-edge-connected.
Then our invariant (and the fact that credits are nonnegative) implies $|S|\le \tfrac{7}{4} \opt + \tfrac{7}{4} \cdot \left(\opt -\nc\right)$.

Our algorithm consists of three steps.
First, we use a matching algorithm to compute an initial partial solution $S$.
The goal of the second step, called bridge covering, is to ensure that every connected component of $H\coloneqq (V,F\cup S)$ is 2-edge-connected.
We describe this step in Section~\ref{sec:bridge_covering}.
Finally, in the third step, called gluing, we ensure that $H$ becomes 2-edge-connected (see Section~\ref{sec:gluing}).
Algorithm~\ref{algo:pap_overview} describes our overall algorithm for PAP without isolated nodes.

\begin{algorithm}[H]
\caption{Algorithm for PAP without isolated nodes \label{algo:pap_overview}}

\begin{enumerate}
\item \textbf{Initialization:}\\
Let $M \subseteq L$  be a maximum cardinality matching on the leaves of $(V,F)$ that contains no bad link, i.e., no link with both endpoints in the same path in $(V,F)$. \\
 Initialize $S\coloneqq M$.
\item \textbf{Bridge covering:}\\
As long as $H=(V, F\cup S)$ has a connected component that is not 2-edge-connected, iterate the following:
\begin{itemize}\itemsep0pt
\item Let $C$ be a connected component of $H$ that is not 2-edge-connected and let $x$ be a leaf of the tree $T^C$  (arising from the contraction of each 2-edge-connected component of $C$).
\item Let $z$ be the vertex of the tree $T^C$ that has maximum distance from $x$ (in $T^C$) among all vertices that are reachable from $x$ by an alternating trail (as defined in Section~\ref{sec:bridge_covering}).
\item Augment $S$ along an alternating $x$-$z$ trail (as defined in Section~\ref{sec:bridge_covering}).
\end{itemize}
\item \textbf{Gluing:}\\
As long as $H = (V, F\cup S)$ is not 2-edge-connected, iterate the following:
\begin{itemize}\itemsep0pt
\item If there is a good cycle $Q$, then glue $H$ along $Q$ (definitions in Section~\ref{sec:gluing}).
\item Otherwise take an arbitrary cycle $Q$ in the graph $G_H$ (arising from the contraction of the $2$-edge-connected components of $H$) and add the links of $Q$ to $S$.
\end{itemize}
\item Return $S$.
\end{enumerate}
\end{algorithm}

\subsection{Credit Scheme and Invariants}

In this section we show how we distribute credits and describe the invariants that we maintain during our algorithm.
We emphasize that we will use credits only in the analysis of our algorithm.
Therefore, the distribution of credits can depend on a fixed (but unknown) optimal solution $\OPT$. 

To define our credit scheme, we need the notion of \emph{simple components}.

\begin{definition}[simple components]\label{def:simple}
A connected component of $H=(V,F\cup S)$ is called \emph{simple} if it is a cycle that contains exactly 2 links.
\end{definition}

Simple components will play a special role in the gluing step because they are the components that prevent us from only adding links (and never removing links) in the gluing step.
We remark that in particular in the example in Figure~\ref{fig:removing_needed_gluing} all connected components of $H=(V,F\cup M)$ are simple.

Let us now define the credit invariants.
Consider any point of the algorithm, where we have a current partial solution $S\subseteq L$ and define the current subgraph to be $H\coloneqq(V,F\cup S)$.
Recall that by assumption, the graph $H$ has no isolated vertices.
We call a 2-edge-connected component of $H$ \emph{nontrivial} if it contains at least two vertices.
The \emph{2EC-blocks} of a connected component of $H$ are its inclusionwise maximal 2-edge-connected subgraphs that are nontrivial.

\begin{definition}[lonely vertex]
If a vertex $v\in V$ does not belong to any nontrivial 2-edge-connected component of $H$, we call $v$ a \emph{lonely} vertex (of $H$). 
\end{definition}

For $H=(V,F\cup S)$, we assign credits according to the following rules, where we set $\epsilon \coloneqq \tfrac{1}{4}$.
\begin{enumerate}[label=(\Alph*)]\itemsep0pt
\item\label{item:vertex_credits} Vertices receive the following credits.
\begin{enumerate}[label=(A\arabic*)]\itemsep0pt
\item \label{item:leaf_credits} Every leaf of $H$, i.e., every vertex $v$ with $|\delta_{F\cup S}(v)|=1$, receives $1$ credit. 
\item\label{item:lonely_credits} A vertex $v$ of $H$, receives $\tfrac{1}{2}(|\delta_{\OPT \cup F}(v)|-2)$ additional credits if it is a lonely vertex or it belongs to a simple component.
\end{enumerate}
\item\label{item:bridge_credits} Every bridge $\ell \in S$, i.e., every link $\ell\in S$ that is not part of a 2-edge-connected component of $H$, receives $1-\eps$ credits.
\item\label{item:component_credits} Every connected component of $H$ that contains at least one bridge receives $1$ credit. Every connected component of $H$ that is $2$-edge-connected receives $2-2\eps$ credit if it simple and $2$ credits otherwise.
\item\label{item:2ec_component_credits} Every 2EC-block of a connected component that contains bridges receives $1$ credit.
\end{enumerate}
We remark that the credits assigned according to \ref{item:leaf_credits} and \ref{item:lonely_credits} add up. 
We also observe that the credits assigned according to \ref{item:lonely_credits} depend on the (unknown) optimal solution $\OPT$.
These \emph{implicit credits} are needed in order to be able to compare to $\opt$ because using only a lower bound on $\opt$ that is based on the number of unmatched leaves in $M$\footnote{We will give such a bound in Lemma~\ref{lem:matching_bound}.} is not sufficient to achieve any finite approximation guarantee, as one can see from the example in Figure~\ref{fig:implicit_credits_needed}.

\begin{figure}[H]
\begin{center}
\begin{tikzpicture}[scale=1.5]

\tikzset{
vertex/.style={fill=black,circle,minimum size=0pt, inner sep=2pt, outer sep=2pt}
}
\tikzset{
block/.style={draw=black, very thick, circle,minimum size=0pt, inner sep=2pt, outer sep=2pt}
}

\begin{scope}[every node/.style={vertex}]
\node (1) at (1,1) {};
\node (2) at (2,1) {};
\node (3) at (3,1) {};
\node (4) at (4,1) {};
\node (5) at (5,1) {};
\node (6) at (6,1) {};
\node (7) at (7,1) {};
\node (8) at (8,1) {};
\node (9) at (9,1) {};
\node (10) at (10,1) {};

\end{scope}

\begin{scope}[blue, very thick]
\draw[bend left] (1) to (3);
\draw[bend right] (2) to (5);
\draw[bend right] (6) to (9);
\draw [bend left] (4) to (7);
\draw [bend left] (8) to (10);
\end{scope}

\begin{scope}[very thick, densely dashed]
\draw (1) -- (2);
\draw (3) -- (4);
\draw (5) -- (6);
\draw (7) -- (8);
\draw (9) -- (10);
\draw (2) -- (3);
\draw (4) -- (5);
\draw (6) -- (7);
\draw (8) -- (9);
\end{scope}

\end{tikzpicture}
\end{center}
\caption{
Example showing that a lower bound based on the number of unmatched leaves in the initial matching $M$ (which would be empty in this example) does not lead to a strong enough lower bound, which is why we introduce implicit credits.
The dashed edges are edges of the forest $F$, while blue edges are links in $\OPT$.
Then, indeed, the number of unmatched leaves of the forest is $2$ (and thus constant), but $\opt$ can be unbounded (by increasing the size of the above example in the obvious way).
\label{fig:implicit_credits_needed}
}
\end{figure}
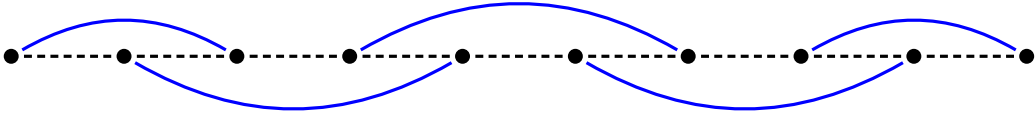

 We use $\cred(H)$ to denote the total number of credits of $H\coloneqq(V,F\cup S)$ according to the above rules \ref{item:vertex_credits} -- \ref{item:2ec_component_credits}.
To prove that our algorithm fulfills the desired approximation guarantee, we prove that we maintain Invariant \ref{invariant:credits}, which in terms of $\eps$ can be restated as
$$
\cred(H) + |S| \ \le\ \left(2-\epsilon \right) \cdot \opt +  (1+ 3 \epsilon)\cdot \left(\opt -\nc\right).
$$ 

To this end, we will first show that Invariant~\ref{invariant:credits} is fulfilled for the initial choice of $S=M$.
Then we provide a procedure that maintains this invariant and at the end of which $H$ is 2-edge-connected.
In order to show that the invariant is maintained, we will show that the total $\cred(H)+|S|$ never increases.

Besides Invariant~\ref{invariant:credits}, we will also maintain the following invariant.

\begin{invariant}\label{invariant:lonely_vertices}
Every lonely vertex of $H$ has degree at most $2$ in $H$.
\end{invariant}

\subsection{Invariants after the Initialization}

Let $M \subseteq L$  be a maximum cardinality matching on the leaves of $(V,F)$ that contains no link with both endpoints in the same path (we call such links \emph{bad}). Our initial solution is $S=M$ and we let $H=(V,F\cup M))$. We next show that $H$ fulfills Invariants~\ref{invariant:credits} and~\ref{invariant:lonely_vertices}. 
To this aim, we first relate the number of unmatched leaves of $(V,F)$, which is the number of leaves of the initial graph $H=(V,F\cup M)$, to the size $\opt$ of an optimal solution.

\begin{lemma}\label{lem:matching_bound}
The number $2\nc-2|M|$ of \emph{unmatched} leaves, i.e., the number of leaves of $(V,F)$ that do not have an incident edge in $M$, is at most $4\cdot \left(\opt-\nc\right)$.
\end{lemma}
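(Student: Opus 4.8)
The plan is to bound the number of unmatched leaves by charging them to ``extra'' links of $\OPT$, i.e.\ to $\opt - \nc$. Let $U$ denote the set of unmatched leaves of $(V,F)$, so $|U| = 2\nc - 2|M|$; the goal is $|U| \le 4(\opt - \nc)$. The first observation is a matching-theoretic one: since $M$ is a \emph{maximum} cardinality matching on the leaves that avoids bad links, there is no augmenting path for $M$ using only leaf-to-leaf links that are not bad. In particular, no link of $L$ directly joins two unmatched leaves unless it is bad (a link joining the two endpoints of the same path). So for every unmatched leaf, the only links incident to it that $\OPT$ can use to reach another leaf go either to an $M$-matched leaf, to an internal vertex of some path, or are bad links (both endpoints in the same path).

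Next I would use the feasibility of $\OPT$: in $(V, F \cup \OPT)$ every leaf $v$ of $(V,F)$ must have $\delta_\OPT(v) \ge 1$, and in fact, since $v$ has degree $1$ in the forest and the graph must be $2$-edge-connected, $v$ cannot be a bridge endpoint, so actually $|\delta_\OPT(v)| \ge 1$ suffices to make $v$ non-isolated but $2$-edge-connectivity forces $v$ to lie on a cycle. The key counting step: consider the multigraph on the path-components induced by $\OPT$, where each link of $\OPT$ becomes an edge between the (at most two) components containing its endpoints; a bad link becomes a self-loop. For $(V, F\cup \OPT)$ to be connected we need at least $\nc - 1$ non-loop links, and for $2$-edge-connectedness of the whole graph we need every component to be ``entered'' enough times — more precisely, contracting each path-component, the graph $(V/F, \OPT)$ must be $2$-edge-connected on $\nc$ nodes, hence has at least $\nc$ links that are not self-loops. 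Writing $\opt = |\OPT| = (\text{non-loops}) + (\text{loops}) \ge \nc + (\text{loops})$ gives that the number of bad links in $\OPT$ is at most $\opt - \nc$.

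Then I would charge unmatched leaves. Each unmatched leaf $v$ has at least one incident link in $\OPT$ (as argued), and each such link is incident to at most one other leaf — and if it is incident to another unmatched leaf it must be bad. Partition the incident $\OPT$-links of unmatched leaves: those that are bad links, and those that go to an $M$-matched leaf or to an internal path vertex. A bad link can be incident to at most $2$ unmatched leaves; by the previous paragraph there are at most $\opt - \nc$ bad links in $\OPT$, accounting for at most $2(\opt-\nc)$ unmatched leaves. For the remaining unmatched leaves, each has an $\OPT$-link to a vertex that is \emph{not} an unmatched leaf; I would argue (again via maximality of $M$ and a parity/augmenting-path argument, or by a direct surjection-with-bounded-fibers count) that these can be charged so that at most $2(\opt - \nc)$ of them arise — essentially because such a configuration, if it occurred ``for free'' (without an extra link somewhere), would yield an $M$-augmenting path of non-bad leaf links, contradicting maximality of $M$. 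Summing, $|U| \le 4(\opt - \nc)$.

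The main obstacle I anticipate is the second charging step: ruling out that \emph{many} unmatched leaves are matched in $\OPT$ to internal vertices or to $M$-matched leaves without a corresponding surplus over $\nc$. The clean way to handle it is to look at the subgraph of $\OPT$ incident to $U$ together with $M$, find alternating structure, and observe that an alternating path in $M \triangle \OPT$ connecting two unmatched leaves through non-bad links would be an augmenting path, contradicting that $M$ is maximum among bad-link-free matchings; so every such alternating path must either terminate inside a single component (contributing a bad link) or must ``use up'' one of the $\le \opt - \nc$ surplus links. Making this accounting tight enough to get the constant $4$ (rather than a worse constant) is the delicate part, and I would expect to need to be careful about leaves that are endpoints of the \emph{same} path versus different paths.
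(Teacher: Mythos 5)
Your high-level plan is on the right track, and the first half of it — bounding the number $\opt_{\mathrm{bad}}$ of bad links in $\OPT$ by $\opt-\nc$ (via the contracted graph on the $\nc$ path-components needing at least $\nc$ non-loop edges), and then charging at most $2(\opt-\nc)$ unmatched leaves to bad links — is correct and essentially coincides with the paper. The observation that no non-bad link can join two unmatched leaves (else $M$ would not be maximum) is also correct.

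The gap is in the second charging step, and it is a real one. You want to argue that at most $2(\opt-\nc)$ unmatched leaves have a non-bad $\OPT$-link whose other endpoint is an $M$-matched leaf or an internal path vertex, and you propose an alternating/augmenting path argument in $M\triangle\OPT$. This does not go through: $\OPT$ is not a matching, so $M\triangle\OPT$ has no alternating-path structure — a vertex may have arbitrarily high $\OPT$-degree, and the symmetric difference is an arbitrary subgraph, not a disjoint union of paths and cycles. The surjection-with-bounded-fibers count by itself only tells you that the distinct links $\ell_v$ incident to unmatched leaves are many; it does not tell you that they are \emph{surplus} links beyond the $\nc$ that $\OPT$ must spend just to connect the components. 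Indeed those $\ell_v$ could well be exactly the $\nc$ ``structural'' non-bad links of $\OPT$, so you cannot charge them to $\opt-\nc$ directly.

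The paper closes this gap with one additional idea you are missing: rather than reason about $M$ (whose relationship to $\OPT$ is opaque), define $M_{\OPT}$ to be a maximal matching among the non-bad leaf-to-leaf links \emph{inside} $\OPT$, and use $|M|\ge |M_{\OPT}|$ (since $M$ is maximum cardinality). Now every object in the count is a link of $\OPT$, and the counting becomes elementary: the $|M_{\OPT}|$ matching links are disjoint from the distinct links $\ell_v$ covering leaves unmatched by $M_{\OPT}$, so $\opt\ge |M_{\OPT}|+|U|-\opt_{\mathrm{bad}}$ (where $|U|$ is the number of $M_{\OPT}$-unmatched leaves and the $-\opt_{\mathrm{bad}}$ accounts for bad links that may be $\ell_v$ for two leaves at once). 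Writing $|M_{\OPT}|=\nc-|U|/2$ and adding this to $\opt\ge\nc+\opt_{\mathrm{bad}}$ makes $\opt_{\mathrm{bad}}$ cancel and yields $|U|\le 4(\opt-\nc)$. That cancellation is what makes the constant $4$ come out cleanly, and it is precisely the step your approach cannot reproduce without the auxiliary matching $M_{\OPT}$.
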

\begin{proof}
Let $\OPT \subseteq L$ be an optimal solution and let 
$M_{\OPT}\subseteq \OPT$ be a maximal matching in $\OPT$ that contains no bad links.
Then $|M|\ge |M_{\OPT}|$ and thus it suffices to show that  at most $4\cdot (\opt-\nc)$ leaves of the forest $(V,F)$ have no incident edge in the matching $M_{\OPT}$.

First, we observe that every connected component of the forest $(V,F)$ has at least two edges incident to it that have their other endpoint in a different connected component.
Because bad links have both endpoints in the same connected component of the forest, this implies 
\begin{equation}\label{eq:first_lower_bound_opt}
\opt \ge \nc + \opt_{\mathrm{bad}},
\end{equation}  
where $\opt_{\mathrm{bad}}$ denotes the number of bad links in $\OPT$.
Let $V_{\mathrm{leaf}}\subseteq V$ denote the set of leaves of the forest $(V,F)$.
Because every connected component of this forest is a path, we have
\[
 2\cdot \nc =\ \big|V_{\mathrm{leaf}} \big|\ =\ 2\cdot |M_{\OPT}| +  \big|\big\{ v\in V_{\mathrm{leaf}} :  M_{\OPT} \cap \delta(v)= \emptyset \big\}\big|.
\]
Using the maximality of $M_{\OPT}$ and the fact that every leaf of $(V,F)$ is the endpoint of a link in $\OPT$, this implies
\begin{equation}\label{eq:second_lower_bound_opt}
\begin{aligned}
\opt \ge&\  |M_{\OPT}| + \big|\big\{ v\in V_{\mathrm{leaf}} : M_{\OPT} \cap \delta(v)= \emptyset \big\}\big| - \opt_{\mathrm{bad}} \\
=&\ \nc  + \tfrac{1}{2} \cdot \big|\big\{ v\in V_{\mathrm{leaf}} : M_{\OPT} \cap \delta(v)= \emptyset \big\}\big| - \opt_{\mathrm{bad}}.
\end{aligned}
\end{equation}
Adding up \eqref{eq:first_lower_bound_opt} and \eqref{eq:second_lower_bound_opt}, we obtain
\[
2\cdot (\opt-\nc) \ge \tfrac{1}{2} \cdot  \big|\big\{ v\in V_{\mathrm{leaf}} : M_{\OPT} \cap \delta(v)= \emptyset \big\}\big|. \qedhere
\]
\end{proof}

We now show that $H=(V,F\cup M))$ fulfills Invariants~\ref{invariant:credits} and~\ref{invariant:lonely_vertices}. 
 
\begin{lemma}\label{lem:initialization_invariants}
$H=(V,F\cup M)$ satisfies Invariants \ref{invariant:credits} and \ref{invariant:lonely_vertices}.
\end{lemma}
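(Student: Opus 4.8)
The plan is to verify the two invariants separately for the starting graph $H=(V,F\cup M)$, using Lemma~\ref{lem:matching_bound} as the main quantitative tool.

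\medskip

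\textbf{Invariant~\ref{invariant:lonely_vertices}.} First I would observe that in $H=(V,F\cup M)$ every vertex has degree at most $3$: interior path vertices have $F$-degree $2$ and are not endpoints of any link of $M$ (since $M$ matches only leaves of $(V,F)$), while a leaf of $(V,F)$ has $F$-degree $1$ and is incident to at most one link of $M$. So the only way a vertex could have degree $>2$ is if it is a path-leaf incident to a link of $M$, but such a vertex lies on the cycle formed together with its matching partner and the corresponding path\,--\,wait, that is only true if the partner is the other leaf of the \emph{same} path, which $M$ forbids (bad links). So I would instead argue directly: a lonely vertex has degree at most $2$ simply because any vertex of degree $3$ in $H$ is a path-leaf $v$ matched by some $\ell=\{v,w\}\in M$, and $w$ is a leaf of a different path; but then $v$ together with $\ell$ need not lie in a nontrivial 2EC-block, so $v$ could still be lonely with degree $3$. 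Hence the clean statement: a degree-$3$ vertex $v$ is a path-leaf matched by $\ell=\{v,w\}$; the component of $H$ containing $v$ then contains $\ell$ and both incident path-edges, and since $\ell$ is a bridge (its removal separates the two paths), $v$ is indeed lonely. This shows Invariant~\ref{invariant:lonely_vertices} can \emph{fail} — so I must have the wrong picture. The correct resolution, which I would track carefully, is that $w$ being a leaf of a \emph{path} means the component of $H$ through $\ell$ is a path (two paths joined by one link), every vertex on it has degree $\le 2$, so $v$ has degree $2$, contradiction; a matched path-leaf has degree exactly $2$. Thus \emph{every} vertex of $H$ has degree $\le 2$, and Invariant~\ref{invariant:lonely_vertices} is immediate (in fact trivially, since every vertex has degree $\le 2$).

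\medskip

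\textbf{Invariant~\ref{invariant:credits}.} I would compute $\cred(H)+|S|$ exactly for $H=(V,F\cup M)$. Since every vertex has degree $\le 2$, $H$ is a disjoint union of paths and cycles; the paths are exactly the connected components obtained by joining two input paths with a bad-link-free matching edge (or single input paths whose leaves are unmatched), and the cycles are exactly the simple components (two input paths joined by two matching links — but $M$ is a matching, so a cycle arises from two links sharing... no: a cycle needs each vertex degree $2$, hence is formed by input paths linked in a cycle by links of $M$; with $M$ a matching this means exactly two input paths and two links, i.e.\ a simple component, or possibly longer even cycles). I would enumerate the credit contributions: every leaf of $H$ (an unmatched path-leaf) contributes $1$ by \ref{item:leaf_credits}; rule \ref{item:lonely_credits} contributes $\tfrac12(|\delta_{\OPT\cup F}(v)|-2)$ at lonely/simple-component vertices, and summing $\tfrac12(|\delta_{\OPT\cup F}(v)|-2)$ over \emph{all} vertices would give $\tfrac12(2|\OPT|+2|F|-2|V|)=|\OPT|+|F|-|V|=\opt-\nc$; rule \ref{item:bridge_credits} gives $(1-\eps)$ per link of $M$ (every $M$-link is a bridge in $H$, except those inside cycles, i.e.\ simple components); rule \ref{item:component_credits} gives $1$ per path-component and $2$ (or $2-2\eps$) per cycle-component; rule \ref{item:2ec_component_credits} contributes nothing since no component with bridges has a nontrivial 2EC-block.

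\medskip

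\textbf{Putting it together.} The number of leaf-credits from \ref{item:leaf_credits} is the number $2\nc-2|M|$ of unmatched leaves, which by Lemma~\ref{lem:matching_bound} is at most $4(\opt-\nc)$. The \ref{item:lonely_credits}-credits sum to at most $\opt-\nc$ (it is exactly $\opt-\nc$ if every vertex is lonely or in a simple component, which holds here since $H$ has no nontrivial 2EC-blocks except inside cycles — and I'd double-check the cycle case contributes correctly with the $2-2\eps$ bookkeeping). The bridge-credits plus $|S|=|M|$ contribute roughly $(2-\eps)|M|$; the component-credits contribute about one per pair of joined paths, i.e.\ on the order of $\nc-|M|$ many. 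Collecting terms and bounding everything against $4(\opt-\nc)$ from Lemma~\ref{lem:matching_bound} and the identity for \ref{item:lonely_credits}, one gets $\cred(H)+|S|\le (2-\eps)\opt+(1+3\eps)(\opt-\nc)$ with room to spare. I expect the main obstacle to be the careful bookkeeping of the cycle (simple) components — making sure the discounted credits $2-2\eps$ for simple components, the missing bridge-credits inside them, and the \ref{item:lonely_credits}-credits on their vertices all balance — together with pinning down exactly which unmatched-leaf and which $\nc-|M|$ terms Lemma~\ref{lem:matching_bound} must absorb; the algebra itself is routine once the structure of $H$ (disjoint paths and cycles) is nailed down.
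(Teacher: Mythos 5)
Your plan is structurally the right one and matches the paper's approach (reduce to the fact that $H=(V,F\cup M)$ has maximum degree $2$, so it decomposes into paths and cycles, and then account for each credit rule). But there are two concrete issues.

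For Invariant~\ref{invariant:lonely_vertices}, your final conclusion (every vertex of $H$ has degree at most~$2$, so the invariant holds trivially) is exactly the paper's one-line argument. The meandering before it, however, is built on a false premise: a leaf of $(V,F)$ has $F$-degree~$1$, so a matched path-leaf has degree $1+1=2$, never~$3$. The hypothetical ``degree-$3$ path-leaf'' and the sub-argument that Invariant~\ref{invariant:lonely_vertices} ``can fail'' are both spurious; they should simply be deleted.

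For Invariant~\ref{invariant:credits}, the enumeration of the credit types is right (including the self-correction that cycles of $H$ need not be simple), but the final arithmetic step as you describe it does not close. You plan to bound the \ref{item:leaf_credits}-credits directly by $4(\opt-\nc)$ via Lemma~\ref{lem:matching_bound}, and then ``collect terms.'' If one carries that through (leaf credits $\le 4(\opt-\nc)$; \ref{item:lonely_credits} $\le \opt-\nc$; \ref{item:bridge_credits} $=(1-\eps)b$; path-component credits $\le 2(\opt-\nc)$; cycle-component credits $\le (1-\eps)(|M|-b)$; and adding $|S|=|M|$), one obtains at best
\[
\cred(H)+|S|\ \le\ 7(\opt-\nc)+(2-\eps)|M|\ \le\ (2-\eps)\opt + 7(\opt-\nc),
\]
which is far weaker than the required $(2-\eps)\opt + (1+3\eps)(\opt-\nc)$, since $1+3\eps=\tfrac{7}{4}$. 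The bound is actually tight, not ``with room to spare.'' The missing idea is that the $|M|$-dependence of the leaf credits must be preserved so that it can cancel against $|S|=|M|$: the paper writes $2\nc-2|M|$ as the convex combination $\bigl(1-\tfrac{\eps}{2}\bigr)(2\nc-2|M|)+\tfrac{\eps}{2}\cdot 4(\opt-\nc) = (2-\eps)(\nc-|M|)+2\eps(\opt-\nc)$, so that after adding $(2-\eps)|M|$ from $|S|$ plus the cycle credits one obtains $(2-\eps)\nc$, which becomes $(2-\eps)\opt-(2-\eps)(\opt-\nc)$. (Equivalently, one can keep the leaf and path-component credits exactly as $3(\nc-|M|)$ and substitute the lower bound $|M|\ge \nc-2(\opt-\nc)$ from Lemma~\ref{lem:matching_bound} at the very end.) Without tracking the $-(2-\eps)|M|$ cancellation, the inequality fails.
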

\begin{proof}
Invariant \ref{invariant:lonely_vertices} follows from the fact that all vertices have degree at most $2$ in $F$ and this property is maintained by adding a matching on the leaves of $(V,F)$. 

Consider next Invariant \ref{invariant:credits}. First, we observe that every connected component of $H$ is either a path or a cycle.
Moreover, because $M$ contains no bad links, every cycle in $H$ contains at least two links from $M$.
Let us now give upper bounds on $\cred(H)$.
\begin{itemize}
\item The number of credits due to \ref{item:leaf_credits} is the number of leaves of $H$, i.e.,
$2 \nc -2 |M|$. 
Therefore, we can upper bound this number of credits by
\begin{align*}
2\nc-2|M| \overset{Lem. \ref{lem:matching_bound}}{\leq} &\ \big(1-\tfrac{\epsilon}{2}\big) \cdot \big(2 \cdot \nc - 2|M|\big) + \tfrac{\epsilon}{2} \cdot \big( 4\cdot (\opt -\nc)\big) \\
=&\ (2-\epsilon) \cdot (\nc-|M|) + 2\epsilon \cdot (\opt-\nc).
\end{align*}

\item The number of credits according to \ref{item:lonely_credits} is at most
\[
\tfrac{1}{2} \sum_{v\in V} \big(|\delta_{\OPT \cup F}(v)|-2) \ =\ \opt + |F| - |V| \ =\ \opt -\nc. 
\]
\item The number of credits according to \ref{item:bridge_credits} is 
\[
(1-\epsilon)\cdot |\{\ell\in M : \ell\text{ is a bridge of }H\}|.
\]
\item Let us now consider credit rule \ref{item:component_credits}.
The number of connected components of $H$ that contain bridges is  $\tfrac{1}{2}$ times the number of leaves of $H$, because each such connected component is a path. 
Thus, by Lemma~\ref{lem:matching_bound}, the connected components of $H$ that contain bridges have at most
\[
2\cdot (\opt-\nc)
\]
credits in total. Recall that every connected component that is 2-edge-connected is a cycle and contains at least two links from $M$.
If it contains exactly two links from $M$ it is simple.
Thus, because $3\cdot (1-\epsilon) \ge 2$, the number of credits of a connected component that is 2-edge-connected can be upper bounded by $(1-\epsilon)$ times the number of links it contains (from $M$).
Because these links are no bridges we can upper bound the total number of credits of connected components that are 2-edge-connected by
\[
 (1-\epsilon) \cdot \big(|M| - |\{\ell\in M : \ell\text{ is a bridge of }H\}|\big)
\]
\item Finally, because every connected component of $H$ is either a cycle or a path, there are no 2EC-blocks in connected components containing bridges.
Thus, there are no credits due to \ref{item:2ec_component_credits}.
\end{itemize}
Summing up, we get
\begin{align*}
\cred(H) \ \le&\ (2-\epsilon) \cdot (\nc-|M|) + (3+ 2\epsilon) \cdot (\opt-\nc) + (1-\epsilon)\cdot |M|.
\end{align*}
Thus, $S=M$ implies
\begin{align*}
\cred(H) + |S| \ \le&\ (2-\epsilon) \cdot (\nc-|M|) + (3+ 2\epsilon) \cdot (\opt-\nc) + (2-\epsilon)\cdot |M| \\
=&\ (2-\epsilon) \cdot \opt + (1+ 3\epsilon) \cdot (\opt-\nc).
\end{align*}

\end{proof}

\subsection{Bridge Covering using Alternating Trails}\label{sec:bridge_covering}

In this section we describe and analyze the bridge covering procedure.
We use it to augment $S$ in such a way that after the bridge covering step every connected component of $H=(V,F\cup S)$ is 2-edge-connected.
In the following we assume that $H$ has at least one bridge and describe an augmentation procedure that decreases the number of bridges of $H$ by at least $1$.
We can then iteratively apply this procedure until every connected component of $H=(V,F\cup S)$ is 2-edge-connected.
\bigskip

Let $C$ be a connected component of $H$ that is not 2-edge-connected.
We aim at ``covering'' at least one bridge of the component $C$.
To this end we might use paths ``going through'' different connected components, but it is irrelevant for us what ``happens inside these components''.
Thus, it is useful to consider the graph where these components are contracted.
Moreover, we will not add or remove any links inside 2EC-blocks of $C$ and thus we will also contract them.

Let $G^C$ and $H^C$ be the (multi-)graphs obtained from $(V,F\cup L)$ and $H$, respectively, by contracting every connected component except for $C$ and contracting every 2EC-block of $C$.
Then $H^C$ is the union of a tree $T^C$ with at least one edge (resulting from $C$ by contracting its 2EC-blocks) and singletons (resulting from the contraction of connected components distinct from $C$).
In the following we identify the edges/links in $(V,F\cup L)$ and the corresponding edges/links in $G^C$.

We will aim at ``covering as many edges/bridges of the tree $T^C$ as possible'' by an \emph{alternating trail} starting in $x$. 
See Figure~\ref{fig:alternating_trail} for an example.

We use the following definitions.
A \emph{trail} is a walk that uses every edge at most once.
A \emph{path} is a walk that visits every vertex at most once.

\begin{definition}(alternating trail)\label{def:alternating_trail}
Let $z$ be a vertex of $T^C$ that is distinct from $x$.
We denote the $x$-$z$ path in $T^C$ by $P^{xz}$ and the links in $S\cap P^{xz}$ by $\ell_1, \dots, \ell_l$ in the order they appear on the path $P^{xz}$ (starting from the link closest to $x$).
We denote the endpoints of the link $\ell_i$ by $u_i$ and $v_i$, where $u_i$ is the endpoint closer to $x$ in the path $P^{xz}$.

An \emph{alternating $x$-$z$-trail} is a trail $P\subseteq L$ in $G^C$ that has the following properties:

\begin{itemize}\itemsep0pt
\item $P$ starts in $x$.
\item $P$ ends in the vertex $z\ne x$ in $T^C$.
\item $P$ consists alternatingly of 
\begin{itemize}
\item a path for which all interior vertices are not contained in the tree $T^C$, and 
\item a link $\ell_i =\{u_i,v_i\}$ for some $i\in\{1,\dots,l\}$, where $P$ visits $v_i$ before $u_i$.
\end{itemize}

\item $P$ visits every vertex outside of $T^C$ at most once.
\item The links in $P\cap P^x$ are visited in an order of increasing distance from $x$, i.e., for $i<j$ with $\ell_i,\ell_j \in P$, the link $\ell_i$ appears before $\ell_j$ on $P$.
\end{itemize}
\end{definition}
Note that the last link of an alternating $x$-$z$ trail $P$ is always a link in $L\setminus P^{xz}$.
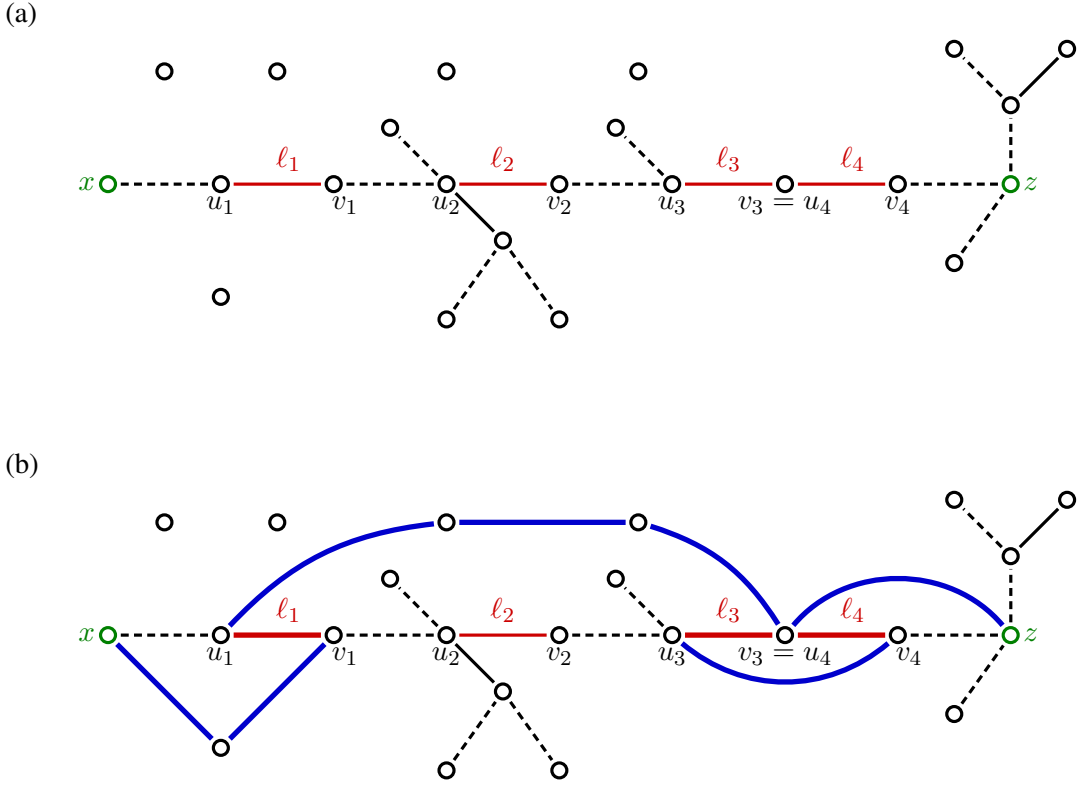
\begin{figure}[h]
\begin{center}
\begin{tikzpicture}[scale=1.5]

\tikzset{vertex/.style={draw=black, very thick, circle,minimum size=0pt, inner sep=2pt, outer sep=2pt}
}

\node[right] () at (-1,1.5) {(a)};

\begin{scope}[every node/.style={vertex}]

\node[green!50!black] (0) at (0,0) {};
\node (1) at (1,0) {};
\node (2) at (2,0) {};
\node (3) at (3,0) {};
\node (4) at (4,0) {};
\node (5) at (5,0) {};
\node (6) at (6,0) {};
\node (7) at (7,0) {};
\node[green!50!black] (8) at (8,0) {};

\node (3a) at (2.5,0.5) {};
\node (3b) at (3.5,-0.5) {};
\node (3c) at (3,-1.2) {};
\node (3d) at (4,-1.2) {};
\node (5a) at (4.5,0.5) {};
\node (8a) at (7.5, -0.7) {};
\node (8b) at (8, 0.7) {};
\node (8c) at (7.5, 1.2) {};
\node (8d) at (8.5, 1.2) {};

\node (a) at (1,-1) {};
\node (b) at (3,1) {};
\node (c) at (4.7,1) {};
\node (d) at (0.5,1) {};
\node (e) at (1.5,1) {};

\end{scope}

\node[left=1pt, green!50!black] (x) at (0) {$x$};
\node[right=1pt, green!50!black] (z) at (8) {$z$};

\begin{scope}[red!80!black, very thick]
\draw (2) to (1);
\draw (3) -- (4);
\draw (6) to (5);
\draw (7) to (6);
\end{scope}

\node[above=1pt, red!80!black] () at (1.6,0) {$\ell_1$};
\node[above=1pt, red!80!black] () at (3.5,0) {$\ell_2$};
\node[above=1pt, red!80!black] () at (5.5,0) {$\ell_3$};
\node[above=1pt, red!80!black] () at (6.6,0) {$\ell_4$};

\begin{scope}[very thick]
\draw (8b) -- (8d);
\draw (3) -- (3b);
\end{scope}

\node[below=1pt] () at (1) {$u_1$};
\node[below=1pt] () at (2.1,0) {$v_1$};
\node[below=1pt] () at (3) {$u_2$};
\node[below=1pt] () at (4) {$v_2$};
\node[below=1pt] () at (5) {$u_3$};
\node[below=1pt] () at (6) {$v_3 = u_4$};
\node[below=1pt] () at (7) {$v_4$};

\begin{scope}[very thick, densely dashed]
\draw (0) -- (1);
\draw (2) -- (3);
\draw (3) -- (3a);
\draw (3c) -- (3b) -- (3d);
\draw (5) -- (4);
\draw (5) -- (5a);
\draw (7) -- (8);
\draw (8a) -- (8) -- (8b) -- (8c);
\end{scope}

\begin{scope}[shift={(0, -4)}]

\node[right] () at (-1,1.5) {(b)};

\begin{scope}[every node/.style={vertex}]

\node[green!50!black] (0) at (0,0) {};
\node (1) at (1,0) {};
\node (2) at (2,0) {};
\node (3) at (3,0) {};
\node (4) at (4,0) {};
\node (5) at (5,0) {};
\node (6) at (6,0) {};
\node (7) at (7,0) {};
\node[green!50!black] (8) at (8,0) {};

\node (3a) at (2.5,0.5) {};
\node (3b) at (3.5,-0.5) {};
\node (3c) at (3,-1.2) {};
\node (3d) at (4,-1.2) {};
\node (5a) at (4.5,0.5) {};
\node (8a) at (7.5, -0.7) {};
\node (8b) at (8, 0.7) {};
\node (8c) at (7.5, 1.2) {};
\node (8d) at (8.5, 1.2) {};

\node (a) at (1,-1) {};
\node (b) at (3,1) {};
\node (c) at (4.7,1) {};
\node (d) at (0.5,1) {};
\node (e) at (1.5,1) {};

\end{scope}

\node[left=1pt, green!50!black] (x) at (0) {$x$};
\node[right=1pt, green!50!black] (z) at (8) {$z$};

\begin{scope}[blue!80!black, line width=2pt]
\draw (0) to (a);
\draw (a) to (2);
\draw[bend left=20] (1) to (b);
\draw (b) to (c);
\draw[bend left=20] (c) to (6);
\draw[bend right=40] (5) to (7);
\draw[bend left=50] (6) to (8);
\end{scope}

\begin{scope}[red!80!black, very thick]
\draw[line width=2pt] (2) to (1);
\draw (3) -- (4);
\draw[line width=2pt] (6) to (5);
\draw[line width=2pt] (7) to (6);
\end{scope}

\node[above=1pt, red!80!black] () at (1.6,0) {$\ell_1$};
\node[above=1pt, red!80!black] () at (3.5,0) {$\ell_2$};
\node[above=1pt, red!80!black] () at (5.5,0) {$\ell_3$};
\node[above=1pt, red!80!black] () at (6.6,0) {$\ell_4$};

\begin{scope}[very thick]
\draw (8b) -- (8d);
\draw (3) -- (3b);
\end{scope}

\node[below=1pt] () at (1) {$u_1$};
\node[below=1pt] () at (2.1,0) {$v_1$};
\node[below=1pt] () at (3) {$u_2$};
\node[below=1pt] () at (4) {$v_2$};
\node[below=1pt] () at (5) {$u_3$};
\node[below=1pt] () at (6) {$v_3 = u_4$};
\node[below=1pt] () at (7.1,0) {$v_4$};

\begin{scope}[very thick, densely dashed]
\draw (0) -- (1);
\draw (2) -- (3);
\draw (3) -- (3a);
\draw (3c) -- (3b) -- (3d);
\draw (5) -- (4);
\draw (5) -- (5a);
\draw (7) -- (8);
\draw (8a) -- (8) -- (8b) -- (8c);
\end{scope}

\end{scope}%

\end{tikzpicture}
\end{center}
\caption{
Picture (a) shows the graph $G^C$ with a leaf $x$ of $T^C$ and a vertex $z$ in $T^C$.
Picture (b) shows an alternating $x$-$z$ trail (bold edges).
Solid edges are links from $L$, dashed edges are edges of the forest $F$.
Links in $S\cap P^{xz}$ are red; edges in $P\setminus P^{xz}$ are shown in blue.
\label{fig:alternating_trail}
}
\end{figure}
We will find an alternating trail $P$ that start in $x$ and ends in a vertex $z$ of $P^x$, where $z$ is as far away from $x$ (in $T^C$) as possible.
Then we will \emph{augment} $S$ along the alternating trail $P$.
See Figure~\ref{fig:augmenting_path} for an example. 

\begin{definition}[augmenting along an alternating trail]
If we \emph{augment} the current partial solution $S\subseteq L$ of links by an alternating $x$-$z$ trail $P$, this means that we replace $S$ by $S\triangle P$.
In other words, we remove the links in $P\cap S = P\cap P^{xz}$ from $S$ and add the links in $P\setminus S = P\setminus P^{xz}$ to $S$.
\end{definition}

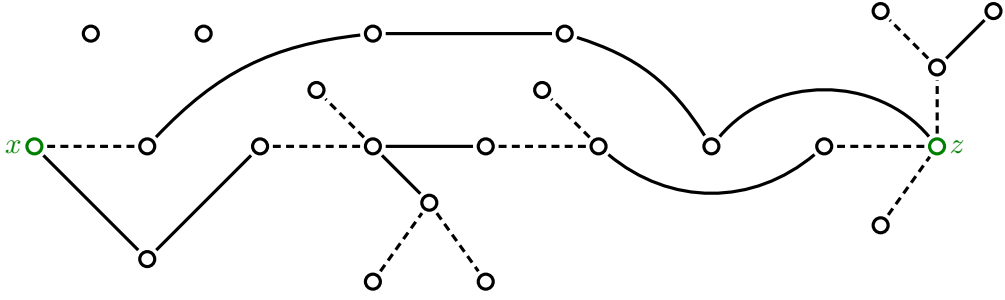
\begin{figure}[h]
\begin{center}
\begin{tikzpicture}[scale=1.5]

\tikzset{vertex/.style={draw=black, very thick, circle,minimum size=0pt, inner sep=2pt, outer sep=2pt}
}

\begin{scope}[every node/.style={vertex}]

\node[green!50!black] (0) at (0,0) {};
\node (1) at (1,0) {};
\node (2) at (2,0) {};
\node (3) at (3,0) {};
\node (4) at (4,0) {};
\node (5) at (5,0) {};
\node (6) at (6,0) {};
\node (7) at (7,0) {};
\node[green!50!black] (8) at (8,0) {};

\node (3a) at (2.5,0.5) {};
\node (3b) at (3.5,-0.5) {};
\node (3c) at (3,-1.2) {};
\node (3d) at (4,-1.2) {};
\node (5a) at (4.5,0.5) {};
\node (8a) at (7.5, -0.7) {};
\node (8b) at (8, 0.7) {};
\node (8c) at (7.5, 1.2) {};
\node (8d) at (8.5, 1.2) {};

\node (a) at (1,-1) {};
\node (b) at (3,1) {};
\node (c) at (4.7,1) {};
\node (d) at (0.5,1) {};
\node (e) at (1.5,1) {};

\end{scope}

\node[left=1pt, green!50!black] (x) at (0) {$x$};
\node[right=1pt, green!50!black] (z) at (8) {$z$};

\begin{scope}[very thick]
\draw (0) to (a);
\draw (a) to (2);
\draw[bend left=20] (1) to (b);
\draw (b) to (c);
\draw[bend left=20] (c) to (6);
\draw[bend right=40] (5) to (7);
\draw[bend left=50] (6) to (8);
\end{scope}

\begin{scope}[ very thick]
\draw (3) -- (4);
\end{scope}

\begin{scope}[very thick]
\draw (8b) -- (8d);
\draw (3) -- (3b);
\end{scope}

\begin{scope}[very thick, densely dashed]
\draw (0) -- (1);
\draw (2) -- (3);
\draw (3) -- (3a);
\draw (3c) -- (3b) -- (3d);
\draw (5) -- (4);
\draw (5) -- (5a);
\draw (7) -- (8);
\draw (8a) -- (8) -- (8b) -- (8c);
\end{scope}

\end{tikzpicture}
\end{center}
\caption{
The result of the augmentation along the alternating trail from Figure~\ref{fig:alternating_trail} (in the graph $G^C$, i.e., before undoing contractions of 2EC-blocks and connected components of $H$).
\label{fig:augmenting_path}
}
\end{figure}

\begin{lemma}
Given a leaf $x$ and an arbitrary vertex $z \ne x$ in the tree $T^C$, we can in polynomial time either find an alternating $x$-$z$ trail or decide that no such trail exists.
\end{lemma}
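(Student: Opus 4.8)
The problem of finding an alternating $x$-$z$ trail is naturally a matching-type reachability question, so the plan is to encode it as a path-finding problem in an auxiliary directed graph and then appeal to a standard shortest/any-path algorithm. First I would set up notation exactly as in Definition~\ref{def:alternating_trail}: write $P^{xz}$ for the $x$-$z$ path in $T^C$, let $\ell_1,\dots,\ell_l$ be the links of $S\cap P^{xz}$ ordered by distance from $x$, with endpoints $u_i$ (closer to $x$) and $v_i$. The key structural observation is that an alternating trail is forced to use the links $\ell_i$ that it uses in increasing order of index, always traversing $\ell_i$ from $v_i$ to $u_i$, and in between consecutive such links it must travel along a path whose interior avoids $T^C$ entirely and whose endpoints both lie in $T^C$. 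So the trail decomposes into ``segments'': it starts at $x$, takes a $T^C$-avoiding path to some $v_{i_1}$, crosses $\ell_{i_1}$ to $u_{i_1}$, takes a $T^C$-avoiding path to some $v_{i_2}$ with $i_2 > i_1$, crosses $\ell_{i_2}$, and so on, finally taking a $T^C$-avoiding path from some $u_{i_k}$ to $z$.

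\textbf{Auxiliary graph construction.} I would build a directed graph $\mathcal{D}$ whose vertices include a source $s_x$, a sink $t_z$, and a node for each link $\ell_i$ (thought of as ``having just crossed $\ell_i$ and now sitting at $u_i$''). Put an arc from $s_x$ to $\ell_i$ if there is a path in $G^C$ from $x$ to $v_i$ whose interior avoids $T^C$; put an arc from $\ell_i$ to $\ell_j$ (for $i<j$) if there is such a path from $u_i$ to $v_j$; and put an arc from $\ell_i$ to $t_z$ if there is such a path from $u_i$ to $z$; finally an arc $s_x \to t_z$ directly if there is a $T^C$-avoiding $x$-$z$ path. Each of these ``$T^C$-avoiding path exists?'' queries is a simple reachability question in the subgraph of $G^C$ obtained by deleting the interior vertices of $T^C$ (keeping only the two designated endpoints as terminals), so all the arcs of $\mathcal{D}$ can be determined in polynomial time. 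Then an alternating $x$-$z$ trail exists if and only if there is a directed $s_x$-$t_z$ path in $\mathcal{D}$, and such a path, if it exists, can be found in polynomial time; from it one reconstructs the trail by concatenating the corresponding $T^C$-avoiding paths and the links $\ell_{i}$. The increasing-index condition is automatically enforced because $\mathcal{D}$ only has arcs $\ell_i \to \ell_j$ with $i<j$, so $\mathcal{D}$ is acyclic on the $\ell$-nodes.

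\textbf{Main obstacle.} The subtle point — and the step I expect to require the most care — is the requirement that $P$ \emph{visit every vertex outside $T^C$ at most once}, i.e.\ that the concatenation of the individual $T^C$-avoiding segments does not reuse a vertex. A priori, two different segments (two different arcs of $\mathcal{D}$ chosen on the path) could route through a common vertex outside $T^C$, violating the trail/vertex-simplicity condition. The resolution is the standard one for this kind of problem: the connected components of $G^C$ other than $C$ are contracted to singletons, so the only vertices outside $T^C$ are those singletons, and one needs to argue that an optimal (or any) choice can be made vertex-disjoint — essentially because the components of the graph obtained by removing $T^C$'s interior are small/structured enough, or by folding the disjointness into the auxiliary graph via a splitting of each outside-vertex into an in-copy and out-copy with unit capacity and solving a flow problem instead of plain reachability. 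I would handle it by the vertex-splitting/flow route: replace plain $s_x$-$t_z$ reachability in $\mathcal{D}$ by an integral flow that respects unit node capacities on the outside vertices, which still runs in polynomial time and directly yields a vertex-simple trail. Establishing that this flow formulation is equivalent to the existence of an alternating trail — both directions — is the crux of the argument; the rest is bookkeeping.
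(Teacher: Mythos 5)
Your auxiliary graph $\mathcal{D}$ is exactly the one the paper builds, and you correctly isolate vertex-disjointness of the expanded segments as the only nontrivial point. However, the fix you propose is not worked out and, as stated, does not parse: you say you would ``replace plain $s_x$-$t_z$ reachability in $\mathcal{D}$ by an integral flow that respects unit node capacities on the outside vertices,'' but the outside vertices are not nodes of $\mathcal{D}$ --- they live inside the edges of $\mathcal{D}$ (each arc of $\mathcal{D}$ abstracts away an entire $T^C$-avoiding path). To impose node capacities you would have to build a different auxiliary graph whose nodes are the outside vertices together with $x,z,u_1,v_1,\dots,u_l,v_l$. But in such a graph the increasing-order requirement on the $\ell_i$'s (item in Definition~\ref{def:alternating_trail}: for $i<j$ with $\ell_i,\ell_j\in P$, $\ell_i$ appears first) is no longer automatically enforced the way it is in $\mathcal{D}$, where arcs only go from lower to higher index. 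A path in the vertex-level graph could, for example, go $x \to v_3 \to u_3 \to v_1 \to u_1 \to z$ while satisfying all node capacities, yet this is not an alternating trail. Encoding a global ordering constraint on which gate-links are used is not a standard flow constraint, and you do not address it. So there is a real gap between your sketch and a proof.

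The paper resolves the disjointness issue without any flow machinery, using an argument you missed: compute a \emph{shortest} $s_x$-$t_z$ path $\overline{P}$ in $\mathcal{D}$, where length is the number of arcs. Then argue by contradiction that the $T^C$-avoiding paths realizing distinct arcs of $\overline{P}$ are automatically internally vertex-disjoint. If the segments realizing arcs $(\ell_i,\ell_j)$ and $(\ell_p,\ell_q)$ (appearing in this order on $\overline{P}$, so $i<j\le p<q$) shared an interior vertex, one could splice them to obtain a $u_i$-$v_q$ path whose interior avoids $T^C$; hence the arc $(\ell_i,\ell_q)$ exists in $\mathcal{D}$ and shortcuts $\overline{P}$, contradicting minimality. (The same splice works when an endpoint arc $(s_x,\ell_i)$ or $(\ell_i,t_z)$ is involved.) This shortest-path-plus-shortcut observation is the actual content of the lemma's proof and is what your write-up is missing.
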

\begin{proof}
As in Definition~\ref{def:alternating_trail}, we denote the $x$-$z$ path in $T^C$ by $P^{xz}$ and the links in $S\cap P^{xz}$ by $\ell_1, \dots, \ell_l$ in the order they appear on the path $P^{xz}$.
Also, we again denote the endpoints of the link $\ell_i$ by $u_i$ and $v_i$, where $u_i$ is the endpoint closer to $x$ in the path $P^{xz}$.
We construct a directed auxiliary graph with vertex set $\{x, \ell_1,\dots,\ell_l,z\}$ and arcs
\begin{itemize}\itemsep0pt
\item $(\ell_i,\ell_j)$ if $i < j$ and there exists an $u_i$-$v_j$ path in $G^C$ for which all interior vertices are not contained in $T^C$;
\item $(x, \ell_i)$ if there exists an $x$-$v_i$ path in $G^C$ for which all interior vertices are not contained in $T^C$;
\item $(\ell_i,z)$ if  there exists an $u_i$-$z$ path in $G^C$ for which all interior vertices are not contained in $T^C$.
\end{itemize}
Then, by the definition of alternating trails, every alternating $x$-$z$ trail corresponds to a directed $x$-$z$ path in this auxiliary graph.
In particular, if an alternating $x$-$z$ trail exist, $z$ is reachable from $x$ in the directed auxiliary graph.
To find such an alternating trail, we compute a shortest $x$-$z$ path $\overline{P}$ in the auxiliary graph (where shortest path means one with the minimum number of arcs).

We construct $P$ from $\overline{P}$ by replacing every arc $(\ell_i, \ell_j)$ of $\overline{P}$ by a $u_i$-$v_j$ path whose internal vertices are not contained in $T^C$, and similarly replacing arcs $(x,\ell_i)$ and $(\ell_i,z)$ by $x$-$v_i$ and $u_i$-$z$ paths, respectiviely.
We claim that $P$ is an alternating $x$-$z$ trail.
To this end, we need to show that paths corresponding to different arcs of $\overline{P}$ are vertex disjoint, i.e., they have no common interior vertex.
This follows from the fact that $\overline{P}$ is a shortest $x$-$z$ path as one can see as follows.

Suppose there are arcs $(\ell_i,\ell_j)$ and $(\ell_p,\ell_q)$ in $\overline{P}$, appearing without loss of generality in this order on $\overline{P}$, where 
the $u_i$-$v_j$ path and the $u_p$-$v_q$ path corresponding to these arcs are not vertex disjoint.
This would imply that $v_q$ is reachable from $u_i$ by a path whose internal vertices are not part of $T^C$.
Hence, the auxiliary graph contains an arc $(\ell_i,\ell_q)$.
This is a contradiction to $\overline{P}$ being a shortest $x$-$z$ path because we could use the arc $(\ell_i,\ell_q)$ to shortcut it.
A similar argument leads to a contradiction also if one (or both) of the arcs are of the form $(x,\ell_i)$ or $(\ell_i,z)$.
\end{proof}

The next lemma shows that when we augment $S$ long an alternating $x$-$z$ trail $P$, we merge all vertices and 2EC-blocks visited by $P^{xz}$ into a single 2-edge-connected component, which also contains at least one vertex from each of the connected components of $H$ that correspond to the vertices of $P$ outside $T^C$.

\begin{lemma}\label{lem:new_component_bridge_covering}
$P \triangle P^{xz}$ is the edge set of a cycle in $G^C$ containing all vertices visited by $P$ or $P^{xz}$.
\end{lemma}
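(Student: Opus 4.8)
The plan is to unravel the alternating trail $P$ and the tree path $P^{xz}$ into pieces, compute their symmetric difference piece by piece, and recognise the result as a single cycle. Let $\ell_{i_1}, \dots, \ell_{i_k}$ (with $i_1 < \dots < i_k$) be the links of $P\cap P^{xz}$, listed in the order in which $P$ traverses them; by Definition~\ref{def:alternating_trail} this is also the order of increasing distance from $x$ along $P^{xz}$. Then $P$ decomposes as an alternating concatenation $\pi_0, \ell_{i_1}, \pi_1, \ell_{i_2}, \dots, \ell_{i_k}, \pi_k$, where $\pi_0$ is an $x$--$v_{i_1}$ path, $\pi_j$ is a $u_{i_j}$--$v_{i_{j+1}}$ path for $1\le j\le k-1$, $\pi_k$ is a $u_{i_k}$--$z$ path, and every $\pi_j$ has all of its interior vertices outside $T^C$ (if $k=0$ this just says $P=\pi_0$ is an $x$--$z$ path with interior outside $T^C$). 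Symmetrically, on $P^{xz}$ the vertices $u_{i_1}, v_{i_1}, u_{i_2}, v_{i_2}, \dots, u_{i_k}, v_{i_k}$ occur in this order (each $u_{i_j}$ before $v_{i_j}$), so deleting the edges $\ell_{i_1}, \dots, \ell_{i_k}$ from $P^{xz}$ splits it into $k+1$ subpaths $Q_0, \dots, Q_k$ lying inside $T^C$, where $Q_0$ joins $x$ to $u_{i_1}$, $Q_j$ joins $v_{i_j}$ to $u_{i_{j+1}}$ for $1\le j\le k-1$, and $Q_k$ joins $v_{i_k}$ to $z$ (some of these $Q_j$ may be single vertices).

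First I would argue that $P\cap P^{xz}$ is exactly $\{\ell_{i_1},\dots,\ell_{i_k}\}$, so that no edge of a piece $\pi_j$ lies on $P^{xz}$: any edge of $\pi_j$ either joins a vertex of $T^C$ to a vertex outside $T^C$ (hence is not an edge of $P^{xz}\subseteq T^C$), or, in case $\pi_j$ is a single link between two vertices of $T^C$, those two vertices are non-adjacent on the simple path $P^{xz}$ (there is always one of the $u_{i_\bullet},v_{i_\bullet}$ strictly between them), so again the link is not an edge of $P^{xz}$. Consequently
\[
P\triangle P^{xz}\;=\;(P\setminus P^{xz})\,\cup\,(P^{xz}\setminus P)\;=\;\Big(\bigcup_{j=0}^{k}\pi_j\Big)\cup\Big(\bigcup_{j=0}^{k}Q_j\Big),
\]
and these $2k+2$ paths are pairwise internally vertex-disjoint: the $\pi_j$ because a trail visits each vertex outside $T^C$ at most once, the $Q_j$ because they are subpaths of a simple path, and a $\pi_j$ cannot meet a $Q_{j'}$ internally since one set of interior vertices lies outside $T^C$ and the other inside. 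Their only possibly shared vertices are among $x,z,u_{i_1},v_{i_1},\dots,u_{i_k},v_{i_k}$.

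Next I would verify that in the graph $\big(V',\,P\triangle P^{xz}\big)$, where $V'$ is the set of all vertices visited by $P$ or by $P^{xz}$, every vertex has degree exactly $2$. Interior vertices of the $\pi_j$ and of the $Q_j$ clearly have degree $2$. At a shared vertex exactly two pieces meet: $x$ is a left endpoint of $\pi_0$ and of $Q_0$; $z$ is a right endpoint of $\pi_k$ and of $Q_k$; each $u_{i_j}$ is a left endpoint of $\pi_j$ and a right endpoint of $Q_{j-1}$; and each $v_{i_j}$ is a right endpoint of $\pi_{j-1}$ and a left endpoint of $Q_j$. (The degenerate coincidences $x=u_{i_1}$, $z=v_{i_k}$, or $v_{i_j}=u_{i_{j+1}}$ merely identify some of these vertices and do not change the count.) Hence $P\triangle P^{xz}$ is a disjoint union of cycles on $V'$. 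Finally, to see that it is a single cycle I would exhibit the closed walk that starts at $x$, alternately follows $Q_0,\pi_1,Q_2,\pi_3,\dots$ until it first reaches $z$ (via $Q_k$ if $k$ is even, via $\pi_k$ if $k$ is odd), and then continues from $z$, alternately following the remaining pieces $\pi_k$ (resp.\ $Q_k$), $Q_{k-1}$ (resp.\ $\pi_{k-1}$), $\dots$ in decreasing index order back to $x$ along $\pi_0$. Each $\pi_j$ and each $Q_j$ is then used exactly once, so this closed walk is a Hamiltonian cycle of $\big(V',P\triangle P^{xz}\big)$. Since $V'$ is precisely the union of the vertices visited by $P$ and those visited by $P^{xz}$, and all these edges are edges of $G^C$, the claim follows.

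The main obstacle is not depth but bookkeeping: pinning down the decomposition with the correct index shifts, checking that $P\cap P^{xz}$ contains no stray link coming from the $\pi_j$, and — the only genuinely structural point — arguing that the symmetric difference is connected rather than a union of several cycles. That connectivity relies crucially on the ordering condition $i_1<\dots<i_k$ built into the definition of an alternating trail, which is exactly what makes the subpaths $Q_j$ of $P^{xz}$ well defined and makes the alternating closed walk close up on itself.
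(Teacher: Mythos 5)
Your proof is correct and takes a genuinely different, more constructive route than the paper's. The paper argues abstractly in three short steps: every vertex visited by $P$ or $P^{xz}$ has even degree in $P\triangle P^{xz}$ (parity is preserved under symmetric difference of the two $x$-$z$ trails), no vertex has degree exceeding $2$ (whenever $P$ visits an interior vertex of $P^{xz}$, it does so via an incident link $\ell_i\in P\cap P^{xz}$, which then cancels), and connectivity follows because every vertex visited by $P$ off $P^{xz}$ is connected inside $P\setminus P^{xz}$ to some vertex of $P^{xz}$, while every vertex of $P^{xz}$ other than $x$ is connected in $P\triangle P^{xz}$ to a vertex strictly closer to $x$. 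You instead explicitly decompose $P$ and $P^{xz}$ into the $2k+2$ pieces $\pi_0,\dots,\pi_k$ and $Q_0,\dots,Q_k$ separated by the shared links $\ell_{i_1},\dots,\ell_{i_k}$, observe that $P\triangle P^{xz}$ is the union of these pieces glued only at their shared endpoints, and then display the Hamiltonian closed walk $Q_0,\pi_1,Q_2,\dots$ explicitly. Your version gives a more concrete picture of the resulting cycle at the cost of more bookkeeping; the paper's degree-parity argument is shorter and handles multiple visits of $P$ to a vertex of $P^{xz}$ transparently.

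One small point to tighten in your argument: your stated reason that a length-one $\pi_j$ cannot coincide with an edge of $P^{xz}$, namely that ``there is always one of the $u_{i_\bullet},v_{i_\bullet}$ strictly between them,'' does not literally hold for the degenerate cases $x=u_{i_1}$ (for $\pi_0$) or $z=v_{i_k}$ (for $\pi_k$), where the endpoints of $\pi_0$ or $\pi_k$ really are adjacent on $P^{xz}$. The conclusion survives nonetheless: if in that case $\pi_0$ were the edge $\{u_{i_1},v_{i_1}\}=\ell_{i_1}$ of $P^{xz}$, the trail $P$ would traverse $\ell_{i_1}$ twice (once as $\pi_0$ and once as the first alternating link), violating the trail property; and since $G^C$ is a multigraph, $\pi_0$ could alternatively be a parallel link between the same two vertices, but that is not an edge of the tree $T^C$ and hence not on $P^{xz}$. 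This is an easy patch, not a structural flaw, but the justification as written has a hole exactly there.
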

\begin{proof}
First, we show that all vertices visited by $P$ or $P^{xz}$ have even degree in $P\triangle P^{xz}$.
Indeed, because all such vertices have even degree in the disjoint union of the two $x$-$z$ trails $P$ and $P^{xz}$, they also have even degree in the symmetric difference $P\triangle P^{xz}$. 
(Here we use that $P\triangle P^{xz}$ arises from the disjoint union by removing pairs of parallel edges.)
Moreover, no vertex has degree larger than $2$ in $P\triangle P^{xz}$ because whenever an interior vertex of $P^{xz}$ is visited by $P$, the trail $P$ uses a link $\ell_i \in P\cap P^{xz}$ that is incident to this vertex.
We conclude that all vertices visited by $P$ or $P^{xz}$ have degree $0$ or $2$ in $P\cap P^{xz}$.
Hence, it remains to show that all these vertices are connected in $P\cap P^{xz}$.

To see this, we first observe that every vertex visited by $P$ but not by $P^{xz}$ is connected to some vertex visited by $P^{xz}$ in $P\setminus P^{xz} \subseteq P\triangle P^{xz}$ (by the definition of an alternating trail).
Hence it suffices to shows that every vertex $v$ on $P^{xz}$, except for $x$, is connected to another vertex $v'$ on $P^{xz}$ that is closer to $x$ (on $P^{xz}$). 
This is indeed the case because either $v=v_j$ for some $j\in\{1,\dots,l\}$ with $\ell_i\in P$, in which case a path in $P\setminus P^{xz}$ that precedes $v$ in $P$ connects $v$ to a vertex $v' = u_i$ with $i<j$, or the edge $\{v',v\}$ preceding $v$ on the path $P^{xz}$ (viewed as a path starting at $x$) is contained in $P \triangle P^{xz}$.
\end{proof}

The next lemma will be crucial to prove that Invariant~\ref{invariant:credits} is maintained when augmenting along $P$.
Informally speaking, we use it to show that $\cred(H)$ decreases sufficiently (to pay for the increase of $|S|$) for one of the following reasons:
\begin{itemize}\itemsep0pt
\item our alternating trail reaches a leaf of $T^C$, in which case the required number of credits in $H$ decreases due to either \ref{item:leaf_credits} or \ref{item:2ec_component_credits};
\item a vertex of degree at least $3$ in $T^C$ gets merged into another 2-edge-connected component; this vertex must correspond to a 2EC-block by Invariant~\ref{invariant:lonely_vertices}; thus it had $1$ credit (by \ref{item:2ec_component_credits}) before it was merged;
\item we merge bridges in $D\cap P^{xz}$ or lonely vertices on $P^{xz}$ into a nontrivial 2-edge-connected component, leading to a decrease of the credits in $H$ required by \ref{item:bridge_credits} and \ref{item:lonely_credits}.
\end{itemize}
We remark that the statement of the below lemma is nontrivial only for $q\in\{0,1\}$.
Moreover, we highlight that the vertex set $W$ of the path $P^{xz}$ can contain both original vertices from $V$ and vertices arising from the contraction of a 2EC-block of $C$.

\begin{lemma}\label{lem:opt_credits_lower_bound}
Let $z$ be a vertex that is furthest away from $x$ in $T^C$ among all vertices reachable from $x$ by an alternating trail. Let $P^{xz}$ be the $x$-$z$ path in $T^C$ and let $W$ be the vertex set of $P^{xz}$.
Then at least one of the following holds:
\begin{itemize}\itemsep0pt
\item $z$ is a leaf of $T^C$;
\item $W\setminus \{x\}$ contains a vertex with degree at least $3$ in $T^C$;
\item we have
\[
\sum_{w\in W\setminus \{x\}} \big(|\delta_{F\cup \OPT}(v)| -2\big)\ \ge\  2-q
\]
where $q$ is the number of links $\ell\in P^{xz} \cap S$.
\end{itemize}
\end{lemma}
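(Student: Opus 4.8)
The plan is to prove the contrapositive of a structural dichotomy: assume that $z$ is not a leaf of $T^C$ and that no interior vertex of $P^{xz}$ (i.e., no vertex in $W\setminus\{x\}$) has degree $\ge 3$ in $T^C$; then I will show that $\sum_{w\in W\setminus\{x\}}(|\delta_{F\cup\OPT}(w)|-2)\ge 2-q$. Under these assumptions, $z$ is an internal vertex of $T^C$ of degree exactly $2$, and every vertex of $W\setminus\{x,z\}$ also has degree exactly $2$ in $T^C$; so $W$ induces a path in $T^C$ that can be extended beyond $z$ by at least one more tree-edge. The key idea is that $\OPT$, being a feasible FAP solution, must cover every bridge of $H$; in particular $\OPT$ must cover the tree-edge of $T^C$ incident to $z$ on the side away from $x$. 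This forces some link of $\OPT$ to cross the cut separating this far side from $W$, and I want to argue that such a link contributes to the left-hand sum — unless it could have been used to build an alternating trail reaching strictly further than $z$, contradicting the maximality of $z$.

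**First I would** set up the cut structure carefully. Let $e=\{z,z'\}$ be a tree-edge of $T^C$ with $z'\notin W$ (it exists since $z$ is not a leaf and not the vertex $x$-side). Removing $e$ splits $T^C$ into the part $R$ containing $W$ and the part $\bar R$ containing $z'$. Since $(V,F\cup\OPT)$ is $2$-edge-connected and $e$ is a bridge of $H$, there is a link $\ell^*\in\OPT$ with exactly one endpoint in the (un-contracted) vertex set corresponding to $\bar R$; let its other endpoint be $b$. Now I would case on where $b$ lies. If $b$ lies in a contracted component outside $T^C$ or in the part of $T^C$ on the far side of $z'$ reachable only through $\bar R$-internal structure, then $\ell^*$ together with tree-paths gives an alternating trail from $x$ reaching a vertex strictly beyond $z$ on $P^{xz}\cup e$ — contradicting the choice of $z$. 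Hence $b$ must lie ``inside'' $W$ or at a contracted component attached to $W$; I need to track this bookkeeping so that I can conclude $\ell^*$ is counted in $\sum_{w\in W\setminus\{x\}}(|\delta_{F\cup\OPT}(w)|-2)$, and similarly locate a second such link. The ``$2-q$'' on the right is exactly the slack: each of the $q$ links $\ell\in P^{xz}\cap S$ already ``uses up'' one unit of excess degree at its two endpoints in $W$ (it contributes $2$ to $\sum_{w\in W}|\delta_{F\cup\OPT}(w)|$ only if both endpoints lie in $W\setminus\{x\}$, but links in $S$ are not in $\OPT$ necessarily — so I must be careful: the sum is over $\delta_{F\cup\OPT}$, not $\delta_{F\cup S}$). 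Re-examining: the $q$ accounts for the fact that a leaf $x$ of $T^C$ (degree $1$ in $T^C$, hence degree-deficient) sits at the start, and each link $\ell_i\in P^{xz}\cap S$ corresponds to a place where the alternating trail ``dips out and back,'' which I should relate to an $\OPT$-link being forced in.

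**The cleanest route** is probably to count, for each ``segment'' of $P^{xz}$ between consecutive links $\ell_i,\ell_{i+1}$ (and the end segments near $x$ and near $z$), the $\OPT$-edges forced to enter that segment's vertex set, using maximality of $z$ to rule out $\OPT$-links that would escape to a ``new'' vertex of $T^C$ and thereby extend the trail. A link $\ell\in\OPT$ incident to an interior vertex $w\in W\setminus\{x\}$ and not lying along $P^{xz}$ contributes $+1$ (or $+2$ if both endpoints are interior) to the sum; I want at least two such contributions, minus the $q$ ``free'' ones that the trail structure already exploited. I expect the details of this segment-by-segment charging — precisely matching where $\OPT$-links can and cannot go given the maximality of $z$ and the no-high-degree assumption — to be the main obstacle; in particular, handling the boundary cases at $x$ (a leaf of $T^C$) and at $z$, and making sure a single $\OPT$-link is not double-counted across two segments, is the part that needs the most care. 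The rest — feasibility of $\OPT$ forcing a crossing link, and translating ``a crossing link that escapes'' into ``an alternating trail reaching farther than $z$'' — follows from the definition of alternating trail (Definition~\ref{def:alternating_trail}) and Lemma~\ref{lem:new_component_bridge_covering} in a routine way.
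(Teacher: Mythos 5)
Your high-level intuition is correct — exploit $2$-edge-connectivity of $(V,F\cup\OPT)$ together with the maximality of $z$ — but the plan as written does not constitute a proof, and you acknowledge this yourself by calling the core charging argument ``the main obstacle.'' Two concrete gaps. First, you never observe the reduction that makes the case analysis tractable: since every vertex has degree at least $2$ in the $2$-edge-connected graph $(V,F\cup\OPT)$, the left-hand sum is nonnegative, so the claim is trivial whenever $q\ge 2$. This means only $q\in\{0,1\}$ need be handled, which replaces your open-ended ``segment-by-segment'' charge with two short cases. Second, your cut at the single edge $e=\{z,z'\}$ is too local: it yields at most two crossing edges, one of which may be $e$ itself if $e\in F$, and you still have to place the other endpoint(s) inside $W\setminus\{x\}$ and avoid double-counting — precisely the bookkeeping you admit you have not done.

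The paper instead applies Menger to obtain two edge-disjoint $F\cup\OPT$ paths $P_1,P_2$ from $x$ to $\overline{W}$ (all of $T^C$ beyond $W$), and uses maximality of $z$ to force the first $T^C$-vertex $\ne x$ on each $P_i$ to lie in $W$. For $q=0$ this gives three explicit edges incident to $W\setminus\{x\}$ that are not in $P^{xz}\subseteq F$ (the entering edge of one path at its first $W$-vertex, plus the leaving edges of both paths at their last $W$-vertices), yielding excess $\ge 3>2$. For $q=1$ the argument is not a count at all but a contradiction: assuming all vertices of $W$ have degree exactly $2$ in $F\cup\OPT$ forces the structure of $P_1,P_2$ so rigidly that one of them, stitched through the unique link $\{u,v\}\in P^{xz}\cap S$, produces an alternating trail reaching a vertex of $\overline{W}$, contradicting the choice of $z$. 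Your proposal does not see that $q=1$ cannot be settled by counting and requires this ``re-route through $\{u,v\}$'' step, so the argument as sketched would not close.
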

\begin{proof}

Suppose that $z$ is not a leaf of $T^C$ and $W\setminus\{x\}$ contains only vertices of degree at most $2$ in $T^C$.
Then, because $W$ is the vertex set of the $x$-$z$ path in $T^C$, all vertices in $W\setminus \{x\}$ have degree exactly $2$ in $T^C$.
See Figure~\ref{fig:illustration_lower_bound}.
Let $\overline{W}$ denote the set of vertices of $T^C$ that are not contained in $W$, i.e., that are not visited by $P^{xz}$.
We observe that all vertices in $\overline{W}$ have a larger distance from $x$ in $T^C$ than $z$.
Here, we used that $x$ is a leaf of $T^C$ and all vertices on the $x$-$z$ path $P^{xz}$ in $T^C$, which has vertex set $W$, have degree $2$ in $T^C$.

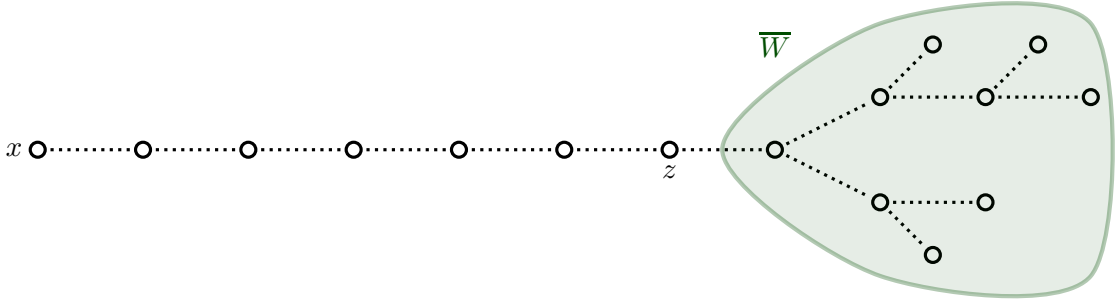
\begin{figure}[H]
\begin{center}
\begin{tikzpicture}[scale=1.4]

\tikzset{vertex/.style={fill=black,circle,minimum size=0pt, inner sep=2pt, outer sep=2pt}
}

\tikzset{block/.style={draw=black, very thick, circle,minimum size=0pt, inner sep=2pt, outer sep=2pt}
}

\begin{scope}[every node/.style={block}]
\node (1) at (1,0) {};
\node (2) at (2,0) {};
\node (3) at (3,0) {};
\node (4) at (4,0) {};
\node (5) at (5,0) {};
\node (6) at (6,0) {};
\end{scope}

\begin{scope}[every node/.style={block}]
\node (0) at (0,0) {};
\node (7) at (7,0) {};

\node (w1) at (8,0.5) {};
\node (w2) at (9,0.5) {};
\node (w3) at (10,0.5) {};
\node (w4) at (8.5,1) {};
\node (w5) at (9.5,1) {};
\node (w6) at (8,-0.5) {};
\node (w7) at (9,-0.5) {};
\node (w8) at (8.5,-1) {};
\end{scope}

\node[left=2pt] (x) at (0) {$x$};
\node[below=2pt] (z) at (6) {$z$};

\begin{scope}[very thick, dotted]
\draw (0) -- (1) -- (2) -- (3) -- (4) -- (5) -- (6) -- (7);

\draw (7) -- (w1) -- (w2) -- (w3);
\draw (w1) -- (w4);
\draw (w2) -- (w5);
\draw (7) -- (w6) -- (w7);
\draw (w6) -- (w8);
\end{scope}

\draw[ultra thick, green!30!black, fill, draw, draw opacity=0.3, fill opacity=0.1] plot [smooth cycle] coordinates { (6.5,0) (8,1.2) (10,1.2) (10,-1.2) (8,-1.2)};

\node[green!30!black] () at (7,1) {$\overline{W}$};

\end{tikzpicture}
\end{center}
\caption{Illustration of $T^C$ in the proof of Lemma~\ref{lem:opt_credits_lower_bound}.
\label{fig:illustration_lower_bound}
}
\end{figure}

Because $(V,\OPT \cup F)$ is 2-edge-connected, it contains two edge-disjoint paths $P_1,P_2$ from $x$ to $\overline{W}$.
Each of these paths visits at least one vertex in $W\setminus \{x\}$ due to the following. 
Let $i\in\{1,2\}$ and let $u$ be the first vertex on $P_i$ that is distinct from $x$ and contained in $T^C$; this vertex exists because the last vertex of $P_i$ is an element of $\overline{W}$.
Then the subpath of $P_i$ from $x$ to $u$ is an alternating trail.
By the choice of $z$, this implies $u \in W$.

For $q\ge 2$, the statement of the lemma is trivial and thus it suffices to distinguish the following two cases.
\smallskip

\noindent \textbf{Case 1:} $q=0$
\smallskip

One of the edge-disjoint paths $P_1,P_2$, say $P_1$, does not contain the unique edge incident to $x$ in $P^{xz}$. 
Let $u_1$ be the first vertex in $W$ visited by $P_1$ and let $v_1, v_2$ be the last vertices in $W$ visited by $P_1,P_2$, respectively.
See Figure~\ref{fig:case_1_lower_bound_lemma}.

\begin{figure}[h]
\begin{center}
\begin{tikzpicture}[scale=1.5]

\tikzset{vertex/.style={fill=black,circle,minimum size=0pt, inner sep=2pt, outer sep=2pt}
}

\tikzset{block/.style={draw=black, very thick, circle,minimum size=0pt, inner sep=2pt, outer sep=2pt}
}

\begin{scope}[every node/.style={block}]
\node (1) at (1,0) {};
\node (2) at (2,0) {};
\node (3) at (3,0) {};
\node (4) at (4,0) {};
\node (5) at (5,0) {};
\node (6) at (6,0) {};
\end{scope}

\begin{scope}[every node/.style={block}]
\node (0) at (0,0) {};
\node[draw=none] (7) at (7,0) {};
\node (a1) at (1.3,1) {};
\node (a2) at (2.7,1) {};
\node (b1) at (3.5,-1) {};
\node (b2) at (5,-1) {};
\node (b3) at (6.5,-1) {};
\end{scope}

\node[left=2pt] (x) at (0) {$x$};
\node[below=2pt] (z) at (6) {$z$};
\node[above=2pt] (v1) at (6) {$v_1$};
\node[above=2pt] (v2) at (2) {$v_2$};
\node[below=2pt] (u1) at (4) {$u_1$};

\begin{scope}[very thick]
\draw[red] (0) -- (a1) -- (a2) -- (4);
\draw[blue] (2) -- (b1) -- (b2) -- (b3) -- (7.7,-0.5);
\end{scope}

\begin{scope}[very thick, densely dashed]
\draw (2) -- (3) -- (4);
\draw[red] (4) -- (5) -- (6);
\draw[red, dotted] (6) -- (7);
\draw[blue] (0) -- (1) -- (2);
\end{scope}

\draw[ultra thick, green!30!black, fill, draw, draw opacity=0.3, fill opacity=0.1] plot [smooth cycle] coordinates { (6.5,0) (8,1.2) (10,1.2) (10,-1.2) (8,-1.2)};

\node[green!30!black] () at (7,1) {$\overline{W}$};
\node[red] (p1) at (3.5,0.8) {$P_1$};
\node[blue] (p2) at (2.5,-0.6) {$P_2$};

\end{tikzpicture}
\end{center}
\caption{Example of Case 1.
Dashed edges are part of the forest $F$, while solid edges are links, i.e., elements of $L$. Dotted edges can be any edges in $F\cup L$.
\label{fig:case_1_lower_bound_lemma}
}
\end{figure}
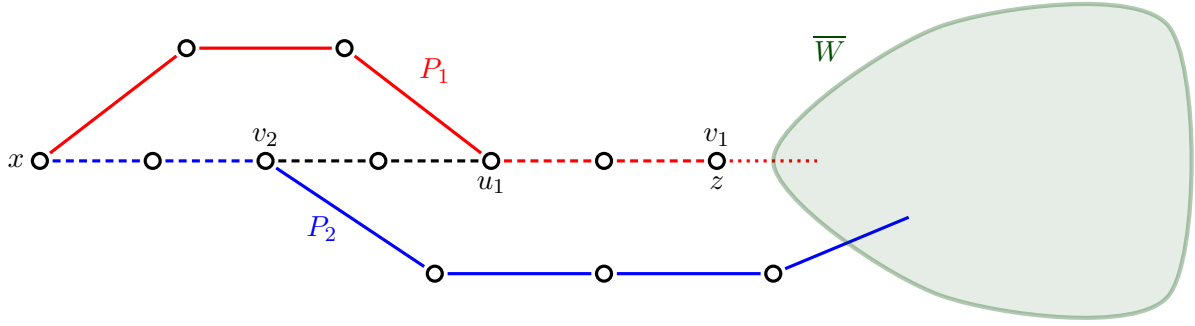

Then the edge $e_0$ by which $P_1$ enters $u_1$ is not contained in $P^{xz}$; here we used that $P_1$ does not contain the unique edge incident to $x$ in $P^{xz}$. 
Moreover, because $P_1$ and $P_2$ end in $\overline{W}$, for each $i\in\{1,2\}$ the vertex $v_i \in W$ is not the endpoint of $P_i$ and thus there is an edge $e_i\in P_i$ by which $P_i$ leaves $v_i$. 
By the definition of $v_i$, this edge is not contained in $P^{xz}$.
Using that $q=0$ implies $P^{xz}\subseteq F$, we conclude
\[
\sum_{v\in W\setminus \{x\}} \big(|\delta_{F\cup \OPT}(v)| -2\big) \ \ge\ 3 + \sum_{v\in W\setminus \{x\}} \big(|\delta_{P^{xz}}(v)| -2\big) \ge 2 \ =\ 2-q.
\]

\noindent \noindent \textbf{Case 2:} $q=1$
\smallskip

Let $\{u,v\}$ be the unique link in $P^{xz}\cap S$, where without loss of generality $u$ is closer to $x$ in $P^{xz}$.
We assume $|\delta_{F\cup \OPT}(v')|=2$ for all $v'\in W$ and derive a contradiction.
Because $P^{xz} \setminus \{u,v\} \subseteq F$, this assumption implies that if a path $P_i$ (with $i\in \{1,2\}$) visits the vertex $u$, then the $x$-$u$ subpath of $P^{xz}$ is contained in $P_i$.
Similarly, if a path $P_i$ (with $i\in \{1,2\}$) visits $v$, then the $v$-$z$ subpath of $P^{xz}$ is contained in $P_i$.
Moreover, the assumption also implies that every vertex $w\in W\setminus \{x,z\}$ is visited by at most one of the paths $P_1,P_2$.
Combining these observations with the fact that each of the paths $P_1,P_2$ visits at least one vertex in $W$, we obtain the following.
One of the paths $P_1,P_2$, say $P_1$ starts with the $x$-$u$ subpath of $P^{xz}$ and then visits no further vertices in $W\setminus \{z\}$;
the other path, $P_2$, contains the $v$-$z$ subpath of $P^{xz}$ and does not visit any vertex in $W\setminus \{x\}$ before $v$.
See Figure~\ref{fig:case_2_lower_bound_lemma}.

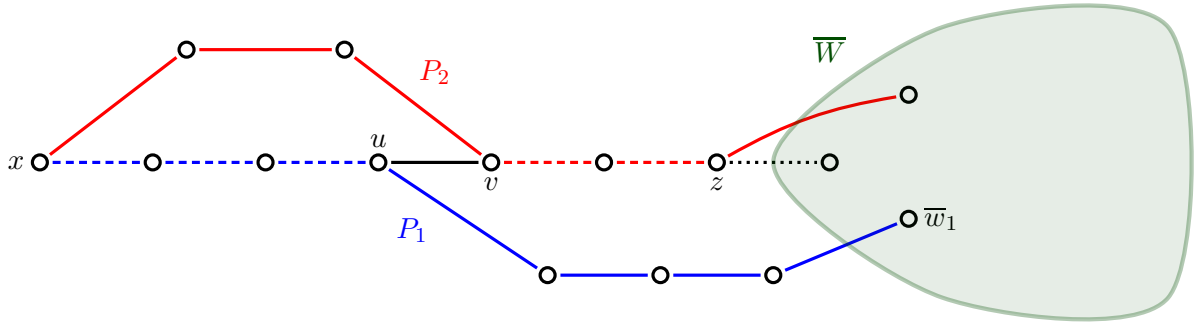
\begin{figure}[h]
\begin{center}
\begin{tikzpicture}[scale=1.5]

\tikzset{vertex/.style={fill=black,circle,minimum size=0pt, inner sep=2pt, outer sep=2pt}
}

\tikzset{block/.style={draw=black, very thick, circle,minimum size=0pt, inner sep=2pt, outer sep=2pt}
}

\begin{scope}[every node/.style={block}]
\node (1) at (1,0) {};
\node (2) at (2,0) {};
\node (3) at (3,0) {};
\node (4) at (4,0) {};
\node (5) at (5,0) {};
\node (6) at (6,0) {};
\end{scope}

\begin{scope}[every node/.style={block}]
\node (0) at (0,0) {};
\node (7) at (7,0) {};
\node (w) at (7.7,-0.5) {};
\node (w2) at (7.7,0.6) {};

\node (a1) at (1.3,1) {};
\node (a2) at (2.7,1) {};
\node (b1) at (4.5,-1) {};
\node (b2) at (5.5,-1) {};
\node (b3) at (6.5,-1) {};
\end{scope}

\node[left=2pt] (x) at (0) {$x$};
\node[below=2pt] (z) at (6) {$z$};

\node[above=2pt] (v2) at (3) {$u$};
\node[below=2pt] (u1) at (4) {$v$};
\node[right=2pt] () at (w) {$\overline{w}_1$};

\begin{scope}[very thick]
\draw[red] (0) -- (a1) -- (a2) -- (4);
\draw[red, bend left=10] (6) to (w2);
\draw[blue] (3) -- (b1) -- (b2) -- (b3) -- (w);
\draw (3) -- (4);
\end{scope}

\begin{scope}[very thick, densely dashed]
\draw[red] (4) -- (5) -- (6);
\draw[dotted] (6) -- (7);
\draw[blue] (0) -- (1) -- (2) -- (3);
\end{scope}

\draw[ultra thick, green!30!black, fill, draw, draw opacity=0.3, fill opacity=0.1] plot [smooth cycle] coordinates { (6.5,0) (8,1.2) (10,1.2) (10,-1.2) (8,-1.2)};

\node[green!30!black] () at (7,1) {$\overline{W}$};
\node[red] (p2) at (3.5,0.8) {$P_2$};
\node[blue] (p1) at (3.3,-0.6) {$P_1$};

\end{tikzpicture}
\end{center}
\caption{Example illustrating the proof of Case 2.
Dashed edges are part of the forest $F$, while solid edges are links, i.e., elements of $L$. Dotted edges can be any edges in $F\cup L$.
\label{fig:case_2_lower_bound_lemma}
}
\end{figure}

If the $x$-$v$ subpath of $P_2$ visits a vertex $\overline{w}\in\overline{W}$, then this contradicts the choice of $z$ (because if we choose $\overline{w}$ to be the first vertex on $P_2$  that is contained in $\overline{W}$, the $x$-$\overline{w}$ subpath of $P_2$ would be an alternating trail).

Let $\overline{w}_1$ be the first vertex on $P_1$ that is contained in $\overline{W}$.
(Such a vertex exists because $P_1$ ends in $\overline{W}$.)
Next, we show that there exists an alternating $x$-$\overline{w}_1$ trail, leading to a contradiction.
If the $x$-$v$ subpath of $P_2$ and the $u$-$\overline{w}_1$ subpath of $P_1$ are not vertex disjoint, then there exists a $x$-$\overline{w}_1$ path that does not contain any vertex of $T^C$ as an interior vertex; this path is an alternating  $x$-$\overline{w}_1$ trail.
Otherwise, we obtain an alternating  $x$-$\overline{w}_1$ trail by concatenating the $x$-$v$ subpath of $P_2$, the link $\{v,u\}$, and the $u$-$\overline{w}_1$ subpath of $P_1$.
\end{proof}

By Lemma~\ref{lem:new_component_bridge_covering}, every vertex whose degree increases in the augmentation step becomes part of a 2EC-block and is not lonely.
Thus, Invariant~\ref{invariant:lonely_vertices} is maintained.
We now show that also Invariant~\ref{invariant:credits} is maintained. 

\begin{lemma}
Let $z$ be the vertex that is furthest away from $x$ in $T^C$ among all vertices reachable from $x$ by an alternating trail and let $P$ be an alternating $x$-$z$ trail.
When augmenting along $P$, the number of bridges in $H$ reduces by at least $1$.
Moreover, $\cred(H) +|S|$ does not increase.
\end{lemma}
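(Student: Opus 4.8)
I would establish the two assertions separately. \emph{Covering a bridge.} Since $z \ne x$, the $x$-$z$ path $P^{xz}$ in $T^C$ has at least one edge, and every edge of $T^C$ is a bridge of $H$. By Lemma~\ref{lem:new_component_bridge_covering}, after augmenting $P\triangle P^{xz}$ is the edge set of a cycle of $G^C$ passing through all vertices of $P^{xz}$; hence every edge of $P^{xz}$ that still belongs to $H'\coloneqq(V,F\cup(S\triangle P))$ lies on a cycle and is no longer a bridge, while the links of $P\cap P^{xz}$ are removed altogether. It remains to check that no new bridge is created: the augmentation only deletes links of $P\cap P^{xz}$, which were already bridges of $H$, and deleting bridges never turns a non-bridge into a bridge; and it only inserts links of $P\setminus P^{xz}$, which by Lemma~\ref{lem:new_component_bridge_covering} lie on the cycle $P\triangle P^{xz}$ and hence are not bridges of $H'$, while inserting edges never turns a non-bridge into a bridge. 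Thus the number of bridges of $H$ decreases by at least the length of $P^{xz}$, which is at least $1$.

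\emph{The credit inequality: reduction.} Set $q\coloneqq|P\cap P^{xz}|$ and $p\coloneqq|P\setminus P^{xz}|$; since $S\triangle P$ has exactly $|S|-q+p$ links, it suffices to prove $\cred(H)-\cred(H')\ge p-q$. The trail $P$ is an alternation of $q+1$ sub-paths whose interior avoids $T^C$ and the $q$ links of $P\cap P^{xz}$, and a sub-path traversing $k$ connected components of $H$ other than $C$ uses exactly $k+1$ links; as $P$ meets each vertex outside $T^C$ at most once, this gives $p=q+1+\mu$, where $\mu$ is the number of components of $H$ distinct from $C$ that $P$ passes through. So I must free $p-q=1+\mu$ units of credit. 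By Lemma~\ref{lem:new_component_bridge_covering} the augmentation fuses $C$ and these $\mu$ components into one component $C'$ of $H'$, and creates a nontrivial 2EC-block $B'$ of $C'$ that absorbs every 2EC-block of $C$ met by $P^{xz}$, every original vertex of $C$ on $P^{xz}$, and at least one vertex of each fused component; since no vertex whose degree grows stays lonely, Invariant~\ref{invariant:lonely_vertices} is maintained.

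\emph{The credit inequality: accounting.} The $\mu$ units are free: each of the $\mu$ fused components carried at least one credit by \ref{item:component_credits} that it no longer carries as a separate component of $H'$. For the last unit I would combine (i) the $(1-\eps)$-credits of \ref{item:bridge_credits} carried by all $|P^{xz}\cap S|$ bridge-links on $P^{xz}$ (each of which is either deleted or lands on the new cycle), (ii) the \ref{item:lonely_credits}-credits of the original vertices on $P^{xz}$, which are lonely by the definition of $T^C$, and (iii) the \ref{item:2ec_component_credits}-credits of the 2EC-blocks of $C$ on $P^{xz}$, which are absorbed into $B'$ — together with Lemma~\ref{lem:opt_credits_lower_bound} applied with parameter $|P^{xz}\cap S|$. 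Concretely: if $z$ is a leaf of $T^C$ then $z$ is either a leaf of $H$ or a 2EC-block and releases a further credit (\ref{item:leaf_credits} or \ref{item:2ec_component_credits}); if $W\setminus\{x\}$ contains a vertex of degree $\ge 3$ in $T^C$, that vertex is not lonely by Invariant~\ref{invariant:lonely_vertices}, hence a 2EC-block whose \ref{item:2ec_component_credits}-credit is released into $B'$; and otherwise $\sum_{w\in W\setminus\{x\}}(|\delta_{F\cup\OPT}(w)|-2)\ge 2-|P^{xz}\cap S|$, so if some $w\in W\setminus\{x\}$ is a 2EC-block its single credit already suffices, and if not, the lonely-vertex credits $\tfrac12\sum_{w\in W\setminus\{x\}}(|\delta_{F\cup\OPT}(w)|-2)$ plus the bridge-link credits add up to at least $1$. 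Finally, the only way $C'$ can demand more credit than $C$ did is to become 2-edge-connected (needing $2$, resp.\ $2-2\eps$ if simple, instead of $C$'s $1$), but then $T^C=P^{xz}$ and the credits released by its two leaves $x$ and $z$ cover the difference.

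\emph{Main obstacle.} The only substantial work is the final accounting: carrying it out requires a careful case split on which alternative of Lemma~\ref{lem:opt_credits_lower_bound} holds and on whether $C'$ becomes 2-edge-connected or simple, while correctly attributing the at-most-one credit that $B'$ itself claims against the credits of all the blocks and bridges it absorbs, and treating the small cases where $P^{xz}$ has no interior original vertex or where $x$ or $z$ is itself a 2EC-block. The margins are genuinely slim (when $P^{xz}$ contains exactly one link the spare credit comes out to $\tfrac12+(1-\eps)$), but no idea beyond Lemmas~\ref{lem:new_component_bridge_covering} and~\ref{lem:opt_credits_lower_bound} is needed.
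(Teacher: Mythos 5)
Your proof follows essentially the same approach as the paper: Lemma~\ref{lem:new_component_bridge_covering} is used both for the bridge count and for showing that all of $P^{xz}$ merges into one new 2EC-block $B'$; your $p-q = \mu+1$ is the paper's $|P|-2k$; the $\mu$ units are freed by the drop in \ref{item:component_credits}-credits; and Lemma~\ref{lem:opt_credits_lower_bound} plus the case split (does $C'$ stay bridged or become 2-edge-connected?) supply the remainder.

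One bookkeeping point to be careful about when you flesh out the ``last unit'' accounting: you actually need credits totalling at least $2$ on $P^{xz}$ in the bridged case, because the new block $B'$ itself consumes one \ref{item:2ec_component_credits}-credit. The paper pays for this second unit with $x$'s credit --- since $x$ is a leaf of $T^C$, it is either a leaf of $H$ (has an \ref{item:leaf_credits}-credit) or a contracted 2EC-block (has a \ref{item:2ec_component_credits}-credit) --- and then your ``concretely'' enumeration supplies the remaining one unit from $W\setminus\{x\}$ and the bridge-links. In your ``Main obstacle'' paragraph you instead suggest charging $B'$'s credit to the blocks and bridges it absorbs, which fails in the corner case where $x$ is a bare leaf of $H$, $P^{xz}\cap S=\emptyset$, and $W\setminus\{x\}$ contains no 2EC-block; attributing it to $x$'s credit works uniformly.
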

\begin{proof}
By Lemma~\ref{lem:new_component_bridge_covering}, when augmenting along $P$, we do not create any new bridges and all of the edges on the path $P^{xz}$ become part of a 2-edge-connected component.
This implies that the number of bridges of $H$ decreases by at least $1$.

Let $k$ denote the number of links in $P\cap P^{xz}$.
When augmenting along $P$, the cardinality of $S$ increases by exactly $|P|-2k$.
Moreover, the number of connected components of $H$ decreases by exactly $|P|-1-2k$.
We distinguish two cases.\\

\noindent
\textbf{Case 1:} After the augmentation along $P$, the connected component containing $x$ contains at least one bridge. \\

Each of the $|P|-2k$ connected components that get merged when augmenting along $P$ (including the component $C$) had at least $1$ credit before the augmentation by \ref{item:component_credits}.
The connected component resulting from the augmentation has $1$ credit and thus the number of credits due to \ref{item:component_credits} decreases by at least 
$|P|-1-2k$.
Therefore, it remains to show that the number of credits due to \ref{item:vertex_credits}, \ref{item:bridge_credits}, and \ref{item:2ec_component_credits} reduces by at least $1$ when augmenting along $P$.

By Lemma~\ref{lem:new_component_bridge_covering}, when augmenting along $P$, all vertices of the path $P^{xz}$ become part of the same 2EC-block $B$ and we do not create any new bridges in $H$.
The block $B$ has exactly $1$ credit by \ref{item:2ec_component_credits}.
The credits of all vertices and blocks that are not contained in $B$ remain unchanged.
Moreover, the credits of links in $S\setminus P^{xz}$ do not change.
Vertices and links in $B$ have no credits anymore after the augmentation.
Hence, it suffices to show that the total number of credits on vertices, bridges, and 2EC-blocks in $P^{xz}$ was at least $2$ before the augmentation.

Before the augmentation, $x$ was a leaf of $T^C$ and thus it was either a leaf of $H$ or it resulted from the contraction of a 2EC-block. 
In both cases it had at least $1$ credit by \ref{item:leaf_credits} and \ref{item:2ec_component_credits}.

Let $W$ denote the vertex set of $P^{xz}$.
If $W\setminus \{x\}$ contains a vertex that arose from the contraction of a 2EC-block, this block had $1$ credit before the augmentation by \ref{item:2ec_component_credits}.
Otherwise, all vertices in $W\setminus\{x\}$ are lonely and hence they have degree at most $2$ by Invariant~\ref{invariant:lonely_vertices}.
If $z$ is a leaf of $T^C$ it has $1$ credit by \ref{item:leaf_credits} before the augmentation.
Otherwise, Lemma~\ref{lem:opt_credits_lower_bound} implies that before the augmentation the total number of credits on bridges in $P^{xz}$ (due to \ref{item:bridge_credits}) plus the total number of credits on $W\setminus \{x\}$ (due to \ref{item:lonely_credits}) was at least $1$.
(Here, we used that all links in $P^{xz}$ were bridges before the augmentation and that $1-\epsilon \ge \tfrac{1}{2}$.)\\

\noindent
\textbf{Case 2:} After the augmentation along $P$, the connected component containing $x$ is 2-edge-connected.
\\

The connected component $C$ had $1$ credit before the augmentation.
After the augmentation, the connected component containing $x$ has at most $2$ credits.
The number of all other types of credits does not increase.
Moreover, because the connected component containing $x$ is 2-edge-connected after the augmentation, both $x$ and $z$ must be leaves of the tree $T^C$.
Thus, $x$ and $z$ were either leaves of $H$ or (contracted) 2EC-blocks of $H$.
This implies that the total number of credits according to \ref{item:leaf_credits} and \ref{item:2ec_component_credits} decreases by at least $2$.
\end{proof}

\subsection{Gluing}\label{sec:gluing}

In this section we describe and analyze the gluing phase. 
For the purpose of gluing, we consider auxiliary graphs that are obtained from the graph $(V,F\cup L)$ by contracting some connected components of the current $H=(V,F\cup S)$, similarly to the bridge covering phase.
We will merge several connected components of $H$ by finding an affordable cycle in one of the auxiliary graphs. We begin by defining these auxiliary graphs, and our main tools, namely good cycles.

Recall that a graph $G$ is called bridgeless if it does not contain any bridge (i.e.\ cut-edge). 
Let $F\cup S$ be a bridgeless $2$-edge-cover of $(V,F\cup L)$ and $H=(V,F\cup S)$. 
Throughout the gluing phase $F\cup S$ will remain bridgeless and thus the connected components and the 2-edge-connected components of $H$ coincide. Hence, we will sometimes refer to these as the components of $H$.
We let $G_H$ be the (multi-)graph obtained from $(V,F\cup L)$ after contracting each component of $H$ into a single vertex. 
Notice that the edges of $G_H$ correspond to a subset of the links of the original graph. 

For a simple component $C$, we let $G_{H|C}$ be the (multi-)graph obtained from $(V,F\cup L)$ after contracting each connected components of $H$ except for $C$ into a single vertex. 
Then the edges of $G_{H|C}$ correspond to a subset of the links of the original graph plus the two paths $P_1 \subseteq F$ and $P_2 \subseteq F$ in the simple component $C$.
Note that $G_H$ and $G_{H|C}$ are not necessarily simple graphs and might have parallel edges.
With a slight abuse of notation we identify the edges in $G_H$ and $G_{H|C}$ with the edges in $(V,F\cup L)$.

\begin{definition}[good cycle]
A \emph{good cycle of $H$ in $(V,F\cup L)$ that affects a simple component $C$} is a cycle in $G_{H|C}$ that contains all the edges of the two paths $P_1$ and $P_2$ of $C$ in $F$ and has at least one more vertex. 
\end{definition}

See Figure~\ref{fig:GoodCycle}.
In the gluing phase, we will repeatedly merge components of $H$ by \emph{gluing} either along a cycle in $G_H$ or along a good cycle in $G_{H|C}$ (for some simple component $C$).
Next, we define these gluing operations.

\begin{definition}\label{def:gluing}
\emph{Gluing $H$ along a cycle $Q$ of $G_H$} is the process of adding the links of $Q$ to $H$. 

\emph{Gluing $H=(V,F\cup S)$ along a good cycle $Q$ affecting a simple component $C$} is the process of first removing the two links of $C$ from $S$ and then adding the links of $Q$ to $S$.
\end{definition}

Notice that it is possible that one link $\ell$ belonging to both $C$ and $Q$  is first removed and then added back. However this cannot happen for both links in the simple component $C$ simultaneously. See Figure~\ref{fig:GoodCycle}.
We also remark that in order to obtain a better-than-2 approximation it is not sufficient to use only gluing operations that glue $H$ along cycles in $G_H$ as the example from Figure~\ref{fig:removing_needed_gluing} shows.

\begin{figure}[H]
\centering
\begin{subfigure}[c]{0.3\textwidth}
\begin{center}
\begin{tikzpicture}[scale=0.7]
\useasboundingbox (-1.5, -0.5) rectangle (5.5, 5.5);

\tikzset{vertex/.style={fill=black,circle,minimum size=0pt, inner sep=2pt, outer sep=2pt}
}

\tikzset{comp/.style={draw=black, very thick, circle,minimum size=0pt, inner sep=2pt, outer sep=2pt}
}

\begin{scope}[every node/.style={vertex}]
\node (v2) at (0,0) {};
\node (v1) at (0,4) {};
\node (u2) at (4,0) {};
\node (u1) at (4,4) {};

\node (w1) at (4,1.3) {};
\node (w2) at (4,2.7) {};
\end{scope}

\node[left=2pt] (V1) at (v1) {$v_1$};
\node[left=2pt] (V2) at (v2) {$u_1$};

\node[right=2pt] (U1) at (u1) {$v_2$};
\node[right=2pt] (U2) at (u2) {$u_2$};

\node (P1) at (-0.35,2) {$P_1$};
\node (P2) at (4.4,2) {$P_2$};

\begin{scope}[very thick,densely dashed]
\draw (v1) -- (v2);
\draw (u1) -- (w2) -- (w1) -- (u2);
\end{scope}

\begin{scope}[very thick]
\draw (v1) -- (u1);
\draw (v2) -- (u2);
\end{scope}

\end{tikzpicture}
\end{center}
\caption{a simple component \label{subfig:simple_comp}}
\end{subfigure}
\hfill
\begin{subfigure}[c]{0.3\textwidth}
\begin{center}
\begin{tikzpicture}[scale=0.7]
\useasboundingbox (-1.5, -0.5) rectangle (5.5, 5.5);

\tikzset{vertex/.style={fill=black,circle,minimum size=0pt, inner sep=2pt, outer sep=2pt}
}

\tikzset{comp/.style={draw=black, very thick, circle,minimum size=0pt, inner sep=2pt, outer sep=2pt}
}

\begin{scope}[every node/.style={vertex}]
\node (v2) at (0,0) {};
\node (v1) at (0,4) {};
\node (u2) at (4,0) {};
\node (u1) at (4,4) {};

\node (w1) at (4,1.3) {};
\node (w2) at (4,2.7) {};
\end{scope}

\node[left=2pt] (V1) at (v1) {$v_1$};
\node[left=2pt] (V2) at (v2) {$u_1$};

\node[right=2pt] (U1) at (u1) {$v_2$};
\node[right=2pt] (U2) at (u2) {$u_2$};

\begin{scope}[every node/.style={comp}]
\node (C1) at (3,5) {};
\node (C2) at (1,5) {};
\end{scope}

\begin{scope}[line width=1.5pt,densely dashed]
\draw[blue] (v1) -- (v2);
\draw[blue] (u1) -- (w2) -- (w1) -- (u2);
\end{scope}

\begin{scope}[very thick]
\draw (v1) -- (u1);
\draw[line width=1.5pt,blue] (v2) -- (u2);
\end{scope}

\draw[line width=1.5pt,blue] (u1) -- (C1) -- (C2) -- (v1);

\end{tikzpicture}
\end{center}
\caption{a good cycle \label{subfig:good_cycle1}}
\end{subfigure}
\hfill
\begin{subfigure}[c]{0.3\textwidth}
\begin{center}
\begin{tikzpicture}[scale=0.7]
\useasboundingbox (-1.5, -0.5) rectangle (5.5, 5.5);

\tikzset{vertex/.style={fill=black,circle,minimum size=0pt, inner sep=2pt, outer sep=2pt}
}

\tikzset{comp/.style={draw=black, very thick, circle,minimum size=0pt, inner sep=2pt, outer sep=2pt}
}

\begin{scope}[every node/.style={vertex}]
\node (v2) at (0,0) {};
\node (v1) at (0,4) {};
\node (u2) at (4,0) {};
\node (u1) at (4,4) {};

\node (w1) at (4,1.3) {};
\node (w2) at (4,2.7) {};
\end{scope}

\node[left=2pt] (V1) at (v1) {$v_1$};
\node[left=2pt] (V2) at (v2) {$u_1$};

\node[right=2pt] (U1) at (u1) {$v_2$};
\node[right=2pt] (U2) at (u2) {$u_2$};

\begin{scope}[every node/.style={comp}]
\node (C1) at (1.3,2.7) {};
\node (C2) at (2.7,1.3) {};
\node (D1) at (3.5,-1) {};
\node (D2) at (5.3,0.5) {};
\end{scope}

\begin{scope}[line width=1.5pt,densely dashed]
\draw[blue] (v1) -- (v2);
\draw[blue] (u1) -- (w2) -- (w1) -- (u2);
\end{scope}

\begin{scope}[very thick]
\draw (v1) -- (u1);
\draw (v2) -- (u2);
\end{scope}

\draw[line width=1.5pt, blue] (v1) -- (C1) -- (C2) -- (u2);
\draw[line width=1.5pt,blue] (v2) to[out=320, in=180] (D1) to[out=5, in=260] (D2) to[out=90, in=320] (u1);

\end{tikzpicture}
\end{center}
\caption{another good cycle \label{subfig:good_cycle2}}
\end{subfigure}

\caption{
This figure shows a simple component $C$ (a), and two different good cycles affecting $C$ (shown in blue in (b) and (c)).
Solid edges are links and dashed edges are edges of the forest $(V,F)$.
Filled vertices show elements of the original vertex set $V$, while other vertices resulted from the contraction of a component of $H$.
Notice that when gluing along the good cycle depicted in (b), the link $\{u_1,u_2\}$ will remain in $S$.
(Strictly speaking, according to Definition~\ref{def:gluing}, we first remove it, but then add it again.)
When gluing $H$ along the good cycle depicted in (c), both links $\{v_1,v_2\}$ and $\{u_1,u_2\}$ will be removed from $S$.
}\label{fig:GoodCycle}
\end{figure}
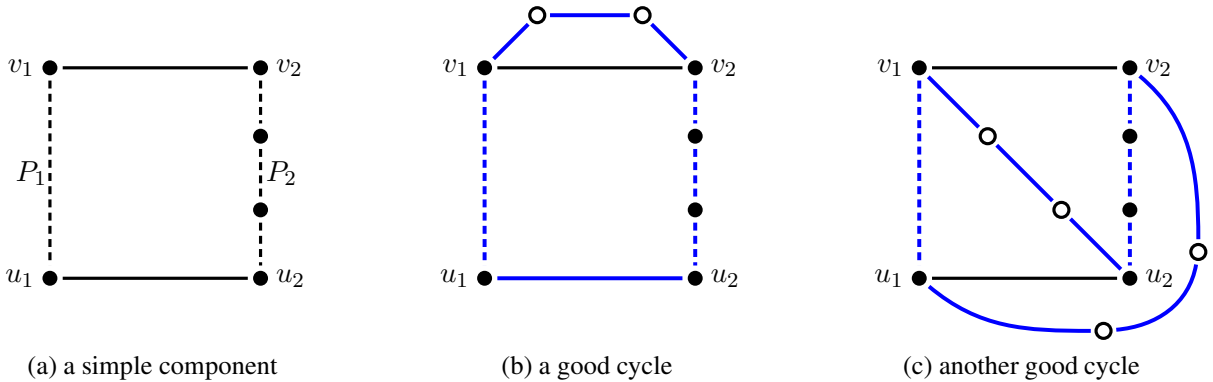

\begin{lemma}\label{lem:ComputationOfGoodCycles}
Given a bridgeless $2$-edge-cover $F\cup S$ of $(V,F\cup L)$, there exists a polynomial time algorithm that computes a good cycle of $H=(V,F\cup S)$ in $(V,F\cup L)$ or verifies that no such good cycle exists. 
\end{lemma}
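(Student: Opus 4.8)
The plan is to handle each simple component of $H$ separately and, for a fixed simple component $C$, to reduce the existence of a good cycle affecting $C$ to a classical, polynomially solvable subgraph problem. Recall that every good cycle affects some simple component, and that $H=(V,F\cup S)$ has at most $|V|$ simple components, each recognizable in polynomial time (check whether a connected component of $H$ is a cycle containing exactly two links). Hence it suffices to give, for a fixed simple component $C$, a polynomial-time procedure that either outputs a good cycle affecting $C$ or reports that none exists; we then run this procedure for every simple component and return any good cycle found, or report non-existence if all runs fail.

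So fix a simple component $C$ with forest paths $P_1,P_2$, whose endpoint sets I call $\{a_1,b_1\}$ and $\{a_2,b_2\}$; recall $V(C)=V(P_1)\cup V(P_2)$, and the two links of $C$ join an endpoint of $P_1$ to an endpoint of $P_2$. I would first record a structural fact: if $Q$ is a good cycle affecting $C$ then, since $Q$ contains every edge of $P_1$ and $P_2$, each internal vertex of $P_1$ (resp.\ $P_2$) has both of its $Q$-edges inside $P_1$ (resp.\ $P_2$); hence $Q$ uses no edge of $G_{H|C}$ incident to an internal vertex of $P_1$ or $P_2$ except the edges of $P_1$ and $P_2$ themselves. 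Consequently $Q$ lives in the auxiliary multigraph $\widehat G$ obtained from $G_{H|C}$ by deleting all internal vertices of $P_1$ and $P_2$ together with their incident edges, and then replacing $P_1$ and $P_2$ by single edges $\widehat e_1=\{a_1,b_1\}$ and $\widehat e_2=\{a_2,b_2\}$: concretely, $\widehat Q:=\bigl(Q\setminus(E(P_1)\cup E(P_2))\bigr)\cup\{\widehat e_1,\widehat e_2\}$ is a cycle of $\widehat G$ through $\widehat e_1$ and $\widehat e_2$, and (because $Q$ is good, hence contains a vertex outside $V(C)$, and $\widehat Q$ agrees with $Q$ away from $P_1,P_2$) it contains a vertex not among $\{a_1,b_1,a_2,b_2\}$, necessarily one arising from the contraction of a component of $H$. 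Conversely, any cycle of $\widehat G$ through $\widehat e_1$ and $\widehat e_2$ and through at least one contracted-component vertex lifts back, by substituting the path $P_1$ for $\widehat e_1$ and $P_2$ for $\widehat e_2$ (their internal vertices being fresh), to a good cycle affecting $C$. It is essential here to keep $\widehat e_1,\widehat e_2$ as edges rather than contracting $P_1,P_2$ to single vertices: contracting would create spurious cycles that enter and leave $P_1$ at the same endpoint and do not lift to cycles using all of $P_1$.

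It follows that a good cycle affecting $C$ exists if and only if, for some contracted-component vertex $v$ of $G_{H|C}$, the multigraph $\widehat G$ contains a cycle through $\widehat e_1$, $\widehat e_2$, and $v$ — equivalently, after subdividing $\widehat e_1$ and $\widehat e_2$ by new vertices $y_1$ and $y_2$, a cycle through $y_1$, $y_2$, and $v$. The procedure therefore iterates over the polynomially many simple components $C$ and, for each, over the polynomially many choices of $v$, and for each such pair solves in polynomial time the problem of deciding whether three prescribed vertices lie on a common cycle and, if so, finding one; this is a well-known polynomially solvable special case of the disjoint paths problem (see e.g.\ \cite{Schrijver}). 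Whenever a cycle is found, we undo the subdivision and substitute $P_1,P_2$ back in to obtain a good cycle in $G_{H|C}$, which we output; if every pair $(C,v)$ fails, the equivalence above certifies that no good cycle exists.

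Correctness in both directions and the polynomial running time are immediate from the discussion. The main point requiring care is getting the auxiliary graph $\widehat G$ and the lifting correspondence exactly right — in particular the endpoint subtlety noted above, which is why $P_1$ and $P_2$ must be replaced by edges rather than contracted to points — whereas the polynomial solvability of the ``three vertices on a common cycle'' problem is invoked as a black box.
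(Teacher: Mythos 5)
Your proof is correct, but it takes a genuinely different route from the paper's. You reduce the search for a good cycle affecting a fixed simple component $C$ to the problem of finding, in a cleaned-up auxiliary multigraph $\widehat G$ (with $P_1,P_2$ replaced by single edges $\widehat e_1,\widehat e_2$ and then subdivided by $y_1,y_2$), a cycle through the three prescribed vertices $y_1,y_2,v$ for some contracted-component vertex $v$. The lifting correspondence you set up is correct — and you are right that keeping $\widehat e_1,\widehat e_2$ as edges (rather than contracting $P_1,P_2$ to vertices) is essential, since contracting would admit spurious cycles entering and leaving at the same endpoint. The resulting algorithm iterates over polynomially many pairs $(C,v)$ and invokes ``cycle through three given vertices'' as a black box, which is indeed a classical polynomially solvable special case of the disjoint-paths problem.

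The paper instead gives a fully elementary, self-contained argument: it observes that a good cycle decomposes into $P_1$, $P_2$, and two vertex-disjoint \emph{good paths} joining endpoints of $P_1$ to endpoints of $P_2$, and then distinguishes two configurations — a good $v_1$-$v_2$ or $u_1$-$u_2$ path of length $\ge 2$, or a pair of good $v_1$-$u_2$ and $u_1$-$v_2$ paths with at least one of length $\ge 2$. Each good path is found by a simple reachability search in $G_{H|C}$ minus the other vertices of $C$, preferring paths of length $\ge 2$; the only delicate point is arguing that in the second configuration the two paths returned are automatically vertex-disjoint, which follows because otherwise the first configuration would have triggered. So the trade-off is: your reduction is conceptually cleaner and avoids the disjointness case analysis, but it leans on the (nontrivial, even if well-known) $k$-vertex-disjoint-paths machinery, whereas the paper's argument uses only BFS/DFS-style path searches and a short case distinction. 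Both are valid and run in polynomial time.
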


\begin{proof}
Fix a simple component $C$ of $H$. We show that there exists a polynomial time algorithm that computes a good cycles of $H$ in $(V,F\cup L)$ that affects $C$ or verifies that no such good cycle exist. The claim follows by applying this algorithm to each simple component of $H$.

Let $P_1$ and $P_2$ be the paths of $C$ that belong to $F$ and let $u_1$ and $v_1$ be the endpoints of $P_1$ and $u_2$ and $v_2$ be the endpoints of $P_2$ such that $\{u_1,u_2\}\in S$ and $\{v_1,v_2\}\in S$ are the unique links of the simple component $C$. See Figure~\ref{fig:GoodCycle}.
We call a path in $G_{H|C}$ a \emph{good path} if one of its endpoints is $u_1$ or $v_1$, the other endpoint is $u_2$ or $v_2$, and all its internal vertices do not belong to $C$, i.e., they are vertices that arose from the contraction of components of $H$. 
Every good cycle affecting $C$ is the union of $P_1$, $P_2$ and two vertex disjoint good paths. 
More precisely, if there exists a good cycle affecting $C$ then
\begin{enumerate}[label=(\alph*)]\itemsep0pt
\item there exists a good $v_1$-$v_2$ path or a good $u_1$-$u_2$ path of length at least two, or \label{item:good_cycle_simple}
\item there exist a good $v_1$-$u_2$ path and a good $u_1$-$v_2$ path such that at least one of these paths has length at least two.
\label{item:good_cycle_complicated}
\end{enumerate}
We give a polynomial time algorithm that finds a good cycles whenever  \ref{item:good_cycle_simple} of \ref{item:good_cycle_complicated} holds.
First we observe that for vertices $a\in \{u_1,v_1\}$ and $b\in\{u_2,v_2\}$ we can in polynomial time find a good $a$-$b$ path or decide that such a path does not exist (by searching for an $a$-$b$ path in the graph arising from $G_{H|C}$ by removing all vertices of $C$ (and their incident edges) except for the vertices $a$ and $b$).
We can also find a good $a$-$b$ path of length at least two or decide that no such path exists by looking for a good $a$-$b$ path in the graph where we remove the link $\{a,b\}$ (if it exists).

Our algorithm first checks if a good $v_1$-$v_2$ path $P_v$ of length at least two exists. If this is the case, we obtain a good cycle by taking the union of $P_1$, $P_v$, $P_2$, and the link $\{u_1,u_2\}$.
We proceed analogously if a good $u_1$-$u_2$ path of length at least two exists.
Thus, we find a good cycle if \ref{item:good_cycle_simple} holds.
Otherwise, we check if there exists a good $v_1$-$u_2$ path $P_1'$ and a good $v_2$-$u_1$ path $P_2'$.
If possible, we choose the good paths $P_1'$ and $P_2'$ such that they have length at least two.
If \ref{item:good_cycle_complicated} holds, both $P_1'$ and $P_2'$ exist and at least one of them has length at least two. 
The paths $P_1'$ and $P_2'$ must be vertex-disjoint as otherwise their union contains a good $v_1$-$v_2$ path of length at least two, in which case \ref{item:good_cycle_simple} holds.
Then the union of $P_1$, $P_1'$, $P_2$ and $P_2'$ is a good cycle.
\end{proof}

Next we show that if we can glue $H$ along a good cycle, our credit invariant (Invariant~\ref{invariant:credits}) is maintained.
We also observe that the number of components of $H$ decreases strictly.

\begin{lemma}\label{lem:goodCycleCost}
Given a bridgeless $2$-edge-cover $F\cup S$ of $(V,F\cup L)$ and a good cycle $Q$ of $H=(V,F\cup S)$ in $(V,F\cup L)$, let $H'=(V,F\cup S')$ be the graph obtained by gluing $H$ along $Q$. 
Then $F\cup S'$ is a bridgeless $2$ edge-cover of $(V,F\cup L)$ such that $\cred(H)+|S|\ge \cred(H')+|S'|$ and $H'$ has fewer connected components than $H$. 
\end{lemma}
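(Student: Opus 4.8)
\textbf{Proof strategy for Lemma~\ref{lem:goodCycleCost}.}
The plan is to analyze separately the two contributions to the potential $\cred(H)+|S|$: the change in $|S|$ and the change in $\cred(H)$, and show that the latter compensates the former. First I would observe that gluing along a good cycle $Q$ affecting a simple component $C$ removes the two links of $C$ from $S$ and adds the links of $Q$ to $S$; since $Q$ uses $P_1,P_2\subseteq F$ and at least one more vertex, it contains at least three links of $L$, so $|S'|\le |S|+|Q\cap L|-2$. The gain in credits must therefore be at least $|Q\cap L|-2$. To see this, note that gluing along $Q$ merges the simple component $C$ together with all the (at least one) other components that $Q$ passes through into a single $2$-edge-connected component. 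Before the gluing, $C$ was a simple component and thus carried $2-2\eps=\tfrac32$ credits by \ref{item:component_credits}, plus possibly credits on its vertices via \ref{item:lonely_credits}; each other component $Q$ visits carried at least $1$ credit by \ref{item:component_credits} (it was bridgeless, hence either $2$-edge-connected with $\ge 2$ credits, or — impossible here since we are in the gluing phase where $H$ is bridgeless — a component with a bridge). After the gluing, the new component is $2$-edge-connected and carries at most $2$ credits. The number of components $Q$ passes through is exactly $|Q\cap L|-|\{P_1,P_2\}$-edges, which after accounting for the contraction structure equals the number of contracted vertices on $Q$; a routine count shows this is $|Q\cap L|-2$ if $C$ is one of them (counting $C$ once). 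Summing: credits lost from the merged components is at least $\tfrac32 + (r-1)\cdot 1$ where $r$ is the number of merged components, while the new component gains at most $2$; plus, when two links of $C$ are genuinely removed, the vertices $u_1,u_2,v_1,v_2$ lose whatever credits \ref{item:lonely_credits} assigned them (they are now in a nontrivial $2$EC-component, hence no longer lonely and no longer in a simple component). Carefully bookkeeping, the net credit decrease is at least $|Q\cap L|-2$, which matches the increase in $|S|$.

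Next I would handle the structural claims. That $H'$ has strictly fewer connected components than $H$ is immediate: $Q$ has at least one vertex outside $C$, so gluing merges $C$ with at least one other component. That $F\cup S'$ is still a $2$-edge-cover: every vertex not on $C$ keeps all its incident edges in $F\cup S'$, and a vertex of $C$ either keeps its two forest-edges (if it is an internal vertex of $P_1$ or $P_2$) or, if it is an endpoint $u_i$ or $v_i$, it loses one link of $C$ but gains at least one link of $Q$ incident to it — here I would use that $Q$ contains all edges of $P_1$ and $P_2$, so each endpoint of $P_1,P_2$ has one $F$-edge and one $Q$-link incident in $H'$, hence degree at least $2$ (with the possible subtlety, noted in the paper's Figure~\ref{fig:GoodCycle} discussion, that one link of $C$ might be removed and re-added, which only helps). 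That $F\cup S'$ is bridgeless: this is the crux. I would argue that the merged super-component formed from $C$ and the other components on $Q$ is itself $2$-edge-connected, because $Q$ together with $P_1,P_2$ forms a closed walk (cycle) through these components, and each individual component was already $2$-edge-connected (a contracted point that "expands" to a $2$-edge-connected graph); a cycle through $2$-edge-connected blocks yields a $2$-edge-connected graph. No other part of $H$ is changed, so no new bridges appear elsewhere.

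\textbf{Main obstacle.} The delicate point I expect to grind on is the exact credit accounting, in particular the interplay between \ref{item:lonely_credits} and the simple-component credits in \ref{item:component_credits}, and making sure the $\eps=\tfrac14$ slack in "$2-2\eps$ credits for simple components" is exactly enough to cover the worst case where $Q$ is a cycle of length $3$ (two forest paths plus a single extra vertex reached by two links), so $|Q\cap L|=3$ and we need a net credit drop of $1$. I would check: we lose the $\tfrac32$ simple-component credits of $C$ and the $\ge 1$ credits of the one other merged component (total $\ge \tfrac52$), gain back at most $2$ for the new component, for a net drop of at least $\tfrac12$ — and the remaining $\tfrac12$ must come from the vertices of $C$ losing \ref{item:lonely_credits} credits, which requires showing that in this tight case at least one endpoint of $P_1$ or $P_2$ had $|\delta_{\OPT\cup F}(v)|\ge 3$, i.e., carried $\ge\tfrac12$ such credit. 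Proving this last fact — that a simple component on which a short good cycle exists necessarily has an endpoint of high $\OPT$-degree — or alternatively reorganizing the credit scheme so it is not needed, is where the real work lies; the rest is bookkeeping and standard $2$-edge-connectivity arguments about cycles through blocks.
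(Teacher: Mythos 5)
Your proposal has the right high-level framework (merge components along $Q$, count the $|S|$ increase as $k = |Q\cap L|-2$, and argue the component credits cover it), and your structural arguments for bridgelessness and for the drop in component count are correct. But there is a genuine arithmetic gap in the credit accounting that you then try to patch with a step that cannot be closed.

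The gap: in the sum you charge each of the $k$ components $C_1,\dots,C_k$ other than $C$ with only $\ge 1$ credit (``$\tfrac32 + (r-1)\cdot 1$''). This is too weak. Since we are in the gluing phase, $H$ is bridgeless, so \emph{every} connected component of $H$ is $2$-edge-connected, and by rule~\ref{item:component_credits} it carries at least $2-2\eps = \tfrac{3}{2}$ credits (exactly $\tfrac32$ if it is simple, $2$ otherwise). Your parenthetical ``$2$-edge-connected with $\ge 2$ credits'' over-claims the other way --- it misses that simple components get only $2-2\eps$. Either way, the correct lower bound is $\tfrac32$ per component, and $(k+1)\cdot\tfrac32 \ge k+2$ holds for all $k\ge 1$, so the component credits alone suffice: $\cred(H)-\cred(H') \ge (k+1)(2-2\eps) - 2 \ge k$. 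No vertex credits are needed.

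Because you use the weaker ``$\ge 1$'' bound, you manufacture a deficit of $\tfrac12$ and then propose to recover it from rule~\ref{item:lonely_credits} on the endpoints of $P_1,P_2$. That route cannot be completed: you would need to show that a simple component on which a good cycle \emph{exists} contains a rich vertex, but the paper's Lemma~\ref{lemma:GoodCycleOrBadVertex} establishes the \emph{converse} direction (no good cycle $\Rightarrow$ rich vertex exists), and the direction you need is simply false in general (a simple component can admit a good cycle while every vertex has $\OPT\cup F$-degree exactly $2$). The ``main obstacle'' you flag is therefore not the delicate point of the lemma; it disappears once the component credit bound is stated correctly, which is exactly what the paper's proof does.
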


\begin{proof}
Let $C$ be the simple component of $H$ that is affected by $Q$ and let  $C_1,...,C_k$ be the other connected components of $H$ such that their corresponding vertex in $G_{H|C}$ belongs to $Q$.

Clearly $H'$ is also a bridgeless $2$-edge-cover. 
Furthermore, the vertices of $C$ and the vertices of $C_1,\dots,C_k$, form a single connected component $C'$ in $H'$. Observe that we have removed $2$ links and added $k+2$ links, therefore $|S'|-|S|=k$. 
By the credit rule \ref{item:component_credits}, we need $2$ credits for the new component $C'$ (which is not simple since it contains at least $3$ links). 
We can compensate this with the credits of the components $C$, $C_1, \dots, C_k$ of $H$, which are at least $(k+1)(2-2\eps)\ge k+2$ (since $\eps=\frac{1}{4}$ and $k\geq 1$). 
Thus, $\cred(H)-\cred(H')\ge k$ and therefore
 $\cred(H)+|S|-\cred(H')-|S'|\ge k-k = 0$.

Finally, we observe that the connected components of $H$ not touched by $Q$ are not modified, hence their credits do not change.
\end{proof}

It remains to consider the case where there is no good cycle.
In order to prove the credit invariant, we first show that in this case every simple component contains a vertex with an implicit credit due to \ref{item:lonely_credits}.
Such vertices will be called \emph{rich}.

\begin{definition}[rich vertex]
We say that a vertex $v\in V$ is \emph{rich} if ~$|\delta_{\OPT \cup F}(v)|\ge 3$ and it belongs to a simple connected component of $H$.
\end{definition}
\begin{lemma}\label{lemma:GoodCycleOrBadVertex}
Let $C$ be a simple component of $H=(V,F\cup S)$ and suppose $C\ne H$. 
If there is no good cycle affecting $C$, then $C$ contains a rich vertex.
\end{lemma}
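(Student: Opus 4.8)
The plan is to argue by contradiction: assume that the simple component $C=(P_1\cup P_2\cup\{\{u_1,u_2\},\{v_1,v_2\}\})$ has no rich vertex, i.e.\ every vertex of $C$ has degree exactly $2$ in $F\cup\OPT$ (it cannot have degree less than $2$ since $F\cup\OPT$ is $2$-edge-connected), and derive the existence of a good cycle affecting $C$. Since $\OPT$ is a feasible FAP solution, $(V,F\cup\OPT)$ is $2$-edge-connected, hence bridgeless; in particular, if we contract all components of $H$ except $C$ we still have a bridgeless graph, and we want to find inside $G_{H|C}$ (using only links of $\OPT$ plus $P_1,P_2$) a good cycle. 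I would restrict attention to the subgraph $G^{\OPT}_{H|C}$ of $G_{H|C}$ whose links are exactly $\OPT$-links, and show that in this subgraph one can route through $C$ along $P_1$ and $P_2$ and close up outside $C$.

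**Key steps.** First I would observe that the degree assumption pins down exactly how $\OPT$ interacts with $C$: each of $u_1,v_1,u_2,v_2$ is an endpoint of one forest edge (the first edge of $P_1$ or $P_2$) and hence has exactly one incident $\OPT$-link, while every internal vertex of $P_1$ or $P_2$ already has two forest edges and so has \emph{no} incident $\OPT$-link, and likewise no $\OPT$-link has both endpoints inside $C$ other than possibly the links $\{u_1,u_2\},\{v_1,v_2\}$ themselves — but those cannot be in $\OPT$ and simultaneously leave the degrees at $2$ unless $\OPT$ uses no other link at $C$, which would disconnect $C$ from the rest and contradict $C\neq H$ together with $2$-edge-connectivity. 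So exactly four $\OPT$-links leave $C$, one at each of $u_1,v_1,u_2,v_2$; call them $e_{u_1},e_{v_1},e_{u_2},e_{v_2}$. Next, contract $C$ to a single vertex $c$ in $(V,F\cup\OPT)$: the result is still $2$-edge-connected (contracting preserves $2$-edge-connectivity), so $c$ has at least two edge-disjoint paths to any other vertex, and more usefully there are two edge-disjoint $\OPT$-paths leaving $c$ through two \emph{different} of the four link-ends and reconnecting outside $C$ — because if, say, $e_{u_1}$ and $e_{u_2}$ were a $2$-edge-cut separating some vertex set, the graph would not be $2$-edge-connected after we undo the contraction and look at $C$. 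Concretely, by Menger's theorem applied in the contracted graph there are two edge-disjoint paths between $c$ and any fixed other component vertex, and by rerouting one shows the four pendant links fall into two vertex-disjoint ``return paths'' outside $C$, which is precisely the good-path structure in case \ref{item:good_cycle_simple} or \ref{item:good_cycle_complicated} of the proof of Lemma~\ref{lem:ComputationOfGoodCycles}. Finally, combining $P_1$, $P_2$ and these two vertex-disjoint good paths through the outside yields a cycle in $G_{H|C}$ containing all edges of $P_1$ and $P_2$ and at least one more vertex (since $C\neq H$, the good paths genuinely leave $C$), i.e.\ a good cycle affecting $C$ — contradicting the hypothesis.

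**Main obstacle.** The delicate point is step two: turning the $2$-edge-connectivity of $(V,F\cup\OPT)$ into the statement that the four pendant $\OPT$-links at $C$ can be organized into \emph{two vertex-disjoint} good paths (with the right endpoint pairing, $\{u_1\text{ or }v_1\}$ to $\{u_2\text{ or }v_2\}$, as required by the definition of a good cycle — a cycle must alternate $P_1$ and $P_2$ correctly). A purely edge-disjoint argument gives two edge-disjoint paths but not vertex-disjointness, and it does not immediately respect the pairing; I expect to need a short case analysis on how the two edge-disjoint $\OPT$-paths out of the contracted vertex $c$ re-enter $C$, using that any shared outside vertex between two good paths can be used to shortcut and produce a shorter good path of one of the two required types (exactly the shortcutting argument already used in the proof of Lemma~\ref{lem:ComputationOfGoodCycles} to establish that $P_1'$ and $P_2'$ are vertex-disjoint). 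Handling the degenerate case where some pendant link is a single edge directly between $C$ and a neighboring component (so a good path has length one) also needs care, but the ``at least one path of length $\ge 2$'' clause is guaranteed here because $C\neq H$ forces at least one more vertex on the cycle.
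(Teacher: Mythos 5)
Your overall strategy matches the paper's: assume $C$ has no rich vertex, deduce that every vertex of $C$ has degree exactly $2$ in $F\cup\OPT$, and then use $2$-edge-connectivity of $(V,F\cup\OPT)$ to produce a good cycle affecting $C$, contradicting the hypothesis. However, the intermediate claim that ``exactly four $\OPT$-links leave $C$, one at each of $u_1,v_1,u_2,v_2$'' is false, and the justification you offer does not hold. It is perfectly consistent with the degree hypothesis that $\OPT$ contains one ``cross'' link with both endpoints among $\{u_1,v_1,u_2,v_2\}$ — for instance $\{u_1,u_2\}\in\OPT$ or $\{u_1,v_2\}\in\OPT$ — in which case only \emph{two} $\OPT$-links leave $C$ (at $v_1$ and $v_2$ in the first example). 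This does not violate $2$-edge-connectivity, since $\delta(V(C))$ still has two edges. (You do correctly exclude $\{u_1,v_1\}$ and $\{u_2,v_2\}$, since either would make $\delta(V(P_1))$ or $\delta(V(P_2))$ empty in $F\cup\OPT$; but the four cross pairings survive your degree count, and the claim that even $\{u_1,u_2\}\in\OPT$ would force $\OPT$ to use no other link at $C$ is simply wrong.) In the two-pendant-link regime your picture — ``the four pendant links fall into two vertex-disjoint return paths outside $C$'' — breaks down: one of the two paths is an internal link of length $1$, not a return path through the outside.

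The paper sidesteps this by applying Menger between $V(P_1)$ and $V(P_2)$ (rather than between $V(C)$ and its complement), which yields two edge-disjoint $\OPT$-paths $P_1',P_2'$ from $V(P_1)$ to $V(P_2)$ in $G_{H|C}$ whose four endpoints are forced to be exactly $\{u_1,v_1,u_2,v_2\}$ and whose interiors avoid $C$. This formulation handles both the ``four links leave'' case and the ``one link is internal, two leave'' case uniformly: the internal link simply becomes a good path of length $1$. The case analysis of Lemma~\ref{lem:ComputationOfGoodCycles} then shows that either one of $P_1',P_2'$ has length $\ge 2$ (giving a good cycle, after the shortcutting argument for vertex-disjointness) or both are single links — a matching of size $2$ — which means no $\OPT$-link leaves $C$, contradicting $C\ne H$. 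The pairing and vertex-disjointness concerns you flag under ``main obstacle'' are genuine and must be resolved, but they are imported wholesale from the shortcutting argument in the proof of Lemma~\ref{lem:ComputationOfGoodCycles}; leaving them open, as your proposal does, leaves the proof incomplete even after the four-links claim is corrected.
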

\begin{proof}
Let $P_1\subseteq F$ and $P_2\subseteq F$ be the paths of $F$ in $C$, such that $v_1$, $u_1$ are the endpoints of $P_1$, $v_2$ and $u_2$ are the endpoints of $P_2$, and $\{v_1, v_2\}, \{u_1,u_2\}\in S$ are the two links contained in $C$.

Assume that the claim does not hold, i.e., $C$ does not have any rich vertex. Then in $(V,F\cup \OPT)$ the degree of each vertex of $C$ is exactly $2$. 
Therefore, the internal vertices of $P_1$ and $P_2$ are not incident to any edge of $\OPT$ and each of their endpoints $u_1$, $u_2$, $v_1$, and $v_2$ is incident to exactly one edge of $\OPT$. 

Now consider the graph $G_{H|C}$.
Since $(V,F\cup \OPT)$ is $2$-edge-connected, there must exist two edge-disjoint paths $P'_1$ and $P'_2$ from the vertices of $P_1$ to the vertices of $P_2$ in $G_{H|C}$ that only contain links of $\OPT$.
 Observe that, since by assumption $C$ has no rich vertex, the set of endpoints of $P'_1$ and $P'_2$ is exactly $\{u_1,u_2,v_1,v_2\}$ and none of the internal vertices of $P'_1$ and $P'_2$ belongs to $C$.

By  a case analysis similar to the one in the proof of Lemma \ref{lem:ComputationOfGoodCycles}, the only possibility for $P'_1\cup P'_2$ not to induce a good cycle affecting $C$ is that $P'_1\cup P'_2$ is a matching of size $2$ between the endpoints of $P_1$ and the endpoints of $P_2$.
However $\OPT$ must contain two links between the vertex set $V(C)$ of $C$ and its complement $V\setminus V(C)$ (since $C\neq H$ and $(V,F\cup \OPT)$ is $2$-edge-connected).
Hence, in this case some endpoint of $P_1$ or of $P_2$ must be rich, a contradiction.
\end{proof}

Finally, we complete the analysis of the gluing phase by showing that the credit invariant (Invariant~\ref{invariant:credits}) is maintained and the runtime is polynomial.

\begin{lemma}
Given a bridgeless $2$-edge-cover $H$ of $(V,F\cup L)$, the gluing phase of Algorithm~\ref{algo:pap_overview} yields a $2$-edge-connected spanning subgraph $H^{*}=(V,F\cup S^*)$ of $(V,F\cup L)$ such that $|S|+ \cred(H)\ge |S^{*}|+\cred(H^{*})$ in polynomial time.
\end{lemma}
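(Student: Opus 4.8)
The plan is to show that every iteration of the gluing loop in Algorithm~\ref{algo:pap_overview} can be carried out in polynomial time, strictly decreases the number of connected components of $H$, keeps $F\cup S$ a bridgeless $2$-edge-cover of $(V,F\cup L)$, and does not increase $\cred(H)+|S|$. Since the loop terminates exactly when $H$ is $2$-edge-connected and the number of components is at most $\nc$ at the start of the gluing phase, this yields the lemma: after at most $\nc$ iterations we obtain a bridgeless $2$-edge-cover with a single component, i.e.\ a $2$-edge-connected spanning subgraph $H^*=(V,F\cup S^*)$ of $(V,F\cup L)$, and chaining the per-iteration inequalities $\cred(H)+|S|\ge\cred(H')+|S'|$ gives $|S|+\cred(H)\ge|S^*|+\cred(H^*)$.

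The case in which the iteration finds a good cycle $Q$ is already handled by Lemma~\ref{lem:goodCycleCost}: gluing along $Q$ again produces a bridgeless $2$-edge-cover, satisfies $\cred(H)+|S|\ge\cred(H')+|S'|$, and strictly reduces the number of components; moreover by Lemma~\ref{lem:ComputationOfGoodCycles} we can in polynomial time either find such a $Q$ or certify that none exists. So it remains to analyze the ``otherwise'' branch, where no good cycle exists and we glue along an arbitrary cycle of $G_H$.

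Assume then that no good cycle exists. Since $(V,F\cup L)$ is $2$-edge-connected, so is the contraction $G_H$; as $H$ is not yet $2$-edge-connected, $G_H$ has at least two vertices and hence, being $2$-edge-connected, contains a cycle $Q$ through $k\ge2$ distinct vertices, which we can find in polynomial time. These vertices correspond to components $C_1,\dots,C_k$ of $H$, and the $k$ edges of $Q$ are links joining distinct components, hence not in $S$; gluing along $Q$ therefore adds exactly these $k$ links, merges $C_1,\dots,C_k$ into one $2$-edge-connected component $C'$, leaves all other components (and their credits) unchanged, preserves bridgelessness and the $2$-edge-cover property, and decreases the number of components by $k-1\ge1$. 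For the credit accounting, note that (after bridge covering) $H$ has no isolated vertices and is bridgeless, so each $C_i$ is a nontrivial $2$-edge-connected subgraph of $(V,F\cup L)$ and, as $F$ is a forest, contains at least one link; in particular $C'$ contains at least $2k\ge4>2$ links and is not simple. By credit rule~\ref{item:component_credits}, $C_i$ holds $2$ credits if it is not simple and $2-2\eps$ credits if it is simple; but since $C_i\ne H$ (because $H$ has at least two components) and no good cycle affects $C_i$, Lemma~\ref{lemma:GoodCycleOrBadVertex} gives a rich vertex in $C_i$, which by rule~\ref{item:lonely_credits} contributes at least $\tfrac12$ further credits, so every $C_i$ holds at least $2-2\eps+\tfrac12=2$ credits (using $\eps=\tfrac14$). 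The new component $C'$ is $2$-edge-connected and not simple, hence needs only $2$ credits by~\ref{item:component_credits}, and its vertices receive no other credits (they are neither leaves nor lonely, and $C'$ has no bridge and no $2$EC-block of a bridged component). Thus the glue frees at least $2k$ credits, spends $2$ on $C'$ and $k$ on the new links, and $\cred(H)+|S|$ changes by at most $2-2k+k=2-k\le0$.

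The main obstacle is exactly this last step: a naive estimate would credit each merged $2$-edge-connected component only $2-2\eps$, which for a short cycle ($k=2$ or $3$) through simple components does not cover the $k$ new links. The decisive point is Lemma~\ref{lemma:GoodCycleOrBadVertex}, which says the absence of a good cycle forces every simple component to carry the extra implicit credit of a rich vertex, lifting its credit to $2$ and making the inequality $2-k\le0$ hold for all $k\ge2$. Everything else — polynomial running time per iteration, the $\le\nc$ iteration bound, and maintenance of the bridgeless $2$-edge-cover property — is routine.
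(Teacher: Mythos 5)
Your proposal is correct and follows essentially the same two-case structure as the paper's proof: delegate the good-cycle branch to Lemmas~\ref{lem:ComputationOfGoodCycles} and~\ref{lem:goodCycleCost}, and in the no-good-cycle branch invoke Lemma~\ref{lemma:GoodCycleOrBadVertex} to raise each simple component's credit to $2$ via its rich vertex, so that gluing along any cycle of $G_H$ with $k\ge 2$ vertices frees at least $2k$ credits against a cost of $k$ new links plus $2$ credits for the merged (non-simple) component. The only differences are cosmetic: you spell out explicitly why $C'$ is not simple (each $C_i$ is a nontrivial $2$EC-subgraph and hence contains a link, since $F$ is a forest) and why no other credit rule applies to $C'$, whereas the paper leaves these as brief assertions.
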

\begin{proof}
For this purpose, we show that if $H$ is not already $2$-edge-connected, then in every iteration of the gluing phase, we obtain in polynomial time a bridgeless subgraph $H'=(V,F\cup S')$ of $(V,F\cup L)$ from $H=(V,F\cup S)$ such that $|S|+\cred(H)\ge |S'|+\cred(H')$ and $H'$ has fewer connected component than $H$.
This implies that the   number of iterations is linear in $|V|$.

First observe that if $H$ has a good cycle, then using Lemmas~\ref{lem:ComputationOfGoodCycles} and \ref{lem:goodCycleCost} we are done. 
Therefore we can assume that no good cycle of $H$ exists in $(V,F\cup L)$.
 Then, by Lemma~\ref{lemma:GoodCycleOrBadVertex} every simple component of $H$ has at least one rich vertex, which has at least $\tfrac{1}{2}$ credits by \ref{item:lonely_credits}.
Thus, every component of $H$ is either simple with at least one rich vertex or is non-simple. In particular the credits of each simple component $C$ plus the credits of its rich vertices add up to at least $(2-2\eps)+\frac{1}{2} =  2$. 

It remains to show that gluing $H=(V,F\cup S)$ along an arbitrary cycle $Q=(W,L')$ in $G_H$ (such a cycle must exist because $(V,F\cup L)$ is $2$-edge-connected) does not increase $\cred(H)+|S|$.
Let $H'=(V,F\cup S')$ be the resulting graph, where $S'=S\cup L'$ and $C'$ is the new connected component formed during the gluing. 
Notice that $C'$ is not simple since it contains strictly more than two links.
Observe that $|S'|-|S|=|L'|$, and we need $2$ credits for $C'$ (due to the credit rule \ref{item:component_credits}).
From the above discussion each of the $|L'|$ many connected component of $H$ involved in $Q$ can contribute with at least $2$ credits. Thus, 
\[
|S|+\cred(H)-|S'|-\cred(H')\geq -|L'|+2 \cdot |L'|- 2 \geq 0. \qedhere
\]
\end{proof}

\bibliographystyle{alpha}%
\newcommand{\etalchar}[1]{$^{#1}$}

\appendix

\section{Details on WTAP: Proof of Lemma  \ref{lem:wtap}}

\begin{proof}[Proof of Lemma \ref{lem:wtap}]

As observed in \cite{CN13}, we may assume that the sets $P_{\ell}$ with $\ell \in U$ are pairwise disjoint. (This can be achieved by replacing some of the links in $U$ by shadows; see e.g., Lemma~2.1 in \cite{TZ22}.)

Then we proceed as in \cite{TZ21}, but start with our given up-link solution $U$ instead of the initial up-link solution chosen in \cite{TZ21}.
More precisely, we perform steps 2--4 of Algorithm~1 in \cite{TZ21}.
The analysis is the same as in \cite{TZ21}.
Indeed, the proof of Theorem~6 from \cite{TZ21} shows that the returned solution $\mathrm{APX}$ has weight at most
\begin{align*}
& w(\OPT) + \epsilon \cdot w(\OPT) + \ln\left(\frac{w(U) - \epsilon \cdot w(\OPT) }{w(\OPT)}\right) \cdot w(\OPT) \\
\le&\ \left(1+\ln\left(\frac{w(U)}{w(\OPT)}\right) + \epsilon\right) \cdot w(\OPT).
\end{align*}
\end{proof}

\section{From FAP to PAP}\label{sec:reductionPAP}

In this section we prove Lemma \ref{lem:reducing_to_pap_noIsolated}. This is a straightforward consequence of the following two lemmas. 

\begin{lemma}\label{lem:noIsolatedNodes}
Given a polynomial-time $(\rho, K)$-approximation for FAP for the case where there is no isolated node in the input forest $(V,F)$, there is a polynomial-time $(\rho, K)$-approximation for FAP (in the general case). 
\end{lemma}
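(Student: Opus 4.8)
The plan is to \emph{expand} every isolated vertex of the forest into an edge, producing an equivalent instance that has no isolated node and in which both $\nc$ and $\opt$ are unchanged; then we simply run the assumed $(\rho,K)$-approximation on the expanded instance and map its solution back. Concretely, given $I=(V,F,L)$, let $v$ be an isolated vertex of $(V,F)$. Since $(V,F\cup L)$ is $2$-edge-connected we have $|\delta_L(v)|\ge 2$. Introduce a fresh vertex $v'$, add the forest edge $\{v,v'\}$, and for every link $\{v,w\}\in L$ add a parallel link $\{v',w\}$ to the link set. Doing this for all isolated vertices yields an instance $I'=(V',F',L')$ in which every former isolated $v$ lies in the length-$1$ path $v\!-\!v'$, so $(V',F')$ has no isolated node; a direct count gives $|V'|-|F'|=|V|-|F|$, hence $\nc(I')=\nc(I)$, and $(V',F'\cup L')$ is again $2$-edge-connected. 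If we establish $\opt(I')=\opt(I)$ together with a polynomial-time way to convert any feasible solution $S'$ of $I'$ into a feasible solution $S$ of $I$ with $|S|\le|S'|$, we are done: the output satisfies $|S|\le|S'|\le\rho\cdot\opt(I')+K\big(\opt(I')-\nc(I')\big)=\rho\cdot\opt(I)+K\big(\opt(I)-\nc(I)\big)$.

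For $\opt(I')\le\opt(I)$ I would take an optimal solution $\OPT$ of $I$ and, for each expanded vertex $v$, reroute \emph{one} of the (at least two) links of $\OPT$ incident to $v$, say $\{v,b\}$, to $\{v',b\}$. This is precisely a vertex split of $v$ in the $2$-edge-connected graph $F\cup\OPT$: the part assigned to $v'$ consists only of the edge to $b$, and the component of $b$ in $(F\cup\OPT)-v$ sends at least two edges to $v$ by $2$-edge-connectivity, so one of them stays on the $v$-side; this is exactly the condition under which the split graph is still $2$-edge-connected, and the cardinality is unchanged. For the converse inequality and for the recovery step, I would contract the edges $\{v,v'\}$ in $(V',F'\cup S')$; contracting an edge preserves $2$-edge-connectivity, and rerouting each $v'$-link of $S'$ back to $v$ produces a $2$-edge-connected multigraph on $V$ whose underlying simple link set $S$ satisfies $|S|\le|S'|$. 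Applying the recovery to $S'=\OPT(I')$ gives $\opt(I)\le\opt(I')$, so $\opt(I')=\opt(I)$.

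The delicate point, which I expect to be the main obstacle, is that passing from the $2$-edge-connected multigraph to its simple underlying graph can a priori destroy $2$-edge-connectivity; this can only occur when $S'$ uses both a link $\{v,w\}$ and its parallel copy $\{v',w\}$ and the two resulting parallel edges are the only two edges crossing some cut. The key observation to exploit is that in exactly this situation the two copies coalesce \emph{for free} under contraction, so $|S|$ drops strictly below $|S'|$ and we gain a budget of one link per duplicated pair. This budget can be spent to add links of $L$ at $v$ that cross the offending cuts — such links exist because $(V,F\cup L)$ is $2$-edge-connected — thereby restoring feasibility without exceeding $|S'|$. Verifying that this repair can be carried out simultaneously for all offending cuts (arguing, e.g., that for a fixed duplicated link the offending cuts form a sublattice, so that a single suitably chosen additional edge clears all of them) is the part requiring the most care; once feasibility of $S$ is confirmed, the cardinality bound above yields the claim.
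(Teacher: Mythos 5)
Your proposal is the same expansion/contraction strategy as the paper's: attach a degree-one twin to each isolated vertex, give it parallel copies of the links, solve the no-isolated-node instance, contract back, and deduplicate. The equality $\opt(I')=\opt(I)$ via a vertex split (rerouting exactly one incident link to $v'$ so that $v'$ has degree $2$) is also the paper's argument. The one place where you stop short is precisely the one subtle step, and you flag it yourself: what to do when $S'$ uses both $\{v,w\}$ and $\{v',w\}$ and the merged edge ends up a bridge.

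Your sketch for closing that gap is more complicated than necessary and, as written, not quite right. First, you do not need any sublattice argument: if $G$ is a $2$-edge-connected multigraph containing two parallel edges $e_1,e_2$ between $u$ and $v$, then $G-e_2$ has \emph{at most one} bridge, namely $e_1$. Indeed, if $f\neq e_1$ were a bridge of $G-e_2$ with $\delta_{G-e_2}(R)=\{f\}$, then $|\delta_G(R)|\ge 2$ forces $e_2\in\delta_G(R)$, hence $u,v$ lie on opposite sides of $R$, hence $e_1\in\delta_{G-e_2}(R)$ as well --- contradiction. So each duplicate you delete exposes exactly one offending cut (up to complementation), and a single crossing link repairs it: after adding such a link, every cut that was $\ge 2$ stays $\ge 2$ and the unique deficient cut becomes $\ge 2$. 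Processing duplicates one at a time therefore works, with a clean one-for-one exchange of budget, which is exactly what the paper does. Second, your restriction to adding a link of $L$ \emph{at $v$} is not justified and can fail: the bridge cut for $e_1=\{u,v\}$ may be crossed in $L$ only by links not incident to $v$ (all of $v$'s other links could stay on $v$'s side). What is true --- and suffices --- is that some link of $L$ other than $e_1$ crosses the cut, because $(V,F\cup L)$ is $2$-edge-connected and no edge of $F$ can cross it (otherwise the cut would not have been deficient). Drop the ``at $v$'' restriction and replace the sublattice idea with the parallel-edge observation above, and the proof matches the paper's.
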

\begin{proof}[Proof]
We consider the following approximation-preserving reduction. Given an instance $\mathcal{I}_A$ of FAP containing some isolated nodes, we construct an instance $\mathcal{I}_B$ of FAP without isolated nodes as follows. 
Initially $\mathcal{I}_B=\mathcal{I}_A$. 
Then, for each isolated node $v$, we perform the following steps. 

First, we replace $v$ with two new nodes $v_1$ and $v_2$. Second, we replace each link of type $\{u,v\}$ with two new links $\{u,v_1\}$ and $\{u,v_2\}$. Third, we add the edge $\{v_1,v_2\}$ to the forest. 
Note that this does not change the number of connected components of the forest.
Now we apply the approximation algorithm in the claim to the instance $\mathcal{I}_B$, obtaining some solution $S_B$. 
We convert $S_B$ into a solution $S_A$ for $\mathcal{I}_A$ by mapping back each link of type $\{u_i,v_j\}$ in $S_B$ to the corresponding link $\{u,v\}$ in $\Iscr_A$. 
Observe that the solution $S_A$ might not be feasible since it could contain two copies $e_1$ and $e_2$ of the same link $\{u,v\}$.
To resolve this issue in any such case we remove $e_2$ from $S_A$ and, in case $e_1$ is a bridge of $F\cup S_A\setminus \{e_2\}$, we add an arbitrary link $\ell \neq e_1$ connecting the two connected components of $F\cup S_A\setminus \{e_1,e_2\}$. 
Such a link $\ell$ must exist since by assumption $F\cup L$ is $2$-edge-connected. 
This results in a feasible solution $S_B$ to the instance $B$ with $|S_A|\leq |S_B|$.

Let $\OPT_A$ and $\OPT_B$ be some optimal solutions to the instances $A$ and $B$, respectively.
We now show that $|\OPT_B|\leq |\OPT_A|$. 
Indeed, a solution of size $|\OPT_A|$ for $B$ can be constructed from $\OPT_A$ as follows. 
For an isolated node $v$, we replace one link $\{u,v\}\in \OPT_A$ by the link $\{u, v_1\}$ and all other links $\{u,v\}\in \OPT_A$ by the corresponding links $\{u,v_2\}$.
Suppose the resulting link set $S$ is not a feasible solution for the instance $\mathcal{I}_B$, i.e., there exists a cut that contains less than two edges in $F\cup S$.
Because $\OPT_A$ is a feasible solution for the instance $\Iscr_A$ (that can be obtained from $\Iscr_B$ by the contraction of the edge $\{v_1,v_2\}$), this cut must contain the edge $\{v_1,v_2\}$.
Now consider the unique link $\{u,v\}\in \OPT_A$ that we replaced by the link $\{u, v_1\}$.
Because $\OPT_A$ is a feasible solution for the instance $A$, the edge set $F\cup \OPT_A$ contains two edge-disjoint $u$-$v$ paths, only one of which can use the edge $\{u,v\}$.
The other path corresponds to a $u$-$v_2$ path in $F \cup \OPT_B$ that does not contain the edge $\{u,v_1\}$ and hence also not the edge $\{v_1,v_2\}$ because by construction, $v_1$ has degree two in $F\cup S$.
But then extending this $u$-$v_2$ path by the edge $\{v_1,u\}$ yields an $v_1$-$v_2$ path in $F \cup S$ that does not contain the edge $\{v_1,v_2\}$.
This contradicts the existence of a cut of size less than two that contains the edge $\{v_1,v_2\}$.
We conclude that $S$ is feasible and thus indeed $|\OPT_B|\leq |\OPT_A|$. 

Recall that the number $\nc$ of connected components of the forest is the same in the instances $\Iscr_A$ and $\Iscr_B$.
We obtain
\[
|S_A|\ \leq\ |S_B|\ \leq\ \rho \cdot |\OPT_B| + K \cdot (|\OPT_B| - \nc) \ \leq\ \rho \cdot |\OPT_A| + K \cdot (|\OPT_A| - \nc),\]
hence the claimed approximation factor.
\end{proof}

\begin{lemma}\label{lem:reducing_to_pap}
Given a polynomial time $(\rho,K)$-approximation algorithm for PAP without isolated nodes for some constants $\rho \ge 1$ and $K>0$, there is polynomial-time $(\rho, K')$-approximation algorithm for FAP without isolated nodes, where $K':=K + 2 (\rho-1)$.
\end{lemma}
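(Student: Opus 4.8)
The plan is to give a reduction that, for each tree component of the input forest $(V,F)$, ``cuts'' the branch vertices so as to obtain a disjoint union of paths, and compensates every cut by a tiny pendant gadget whose sole purpose is to force one extra link into every PAP solution --- a link that can be deleted for free when we translate the solution back. Concretely, root each tree component of $(V,F)$ at one of its leaves and repeatedly pick a vertex $v$ with current degree $d\ge 3$ that is closest to the root; keep the edge of $v$ towards the root and one edge of $v$ towards a child, and for each of the remaining $d-2$ child edges $e=\{v,u\}$ of $v$ do the following: remove $e$ from the forest, add a brand new vertex $w_e$, add the forest edge $\{u,w_e\}$, and add the single new link $\{v,w_e\}$. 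Let $R$ be the set of removed forest edges, let the new links be the \emph{forced links}, and let $(V',F',L')$ be the resulting instance, with $L'$ equal to the disjoint union of $L$ and the set of forced links. A straightforward bottom‑up argument shows that this process terminates, that every component of $F'$ is a path with at least one edge (so the PAP instance has no isolated nodes), and that $|R|=\sum_{T}\sum_{v\in T:\deg_T(v)\ge 3}(\deg_T(v)-2)$; a one‑line handshake computation gives $\sum_{v\in T:\deg_T(v)\ge 3}(\deg_T(v)-2)=\ell_T-2$ for every tree $T$, hence $t:=|R|=\ell-2\nc$, where $\ell$ is the total number of leaves of $(V,F)$ and $\nc_{\mathrm{PAP}}=\nc+t$.

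First I would establish the two structural facts that make the reduction work. Each $w_e$ has degree $1$ in $F'$ and is incident to exactly one link of $L'$, namely the forced link $\{v,w_e\}$; therefore every feasible PAP solution contains all $t$ forced links, and then $w_e$ has degree exactly $2$ in $F'\cup S'$. Thus $w_e$ is nothing but a subdivision point of the edge $e=\{v,u\}$, and suppressing all these degree‑$2$ vertices is a bijection between $2$‑edge‑connected spanning subgraphs of $(V',F'\cup L')$ and of $(V,F\cup L)$ that turns the forced links exactly into the edges of $R$ (the missing forest edges). From this I get: (i) applying the subdivision to $(V,F\cup\OPT)$ shows $(V',\,F'\cup\OPT\cup\{\text{forced links}\})$ is $2$‑edge‑connected, so $\opt_{\mathrm{PAP}}\le \opt+t$; and (ii) if the given $(\rho,K)$‑approximation returns $S'$ on $(V',F',L')$, then $S:=S'\setminus\{\text{forced links}\}\subseteq L$, obtained by suppressing the subdivision points in $(V',F'\cup S')$, is a feasible FAP solution for $(V,F,L)$ with $|S|=|S'|-t$ (note $|S'|\ge \nc_{\mathrm{PAP}}=\nc+t\ge t$, so this is non‑negative).

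The arithmetic then closes immediately. Using the PAP guarantee and (i), (ii),
\[
|S| \;=\; |S'|-t \;\le\; \rho\,\opt_{\mathrm{PAP}}+K(\opt_{\mathrm{PAP}}-\nc_{\mathrm{PAP}})-t \;\le\; \rho(\opt+t)+K\big((\opt+t)-(\nc+t)\big)-t \;=\; \rho\,\opt+K(\opt-\nc)+(\rho-1)t .
\]
Finally, since $(V,F)$ has no isolated nodes, every leaf of $(V,F)$ has $F$‑degree $1$ and hence is incident to a link of $\OPT$; counting endpoints gives $\ell\le 2\opt$, so $t=\ell-2\nc\le 2(\opt-\nc)$ and therefore $|S|\le \rho\,\opt+(K+2(\rho-1))(\opt-\nc)$, which is the desired bound. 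Every step (the recursive de‑branching, building $(V',F',L')$, running the PAP algorithm, and suppressing the subdivision points) is clearly polynomial.

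The part I expect to require the most care is the gadget bookkeeping in the structural step: verifying that the de‑branching procedure can always attach the pendant $w_e$ on the child side so that no vertex of $F'$ ends up with degree $3$ (so $F'$ is genuinely a union of paths) and no component shrinks to an isolated vertex, and that the ``subdivision point'' picture is exact --- i.e.\ that $w_e$ has degree precisely $2$ in \emph{every} feasible PAP solution and that suppressing all the $w_e$ commutes with everything so that $|S|=|S'|-t$ holds on the nose and maps $R$ back to exactly the deleted forest edges. The numerical ingredients ($\sum_{\deg\ge 3}(\deg_T(v)-2)=\ell_T-2$ and $\ell\le 2\opt$) are routine; the content is entirely in making the gadget clean.
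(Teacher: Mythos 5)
Your proposal is correct and takes essentially the same approach as the paper: both reductions splice a pendant degree-$2$ dummy vertex onto each excess branch edge (forcing one new link into every PAP solution), prove $\opt_{\mathrm{PAP}}\le\opt+t$ and $\nc_{\mathrm{PAP}}=\nc+t$, and then bound $t=\ell-2\nc\le 2(\opt-\nc)$ by counting leaves against $\OPT$. The only cosmetic difference is your top-down (root-at-a-leaf) ordering of which branch edges to cut versus the paper's choice of edges whose hanging component is already a path; the resulting instance, the feasibility argument via degree-$2$ suppression, and the final arithmetic are identical.
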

\begin{proof}
Given a FAP instance $(V,F,L)$ without isolated nodes, we transform it into a PAP instance without isolated nodes by iteratively applying the following procedure. Consider any tree $T$ in the current forest with at least $3$ leaves (notice that each tree in $F$ has at least $2$ leaves by assumption). Let $e=\{v,u\}$ be any edge of $T$ incident on a node $v$ of degree at least $3$ such that the component of $T\setminus \{e\}$ containing $u$ is a path (the latter path might be an isolated node).
We delete $e$ from $F$, add a dummy node $w$ to $V$, add a dummy link $e'=\{v,w\}$ to $L$, and add a dummy edge $e''=\{w,u\}$ to $F$.

This creates a PAP instance $(V',F',L')$ without isolated nodes. 
To this PAP instance we apply a $(\rho,K)$-approximation algorithm, obtaining a solution $\mathrm{APX}'$. 
Then we return the solution $\mathrm{APX}$ for the original FAP instance obtained from $\mathrm{APX}'$ by removing all dummy links.

Let $q$ be the number of iterations of the above procedure. Notice that every dummy edge and link must belong to every feasible solution to $(V',F',L')$ since dummy nodes have degree $2$ (in $F'\cup L'$). Hence $|\mathrm{APX}|\leq |\mathrm{APX}'|-q$.
We observe that each  of the $q$ many steps of the above modification procedure increases the value of an optimal solution by at most $1$ and increases the number of connected components of the forest by $1$.
Thus,
\begin{align*}
|\mathrm{APX}|\ \le&\ |\mathrm{APX}'| -q \\
\le&\ \rho \cdot \opt' + K \cdot (\opt' - n'_{\mathrm{comp}}) -q\\
\le&\ \rho \cdot (\opt + q) + K \cdot (\opt - \nc)-q=\rho \cdot \opt + K \cdot (\opt-\nc)+(\rho-1)q,
\end{align*}
where $\opt$ and $\opt'$ denote the optimal values of the FAP and PAP instance and $\nc$ and $n'_{\mathrm{comp}}$ denote the number of connected components of $(V,F)$ and $(V',F')$, respectively.

Each tree $T$ in the original forest $(V,F)$ with $k\geq 2$ leaves contributes with $k-2$ to the total value of $q$. Hence
$$
q=n_{\mathrm{leaf}}-2\nc\leq 2\opt-2\nc,
$$
where $n_{\mathrm{leaf}}$ is the number of leaves in $(V,F)$, and the inequality follows from the fact that each leaf in $(V,F)$ must have a link of $\OPT$ incident to it, thus $\opt\geq n_{\mathrm{leaf}}/2$. The claim follows. 
\end{proof}

\end{document}